\newtheorem{ass}{Assumption}
\newtheorem{prop}{Proposition}
\begin{document}

\begin{frontmatter}
\title{Perimeter control in a mixed bimodal bathtub model}  
\address[kana]{Institute of Science and Engineering, Kanazawa University, Kakuma, Kanazawa,
Ishikawa, 920-1192, Japan}
\address[utokyo]{Department of Civil Engineering, Graduate school of Engineering, The University of Tokyo,  7-3-1 Hongo, Bunkyo, Tokyo, 113-8656, Japan}

\author[kana]{Takao Dantsuji\corref{cor}}\ead{dantsuji@se.kanazawa-u.ac.jp}\cortext[cor]{Corresponding author.}
\author[kana]{Yuki Takayama}\ead{ytakayama@se.kanazawa-u.ac.jp}
\author[utokyo]{Daisuke Fukuda}\ead{fukuda@civil.t.u-tokyo.ac.jp }

\begin{abstract}
Perimeter control involves monitoring network-wide traffic and regulating traffic inflow to alleviate hypercongestion. Implementation of transit priority with perimeter control measures, which allow transit into a controlled area without queuing at the perimeter boundary, is an effective strategy in bimodal transportation systems. 
However, travelers' behavior changes in response to perimeter control strategies, such as shifts in their departure times and transportation modes, have not been fully investigated. Therefore, important questions remain, such as the use of transit during perimeter control with transit priority. 
This paper examines the travelers' behavior changes in response to perimeter control with transit priority in a mixed bimodal transportation system with cars and flexible route transit (FRT) vehicles. We model departure time and transportation mode choices in such a transportation system with hypercongestion and discomfort in FRT (called the mixed bimodal bathtub model). Initially, we investigate the properties of dynamic user equilibrium without perimeter control. Then, we study the equilibrium patterns during perimeter control with transit priority. Unlike existing works, we find that the number of FRT passengers decreases with time toward the desired arrival time and that FRT may not be used around the peak of rush hour. Furthermore, transit priority may not be sufficient to promote the use of FRT, and additional incentive such as subsidy for lower fares may be required to encourage FRT use during perimeter control. Finally, we show that operating many FRT vehicles does not always decrease the equilibrium cost, even under perimeter control with transit priority.
\end{abstract}

\begin{keyword}
perimeter control, bimodal bathtub model, hypercongestion,  discomfort, car, flexible route transit
\end{keyword}

\end{frontmatter}

\section{Introduction}
\subsection{Background}
Traffic demand is highly concentrated during the rush hour in urban cities with limited spaces. Consequently, hypercongestion\footnote{The downward sloping part of the inverted-U shaped relationship between flow and density} occurs in road networks of many cities \citep{geroliminis2008existence, loder2017empirics}.
Increased vehicle inflow leads to lower throughput in the hypercongested state, and this externality causes inefficient use of transportation systems. To address this issue, cities have introduced public transportation systems as an alternative mode that can carry more passengers using less urban space than private vehicles. A flexible door-to-door transit system has been proposed as an alternative to transitional fixed-route-based transit to promote the use of transit  \citep{vansteenwegen2022survey}.  However, in some of  these cities, the increases in the use of public transportation is causing considerable inconvenience, such as long waiting times at stations or stops \citep{basso2019efficiency}. In particular, public transit vehicles, such as buses, encounter hypercongestion in road networks in addition to such inconvenience. These externalities play an important role in travelers' decision making \citep{tirachini2014multimodal, basso2019efficiency}. 
Under these circumstances, commuters choose their trip timing (departure time) and transportation mode during rush hour in urban cities, where multiple transportation modes interact with each other and share limited urban spaces. Therefore, efficient multimodal transportation systems need to be developed.

Substantial effort has been devoted to management strategies designed to alleviate these externalities.  A common strategy is congestion pricing, which promotes travel behavior changes, such as trip timing and transportation mode shifts. Despite the large body of literature on this subject  (e.g., \cite{yang2005mathematical}), it is practically difficult to implement such pricing schemes due to public acceptance \citep{giuliano1992assessment, gu2018congestion}.

Perimeter control, which regulates traffic inflow into a targeted area  by gating or traffic signal control to maintain good traffic conditions, is an alternative strategy of mitigating hypercongestion. Real-time perimeter control is enabled by monitoring the network-wide traffic state based on recent advances in traffic theories, more precisely the macroscopic fundamental diagram (MFD) \citep{daganzo2007urban}.   Perimeter control for multimodal transportation systems has been investigated (e.g., \cite{ampountolas2017macroscopic} and \cite{ haitao2019providing}). \cite{haitao2019providing} showed that the combination of perimeter control and transit priority is effective in bimodal systems in terms of passenger flow. However, travelers' behavior changes in response to such strategies, such as shifts in trip timing and transportation modes, have not been fully investigated. Perimeter control is intended to promote trip timing changes (to avoid queuing at perimeter boundaries) and transportation mode shift from cars to transit (which may aggravate discomfort externalities). Although hypercongestion and discomfort in transit are important for traveler decision making in urban cities \citep{arnott2013bathtub, tirachini2014multimodal, de2017economics}, the impact of such externalities on the effectiveness of perimeter control has been rarely studied. A critical gap is the lack of a methodology that connects travelers' departure time and transportation mode choices with the complex traffic flow and passenger dynamics in the implementation of perimeter control.

This paper investigates traveler's behavior changes in response to perimeter control with transit priority. To this end, we consider a bimodal transportation system that has cars and flexible route transit (FRT) vehicles, which have a fixed route but can pick up passengers at their origins in predetermined areas covered by each FRT line. Then, we model departure time and transportation mode choices in the presence of hypercongestion and FRT discomfort (called the mixed bimodal bathtub model). 

First, the dynamic user equilibrium is characterized. Unlike the 
existing works, we find that the number of FRT passengers decreases with time toward the desired arrival time and that FRT use may not be preferred around the peak of rush hour. Then, we investigate the equilibrium patterns in the implementation of perimeter control with transit priority.
Studies indicate that transit priority effectively promotes the transit use \citep{zheng2017macroscopic,dantsuji2021simulation}, but our results show that transit priority at perimeter boundaries may be insufficient and that added incentives, such as subsidy for lower fares, may be required to encourage FRT use during perimeter control. We also show  that operating many FRT vehicles does not always decreases the equilibrium cost, even under perimeter control  with transit priority.

\subsection{Literature review}

MFDs are a powerful tool of describing the network-wide traffic dynamics, which relates network flow (or trip completions) to network density (or accumulation of vehicles). The idea of macroscopic traffic flow theory was proposed by \cite{godfrey1969mechanism} and further investigated by \cite{mahmassani1987performance} and \cite{daganzo2007urban} and then empirically analyzed by \cite{geroliminis2008existence}. Traffic management approaches based on MFDs have been studied, such as congestion pricing \citep{zheng2012dynamic, simoni2015marginal, genser2022dynamic} and route guidance \citep{yildirimoglu2015equilibrium, yildirimoglu2018hierarchical}. MFDs have also been utilized for other purposes, such as the dynamic traffic demand estimation \citep{dantsuji2022novel} and network performance indicator \citep{loder2019understanding, hamm2022network}.  
To capture the interaction of dynamics between different transportation modes (e.g., car and bus), \cite{geroliminis2014three} proposed the three-dimensional MFD (3D-MFD),  which enables the representation of traffic flow in mixed bimodal networks. The multimodal MFD was empirically analyzed  \citep{loder2017empirics, dakic2018use, fu2020empirical, paipuri2021empirical}.  Traffic management strategies for  bimodal systems have also been studied, such as road space allocation \citep{zheng2013distribution, chiabaut2015evaluation, zheng2017macroscopic}, perimeter traffic control \citep{ampountolas2017macroscopic}, and pricing \citep{zheng2020area, dantsuji2021simulation, loder2022optimal}.

Perimeter control is a successful application of MFDs to traffic management. The idea of  perimeter control is to control the entry flow at the perimeter boundary of a targeted area to maximize the trip completion rate \citep{daganzo2007urban,haddad2014robust}.  It has been extended to the multiregion networks \citep{geroliminis2012optimal, haddad2012stability,ramezani2015dynamics}, perimeter control with boundary queue \citep{haddad2017optimal,ni2020city, guo2020macroscopic, li2021robust} and with route guidance \citep{sirmatel2017economic,ding2017traffic}, and in bimodal transportation systems \citep{ampountolas2017macroscopic, haitao2019providing,chen2022passenger}. Considerable effort has been dedicated to perimeter control schemes, but the impacts of traveler behavior has been insufficiently explored. Specifically. the effect of departure time and transportation mode shifts on the efficiency of perimeter control has rarely been investigated.

Bathtub models, namely dynamic user equilibrium model for departure time choice in urban cities with hypercongestion, have been examined \citep{small2003hypercongestion, geroliminis2009cordon, arnott2013bathtub, fosgerau2013hypercongestion,amirgholy2017modeling, arnott2018solving, jin2020generalized,vickrey2020congestion, bao2021leaving, chen2022modeling}. Bathtub models have also been extended to heterogeneity in trip length \citep{fosgerau2015congestion, lamotte2018morning}, cruising-for-parking \citep{geroliminis2015cruising, liu2016modeling}, and  staggered work schedules \citep{yildirimoglu2021staggered}. Simultaneous departure time and transportation mode choices models were also investigated for bimodal transportation systems \citep{gonzales2012morning,gonzales2013evening, gonzales2015coordinated}. However, none of them takes into accounts both hypercongestion and transit discomfort despite the importance of these externalities for traveler behavior, particularly in urban cities with limited spaces \citep{tirachini2014multimodal}.  

\subsection{Contributions}
In summary, our contributions are listed as follows:
\begin{itemize}
    \item We develop  the  mixed bimodal bathtub model, which describes commuters’ departure time and transportation model choices in the presence of hypercongestion and transit discomfort.
    \item We find that the number of FRT passengers decreases with time toward the desired arrival time and that FRT may not be used around the peak of rush hour  in contrast to the findings of the existing works.
    \item We show that transit priority may not be sufficient to promote transit use. Furthermore, we show that  additional incentives, such as subsidy for reduced fares, may be required to encourage FRT use during perimeter control and that operating many FRT vehicles does not always decrease the equilibrium cost, even under perimeter control with transit priority.
\end{itemize}

The remaining of this paper is organized as follows. Section \ref{sec:model} shows the development of the  bimodal bathtub model. In Section \ref{sec:eq_ue}, we characterize the model equilibrium. Section \ref{seq:perimeter} presents the formulation of the bimodal bathtub model during perimeter control with transit priority. The equilibrium conditions during perimeter control are studied in Section \ref{sec:equilibrium_perimeter}. Numerical examples of the model are provided in Section \ref{seq:numerical}, and we conclude in Section \ref{seq:conclusion}.

\section{Bimodal bathtub model} \label{sec:model}
\subsection{Model setting}
Consider a monocentric city that has central business district (CBD) and suburban zones where cars and FRT service\footnote{We use FRT because of the assumption that transit is available for all commuters in the model setting. While the transit network is dense and is accessible to everyone in the CBD zone, the accessibility to the transit service in the suburban zone is limited in reality. For example, \cite{velaga2012transport} showed that it takes 14 -- 26 minutes and  more than 27 minutes  to reach the nearest bus stop in 9 \% and   7 \% of the areas in the suburban zone near the CBD zone, respectively, and that 4 \% of the areas has no bus service in Scotland. Therefore, the assumptions that transit is available for all commuters, and that the trip length of the FRT commuters is homogeneous  are not realistic in  the traditional fixed route transit services.  Instead, we use flexible transit systems  \citep[e.g.,][]{sipetas2021continuous} that enable all commuters in the suburban zone to use transit.} are available for all commuters as transportation modes, as shown in Fig. \ref{fig:model}.  The CBD zone has homogeneous topological characteristics by proper partitioning methods (e.g., \cite{ji2012spatial, dantsuji2019cross}). The congestion pattern is homogeneous over space within the CBD, and the CBD zone exhibits the well-defined bimodal MFD. The congestion dynamics in the CBD is thus described as a bimodal bathtub model, whereas we assume that one can travel at the free-flow speed in the suburban zone. 

A fixed number of FRT vehicles are circulated in multiple directions in the suburban zone  such that  all commuters can access the service.  Each FRT line has a predetermined coverage area in the suburban zone and has a base route; passengers ride FRT vehicles at stops, and these vehicles can pick up passengers at their origins, in predetermined areas and drop them off at their destinations. (e.g., \cite{zheng2018slack,sipetas2021continuous}).  All FRT vehicles are coordinated and controlled by an operator to prevent bus bunching (e.g., \cite{daganzo2011reducing, bartholdi2012self}). Therefore, we assume that the FRT vehicles are distributed homogeneously over the transit lines such that  the number of FRT vehicles in the CBD zone is  fixed and time invariant ($n_F= n_F^s + n_F^c$, where $n_F$, $n_F^s$, and $n_F^c$ are the total number of FRT vehicles and the numbers of FRT vehicles in the suburban and CBD zones, respectively). 

We then assume  a fixed number, $N$,  of homogeneous commuters who have identical preference and desired arrival time. All of them travel from the suburban zone to the CBD zone by choosing departure times and transportation mode to minimize their travel cost.

\begin{figure}[t]
\centering
\includegraphics[width=0.5\columnwidth]{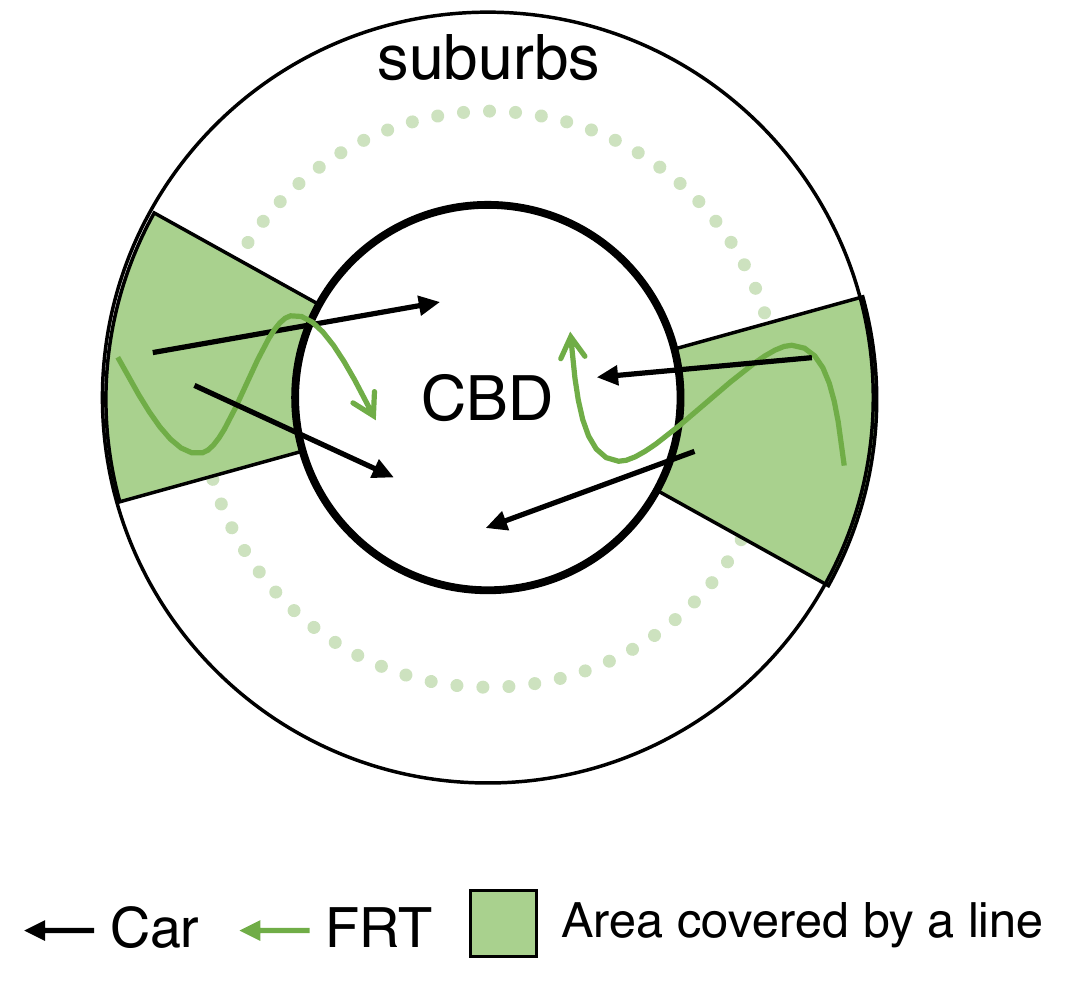}
\caption{Model structure}
\label{fig:model}
\end{figure}

\subsection{Bimodal bathtub congestion and passenger dynamics of FRT  in CBD}
To incorporate the bimodal bathtub model of the CBD zone, we extend the Greenshields model to the bimodal congestion dynamics as follows: 
\begin{eqnarray}
v_c(t) &=& v_f \left( 1 - \frac{\eta n_F(t) + n_c(t)}{n_j} \right) \label{eq:car_speed_bi_original}   \\
v_F(t) &=& m v_c(t) \label{eq:bus_speed_bi} 
\end{eqnarray}
where $v_i(t)$ is the space-mean speed of transportation mode $i$ ($i \in \{c, F\}$ where $c$ and $F$ represent car and FRT, respectively) at time $t$, $v_f$ is the free-flow car space-mean speed, $n_i(t)$ is the accumulation of mode $i$'s vehicles in the CBD zone at time $t$, $\eta$ is the passenger car unit, $n_j$ is jam accumulation and $m$ indicates that transit vehicles travel slower than cars due to passenger boarding and alighting ($0<m<1$), which is  similar to the  modeling of \cite{loder2017empirics}. 

As mentioned above,  we assume that FRT vehicle accumulation in the CBD zone is  fixed; that is $n_F(t)=n_F^c$. Thus, Eq. (\ref{eq:car_speed_bi_original}) can be rewritten as  
\begin{eqnarray}
v_c(t) &=&  v_f' \left(  1 - \frac{ n_c(t)}{n_j'} \right)   \label{eq:car_speed_bi} 
\end{eqnarray}
where $v_f'=v_f \left( 1- \eta n_F^c / n_j \right)$ and $n_j' = n_j \left( 1- \eta n_F^c / n_j \right)$. 

Since the CBD is modeled as a system that has  inflow and outflow and whose traffic conditions are governed by bathtub congestion dynamics \citep{daganzo2007urban}, the time evolution of car accumulation, $\dot{n_c}(t)$, is given by 
\begin{eqnarray}
\dot{n_c}(t) = I_c(t) - G_c(t) \label{eq:TimeEvoCar}
\end{eqnarray}
where $I_c(t)$ is the car inflow rate to the CBD at time $t$ and $G_c(t)$ is the car passenger arrival rate at the destination at time $t$. The car passenger arrival rate is formulated by the network exit function  \citep{gonzales2012morning}  as  
\begin{eqnarray}
G_c(t) = \frac{n_c(t)v_c(t)}{L_c} \label{eq:trip_completion_rate}
\end{eqnarray}
where $L_c$ is the average trip length of the car commuters in the CBD zone. 

From the perspective of passenger flow, we assume that all commuters have their own car, and the car commuters travel by their own car (i.e., the occupancy of cars is $1$ [pax/veh]). With $O_F(t)$ denoting the average number of  passengers per FRT vehicle at time $t$, the FRT passenger arrival rate, $G_F^p(t)$, is given by
\begin{eqnarray}
G_F^p(t) = \frac{O_F(t)n_F^c v_F(t)}{L_F}
\end{eqnarray}
where $L_F$ is the average trip length of the FRT commuters in the CBD zone. 
In analogy with the evolution of car accumulation, the evolution of the average number of the passengers per a FRT vehicle, $\dot{O_F}(t)$,  is given by
\begin{eqnarray}
\dot{O_F}(t) = \frac{1}{n_F} \left(d_F(t) - \frac{O_F(t) n_F^c v_F(t)}{L_F}\right) 
\end{eqnarray}
where $d_F(t)$ is the departure rate of FRT passengers at time $t$. The term in the bracket is the difference between the numbers of the passengers boarding and alighting at time $t$. Therefore, the time evolution of $O_F(t)$ is calculated by dividing it by the total number of FRT vehicles in the monocentric city, $n_F$.

Travel time is time spent in the CBD zone to complete the trip length. Hence, the travel time of each mode $i $ in the CBD zone is
\begin{eqnarray}
\int_{t}^{t+T_i(t)} v_i(u) du = L_i, \; \; \; \; i \in \{c,F\} \label{eq:proper}
\end{eqnarray}
With Eq.(\ref{eq:proper}), the departure time choice model in the bathtub is intractable \citep{arnott2013bathtub}. Thus, we assume that the travel time in the CBD zone of a commuter who arrives at time $t$ by transportation mode $i$ is approximated by
\begin{eqnarray}
T_i(t) \approx \frac{L_i}{v_i(t)} \label{eq:ass}
\end{eqnarray}
There are several ways of the approximation for the tractability in the literature \citep{arnott2013bathtub, fosgerau2013hypercongestion}. In this study,  we assume that the travel time is determined by a single instant of time \citep{small2003hypercongestion, geroliminis2009cordon}.

\subsection{Travel cost}
Given the bathtub congestion dynamics above, the car travel cost incurred by a commuter who arrive at time $t$ is given by 
\begin{eqnarray}
C_c(t) &= &
\begin{cases}
\alpha \left( T_c^s + T_c(t) \right) + \beta \left( t^* - t \right) + p_c  & \text{if }t \leq t^* \\
\alpha  \left( T_c^s + T_c(t) \right) + \gamma \left(  t - t^*  \right) + p_c & \text{if }t > t^*
\end{cases} 
\end{eqnarray}
where $T_c^s$ is the free-flow car travel time in the suburban  zone, $T_c(t)$ is the travel time in the CBD zone, as defined by Eq.(\ref{eq:ass}) and $t^*$ is the desired arrival time. We assume that commuters have ``$\alpha$ - $\beta$ - $\gamma$'' type preference \citep{arnott1993structural}. The first, second, and third terms of the RHS are
the travel time cost, the schedule delay cost, and the car operation cost (e.g., parking fee, gasoline expense), respectively. The travel time in the suburban zone and the car operation cost are time invariant. For  simplicity, we combine the two costs as $F_c$ (i.e., $F_c = \alpha T_c^s + p_c$) to obtain
\begin{eqnarray}
C_c(t)  &=&
\begin{cases}
\alpha  T_c(t)  + \beta \left( t^* - t \right) + F_c  & \text{if }t \leq t^* \\
\alpha  T_c(t) + \gamma \left(  t - t^*  \right) + F_c & \text{if }t > t^*
\end{cases}
\label{eq:gc_car} 
\end{eqnarray}

The FRT travel cost incurred by a commuter who arrives at time $t$ is
\begin{eqnarray}
C_F(t) &=& 
\begin{cases}
\alpha \left( T_F^s + T_F(t) \right) + \beta \left( t^* - t \right) + g(t)  + p_F & \text{if }t \leq t^* \\
\alpha  \left( T_F^s + T_F(t) \right) + \gamma \left(  t - t^*  \right) + g(t) + p_F & \text{if }t > t^*
\end{cases}
\end{eqnarray}
where  $T_F^s$ is the free-flow FRT travel time in the suburban zone, $T_F(t)$ is the FRT travel time in the CBD zone, as defined by Eq. (\ref{eq:ass}), $g(t)$ is the discomfort cost at time $t$ and $p_F$ is the FRT fare. In addition to  travel time  and  schedule delay costs, an FRT commuter incurs the discomfort cost and the FRT fare. For  simplicity, 
the discomfort cost incurred by an FRT commuter is assumed to be proportional to the average number of passengers per  FRT vehicle when this commuter arrives at their destinations.
\begin{eqnarray}
g(t) = \lambda O_F(t)
\end{eqnarray}
where $\lambda$ is the marginal disutility of the discomfort, $O_F(t)$ is the average number of passengers per FRT vehicle at time $t$. 
This assumption indicates that discomfort in  FRT is related to the number of passengers in the FRT vehicles. It can represent the waiting time for pickup \citep{basso2019efficiency} or the boarding and alighting times of other passengers.  A linear function is commonly assumed for the discomfort cost \citep{wu2014equilibrium, xu2018pareto}. Futhermore, the assumption that this cost is proportional to the average number of passengers per FRT vehicle when a commuter arrives at their destination is consistent with the travel time assumption (the travel time is determined by the arrival time at the destination (Eq. (\ref{eq:ass})). 

By combining the travel cost in the suburban zone and the FRT fare (i.e., $F_F=\alpha T_F^s + p_F$), we obtain 
\begin{eqnarray}
C_F(t) &=&
\begin{cases}
\alpha T_F(t) + \beta \left( t^* - t \right) + \lambda O_F(t) + F_F & \text{if }t \leq t^* \\
\alpha   T_F(t) + \gamma \left(  t - t^*  \right) + \lambda O_F(t) + F_F & \text{if }t > t^*
\end{cases}
\label{eq:gc_bus}
\end{eqnarray}

Additionally, to analyze the equilibrium when both modes are used during rush hour, we assume that the average FRT trip length is longer than that of a car (i.e., $L_F > L_c$), and that the marginal cost of earliness is less than that of the travel time (i.e., $\beta < \alpha $)   for the FIFO property.

\section{Equilibrium} \label{sec:eq_ue}
We define dynamic user equilibrium following \cite{wardrop1952road}'s first principle. That is, at equilibrium, no commuter can reduce their travel cost by changing their departure time and transportation mode. The equilibrium conditions are  
\begin{subequations}
 \begin{align}
&\begin{cases}
C_c(t) = c^* & \text{if   } n_c(t) > 0 \\
C_c(t) \geq c^* & \text{if   } n_c(t) = 0 \\
\end{cases} \; \; \;  \forall t \in \mathbb{R}  \label{eq:UE_con_car}\\ 
&\begin{cases}
C_F(t) = c^* & \text{if   } O_F(t) > 0 \\
C_F(t) \geq c^* & \text{if   } O_F(t) = 0 \\
\end{cases}  \; \; \;  \forall t \in \mathbb{R}  \label{eq:UE_con_bus}  \\ 
&\int_{t \in \mathbb{R} } \frac{n_c(s)v_c(s)}{L_c} ds + \int_{t \in \mathbb{R} } \frac{O_F(s)n_F^c v_F(s)}{L_F} ds = N
\label{eq:UE_cons}
 \end{align}
\end{subequations}
where $c^*$ is the equilibrium cost. Condition (\ref{eq:UE_con_car}) states that if the car travel cost at time $t$ is greater than the equilibrium cost, no one will arrive at their destination by car at time $t$. According to Condition (\ref{eq:UE_con_bus}), if the FRT travel cost at time $t$ is greater than the equilibrium cost, no one will take FRT at time $t$. Condition (\ref{eq:UE_cons}) is the conservation law for travel demand: integrating the arrival rate is total number of commuters. 

\subsection{Equilibrium properties}
Let [$t_s^c$, $t_e^c$] and  [$t_s^F$,  $t_e^F$]  be the time windows  between the  first and last commuters who arrive at their destinations by car and FRT, respectively. That is, these time windows represent the rush hour of each transportation mode. We define $\theta \equiv \frac{ (c^* - F_c)  v_f^{'}}{\alpha L_c}$ which is the ratio of the sum of the travel time  and schedule delay costs at equilibrium  to the free-flow travel time (sum of travel time  and schedule delay costs at equilibrium when the number of car commuters is zero) \citep{arnott2013bathtub}. It represents the duration of the car rush hour (or the severity of traffic congestion); that is, the larger the $\theta$, the longer the car rush hour. 
We present the following proposition on the equilibrium when both modes are used. 
\begin{prop} \label{prop:FRT_used_ue}
If  $\theta>1$ and  $\Delta F > \alpha  \Delta T_f $ where $\Delta F = F_c - F_F$ and $\Delta T_f = L_F /  m v_f' - L_c /  v_f'$ , then both modes are used during rush hour, and the FRT rush hour starts earlier and ends later than the car rush hour.  If $\Delta F \leq \alpha  \Delta T_f$,  FRT is never used. If $\theta \leq 1$, cars are never used.
\end{prop}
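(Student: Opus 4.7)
The strategy is to turn the indifference conditions (\ref{eq:UE_con_car})--(\ref{eq:UE_con_bus}) into explicit formulas for $v_c(t)$ and $O_F(t)$ inside the rush-hour windows, pin down the four endpoints $t_s^c,\,t_e^c,\,t_s^F,\,t_e^F$ by natural boundary conditions, and then read off the three claims from sign checks.

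First I would tackle the car mode. Setting $C_c(t)=c^*$ in (\ref{eq:gc_car}) and substituting $T_c(t)=L_c/v_c(t)$ gives $v_c(t)$ explicitly on $[t_s^c,t_e^c]$. At each endpoint $n_c=0$ by the very definition of the rush-hour window, so $v_c=v_f'$, and the boundary conditions should reduce to $\beta(t^*-t_s^c)=\gamma(t_e^c-t^*)=(\alpha L_c/v_f')(\theta-1)$. Both gaps are non-negative precisely when $\theta\geq 1$, with equality collapsing the car window to a single instant. This immediately yields the third claim: if $\theta\leq 1$ no interval of positive car use is admissible, hence cars are never used.

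Next I would turn to the FRT mode. Inside the car window, $v_c(t)$ is already pinned down, so $v_F(t)=m v_c(t)$ is known and the indifference condition $C_F(t)=c^*$ becomes a linear equation for $\lambda O_F(t)$ whose slope in $t^*-t$ is $(L_F/(mL_c)-1)\beta>0$, which already gives the qualitative observation that $O_F$ decreases toward $t^*$. After using the car equation to eliminate $c^*-F_c-\beta(t^*-t)$, I anticipate the clean evaluation $\lambda O_F(t_s^c)=\Delta F-\alpha\Delta T_f$ at the car-edge boundary, and symmetrically at $t_e^c$. Outside the car window but inside a putative FRT window we have $v_c=v_f'$, and the FRT indifference condition combined with $O_F(t_s^F)=0$ (no FRT passengers before the first FRT arrival) determines $t_s^F$; subtracting the analogous car-endpoint formula should produce the key identity $\beta(t_s^c-t_s^F)=\gamma(t_e^F-t_e^c)=\Delta F-\alpha\Delta T_f$. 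Under the hypotheses of the first claim these separations are strictly positive, so $t_s^F<t_s^c<t_e^c<t_e^F$ and both modes carry positive flow. Conversely, under $\Delta F\leq\alpha\Delta T_f$ the same identity forces $\lambda O_F(t_s^c)\leq 0$, which is incompatible with $O_F\geq 0$, so no FRT window is admissible and the second claim follows.

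The hardest step I foresee is not the algebra but the justification of the boundary conditions themselves: continuity of $O_F$ and the fact that it vanishes before the first FRT arrival, plus the requirement that the linear $O_F(t)$ obtained from the indifference condition be consistent with the evolution law for $\dot O_F(t)$ introduced in Section \ref{sec:model}. Here the FIFO-type hypotheses $L_F>L_c$ and $\beta<\alpha$ should come in to guarantee a single connected FRT rush-hour window and a physically meaningful implied departure rate $d_F(t)$; I would also need to rule out more exotic equilibria in which the FRT interval is disconnected or straddles $t^*$ in a pathological way.
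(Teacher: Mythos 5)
Your proposal is correct and follows essentially the same route as the paper: both turn the equilibrium indifference conditions into explicit expressions for the within-rush-hour dynamics, obtain $\lambda O_F(t_s^c)=\Delta F-\alpha\Delta T_f$ and $\beta(t_s^c-t_s^F)=\gamma(t_e^F-t_e^c)=\Delta F-\alpha\Delta T_f$ (equivalently the paper's $\dot O_F=\beta/\lambda$ before the car rush hour), and read the three claims off sign conditions, with the $\theta\le 1$ case coming from the free-flow car cost already meeting $c^*$. Your explicit sign check of $O_F(t)$ during the car rush hour plays the same role as the paper's comparison $dT_F/dt>dT_c/dt$, so the argument is equivalent in substance.
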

\begin{proof}
As shown in \cite{small2003hypercongestion}, the time evolution of car accumulation at equilibrium with Eq.~(\ref{eq:ass}) is given by 
\begin{eqnarray}
&& n_c(t) = 
\begin{cases}
n_j' \left(1 - \frac{1}{1+ \frac{\beta}{\alpha T_f^c} \left(t - t_s^c \right)} \right) & \text{if } t \leq t^* \\ 
n_j' \left(1 - \frac{1}{1+ \frac{\gamma}{\alpha T_f^c  }\left(t_e^c - t \right)} \right)& \text{if } t^* < t 
\end{cases} \label{eq:car_evo_ue}
\end{eqnarray}

The first car commuter incurs the travel time cost at free-flow in the CBD. Thus, the car travel cost at time $t_s^c$ is given by 
\begin{eqnarray}
C_c(t_s^c)=\alpha T_f^c  + \beta \left( t^* - t_s^c \right) + F_c 
\end{eqnarray}
where $T_f^c$ is the car free-flow travel time ($=  L_c/ v_f'$). The FRT travel cost at time $t_s^c$ is
\begin{eqnarray}
C_F(t_s^c)=\alpha  T_f^F + \beta \left( t^* - t_s^c \right) + \lambda O_F(t_s^c) + F_F
\end{eqnarray}
where $T_f^F$ is the FRT free-flow travel time ($= L_F / m v_f'$). As seen from Eqs. (\ref{eq:bus_speed_bi}), (\ref{eq:car_speed_bi}), (\ref{eq:ass}), and (\ref{eq:car_evo_ue})   that $\frac{d T_c(t)}{dt}=\frac{\beta}{\alpha}$ and $\frac{d T_F(t)}{dt}=\frac{T_f^F}{T_f^c}\frac{\beta}{\alpha}$ for  earliness, which yields $\frac{d T_c(t)}{dt}<\frac{d T_F(t)}{dt}$ by the assumptions that $L_F > L_c$ and $0<m<1$, which indicates that the FRT travel time cost increases with time more than the car travel time cost, once the car rush hour starts. Thus,  FRT is never used if no one takes FRT at the car rush hour start time, $t_s^c$. When both modes are used, the FRT travel cost without the discomfort cost is less than the car travel cost at  time $t_s^c$. 
\begin{eqnarray}
\alpha  T_f^c + \beta \left( t^* - t_s^c \right) + F_c   > \alpha  T_f^F + \beta \left( t^* - t_s^c \right)  + F_F
\end{eqnarray}
It yields
\begin{eqnarray}
\Delta F > \alpha  \Delta T_f \label{eq:prop_1}
\end{eqnarray}

Before the car rush hour starts (travel time is time-invariant at free-flow), we have   $\frac{dC_F(t)}{dt} = \beta + \lambda \dot{O_F}(t)$ from Eq.~(\ref{eq:gc_bus}). Then, Condition (\ref{eq:UE_con_bus}) yields 
\begin{eqnarray}
\dot{O_F}(t)=\frac{\beta}{\lambda} \label{eq:before_car_rush_der_Od}
\end{eqnarray}
Therefore, the time, $t_s^F$, when the FRT rush hour starts ($O_F(t_s^F)=0$) is earlier than $t_s^c$. In the same way, we can prove for  lateness that $t_e^F>t_e^c$.

Obviously, cars are never used  if $c^* \leq  \alpha L_c /  v_f' + F_c $. This condition can be written by 
\begin{eqnarray}
\theta  \leq 1
\end{eqnarray}

\end{proof}

This proposition indicates that both modes will be used if the fixed cost difference ($\Delta F$) is greater than the free-flow travel time cost difference ($\alpha \Delta T_f$). As the FRT travel cost is lower than the car travel cost outside both modes' rush hour, the FRT rush hour starts earlier and ends later. Then, since  FRT commuters incur the discomfort cost in addition to the costs they have in common with car commuters (travel time, schedule delay, and fixed costs), there is a time when the car travel cost becomes equal to the FRT travel cost. Thus, both modes will be used. Contrarily, if the fixed cost difference is lower than the free-flow travel cost difference, the FRT travel cost will always be larger than the car travel cost; thus, FRT will not be used. Furthermore, if the car fixed cost and free-flow travel time are too high, no one will use their cars and all commuters will take FRT. Since the travel time is time-invariant at free-flow during the FRT rush hour in this case, the number of  FRT passengers increases linearly toward the desired arrival time.  

As this study aims to analyze equilibrium patterns when both modes are used, we impose the following assumption for the remainder of this paper. 
\begin{ass} \label{ass:ass1}
$\theta>1$ and $\Delta F > \alpha  \Delta T_f $ 
\end{ass}

At equilibrium when both modes are used, we have the following proposition. 

\begin{prop} \label{prop:FRT_equi_pattern}
Suppose Assumption \ref{ass:ass1}.  Then, the equilibrium has the following properties. 
\begin{enumerate}
    \item For earliness, the number of FRT passengers increases with time before the car rush hour and begins to decrease toward the desired arrival time once the car rush hour starts. For lateness, it increases after the desired arrival time and starts to decrease after the car rush hour ends. 
    \item If $ \frac{\Delta F}{\alpha\Delta T_f }< \theta$, there will be a time window wherein the FRT is not used during the FRT rush hour, and the time window $[t_{ee}^F, t_{sl}^F]$ is given by
    \begin{eqnarray}
    t_{ee}^F = t_s^c + \frac{\alpha T_f^c}{\beta} \frac{\Delta F - \alpha \Delta T_f}{\alpha \Delta T_f} \label{eq:time_FRT_not_used_earliness} \\
    t_{sl}^F = t_e^c - \frac{\alpha T_f^c}{\gamma} \frac{\Delta F - \alpha \Delta T_f}{\alpha \Delta T_f} \label{eq:time_FRT_not_used_lateness} 
    \end{eqnarray}
    \item If $ \frac{\Delta F}{\alpha\Delta T_f }  \geq \theta $,  FRT will be used from the beginning to the end of the FRT rush hour. 
\end{enumerate}

\end{prop}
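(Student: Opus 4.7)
I will characterize the evolution of $O_F(t)$ on each of the four temporal intervals $[t_s^F,t_s^c]$, $[t_s^c,t^*]$, $[t^*,t_e^c]$, $[t_e^c,t_e^F]$ by differentiating the FRT equilibrium condition $C_F(t)=c^*$, and then use the resulting monotonicity and continuity to locate any zeros of $O_F$. The two free-flow intervals are already handled by the argument in the proof of Proposition~\ref{prop:FRT_used_ue}, which gives $\dot O_F=\beta/\lambda$ on $[t_s^F,t_s^c]$; the symmetric computation for lateness yields $\dot O_F=-\gamma/\lambda$ on $[t_e^c,t_e^F]$. Inside the car rush hour I will use $T_F(t)=(T_f^F/T_f^c)\,T_c(t)$, which follows from Eqs.~(\ref{eq:bus_speed_bi}), (\ref{eq:car_speed_bi}), and (\ref{eq:ass}), together with $dT_c/dt=\beta/\alpha$ for earliness and $dT_c/dt=-\gamma/\alpha$ for lateness, both read off from Eq.~(\ref{eq:car_evo_ue}). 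Since $T_f^F>T_f^c$ under Assumption~\ref{ass:ass1}, this produces $\dot O_F<0$ on $[t_s^c,t^*]$ and $\dot O_F>0$ on $[t^*,t_e^c]$, which gives part~1 directly and also locates the unique candidate minimum of $O_F$ over $[t_s^F,t_e^F]$ at $t=t^*$.

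For part~2, I will pin down $t_{ee}^F$ by imposing simultaneously $O_F(t_{ee}^F)=0$ and $C_F(t_{ee}^F)=c^*$. The car equilibrium on $[t_s^c,t^*]$ combined with $T_c(t)=T_f^c+(\beta/\alpha)(t-t_s^c)$ collapses into the closed form $c^*=\alpha T_f^c+\beta(t^*-t_s^c)+F_c$, so the definition of $\theta$ gives the key identity $\beta(t^*-t_s^c)=\alpha T_f^c(\theta-1)$. Substituting the corresponding affine expression for $T_F(t_{ee}^F)$ into $C_F(t_{ee}^F)=c^*$ and cancelling the common schedule-delay term reduces the relation to a linear equation in $t_{ee}^F-t_s^c$ whose solution is exactly Eq.~(\ref{eq:time_FRT_not_used_earliness}); the formula for $t_{sl}^F$ then follows by the fully symmetric computation on $[t^*,t_e^c]$ using $T_c(t)=T_f^c+(\gamma/\alpha)(t_e^c-t)$. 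Finally, I will verify that the requirement $t_{ee}^F<t^*$ (so that the putative gap actually intersects the car rush hour) reduces, after using the $\theta$-identity above, precisely to $\Delta F/(\alpha\Delta T_f)<\theta$, and similarly $t_{sl}^F>t^*$ gives the same inequality.

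Part~3 comes from the same calculation read in reverse. The cleanest test is to ask whether $O_F(t^*)=0$ would be consistent with FRT being used at the peak: evaluating $c^*-C_F(t^*)$ under this hypothesis returns a quantity proportional to $\Delta F/(\alpha\Delta T_f)-\theta$. When this quantity is nonpositive, that is $\Delta F/(\alpha\Delta T_f)\geq\theta$, the cost of an empty FRT at $t^*$ is at most $c^*$, so the equilibrium condition forces $O_F(t^*)\geq 0$ with equality only in the boundary case; combined with the strict monotonicity from part~1, this gives $O_F(t)>0$ throughout the interior of $[t_s^F,t_e^F]$. The main obstacle is bookkeeping: keeping the four regimes, their distinct affine expressions for $T_c$ and $T_F$, and the conversions between the car-rush-hour durations and the dimensionless parameter $\theta$ aligned so that the algebra telescopes as claimed. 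Once those identifications are in place, every step reduces to a direct manipulation of the equilibrium conditions and the bathtub dynamics in Eq.~(\ref{eq:car_evo_ue}).
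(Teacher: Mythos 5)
Your proposal is correct and takes essentially the same route as the paper's proof: both derive the piecewise-linear evolution of $O_F(t)$ from the equilibrium indifference conditions combined with the equilibrium car-accumulation dynamics (your differentiated form $\dot O_F=\beta(1-T_f^F/T_f^c)/\lambda$ is just the derivative of the paper's closed-form expression), solve $O_F=0$ for $t_{ee}^F$ and $t_{sl}^F$, and convert the comparison with $t^*$ into $\Delta F/(\alpha\Delta T_f)$ versus $\theta$ via $\beta(t^*-t_s^c)=\alpha T_f^c(\theta-1)$. The only blemish is a wording slip in part~3: the quantity $c^*-C_F(t^*)\propto \Delta F/(\alpha\Delta T_f)-\theta$ is nonnegative (not nonpositive) precisely when $\Delta F/(\alpha\Delta T_f)\geq\theta$, which is the sign your stated conclusion actually uses.
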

\begin{proof}
$\dot{O_F}(t)= \beta / \lambda$ if $ t_s^F < t < t_s^c $, as stated in Eq.~(\ref{eq:before_car_rush_der_Od}). Thus, ${O_F}(t)$ increases linearly before the car rush hour.  
During the car rush hour, according to Conditions~(\ref{eq:UE_con_car}) and (\ref{eq:UE_con_bus}), we have $C_c(t)=C_F(t)$ when both modes are used. For earliness, combining this with Eqs.~(\ref{eq:gc_car}) and (\ref{eq:gc_bus}) yields
\begin{eqnarray} \label{eq:Of_earliness}
O_F(t) = \frac{1}{\lambda} \left(\Delta F - \alpha \left(T_F(t) - T_c(t) \right) \right)
\end{eqnarray}
By substituting Eqs.~(\ref{eq:car_speed_bi}), (\ref{eq:ass}), and (\ref{eq:car_evo_ue}) into Eq.~(\ref{eq:Of_earliness}), we obtain
\begin{eqnarray} \label{eq:od_earliness_ue}
O_F(t) = \frac{1}{\lambda} \left( \Delta F - \alpha \Delta T_f \left( 1 + \frac{\beta}{\alpha T_f^c} \left( t - t_s^c \right) \right) \right) 
\end{eqnarray}
As $\dot{O_F}(t) < 0$ after the car rush hour starts, ${O_F}(t)$ decreases linearly. 

For lateness, we have 
\begin{eqnarray}\label{eq:od_lateness_ue}
O_F(t) = \frac{1}{\lambda} \left( \Delta F - \alpha \Delta T_f \left( 1 + \frac{\gamma}{\alpha T_f^c} \left(t_e^c - t\right) \right) \right) 
\end{eqnarray}
Thus, ${O_F}(t)$ increases linearly before the car rush hour ends. Then, $\dot{O_F}(t)= - \gamma / \lambda$ for the time after the car rush hour. Therefore, for earliness, the number of FRT passengers increases with time before the car rush hour and decreases toward the desired arrival time once the car rush hour starts.  Then, for lateness, it increases after the desired arrival time and decreases after the car rush hour ends.

Since the number of FRT passengers decreases after the car rush hour starts, there may be a time wherein $O_F(t) = 0$ before the desired arrival time. Let $t_{ee}^F$ denote this time. From Eq. (\ref{eq:od_earliness_ue}), we obtain
\begin{eqnarray}
t_{ee}^F = t_s^c + \frac{\alpha T_f^c}{\beta} \frac{\Delta F - \alpha \Delta T_f}{\alpha \Delta T_f}  \label{eq:time_FRT_not_used_earliness2} 
\end{eqnarray}
In the same way, we have $t_{sl}^F = t_e^c - \frac{\alpha T_f^c}{\gamma} \frac{\Delta F - \alpha \Delta T_f}{\alpha \Delta T_f}$ for lateness. 

The use of FRT will continue during the FRT rush hour if $t_{ee}^F>t^*$. Condition (\ref{eq:UE_con_car}) states that  $c^*=\alpha T_f^c + \beta (t^* - t_s^c) + F_c$, combining this condition with Eq.~(\ref{eq:time_FRT_not_used_earliness2}) leads to the following condition for $t_{ee}^F<t^*$: 
\begin{eqnarray}
\frac{\Delta F}{\alpha\Delta T_f }< \theta
\end{eqnarray}
which completes the proof.
\end{proof}

\begin{figure}[t]
\centering
\includegraphics[width=0.8\columnwidth]{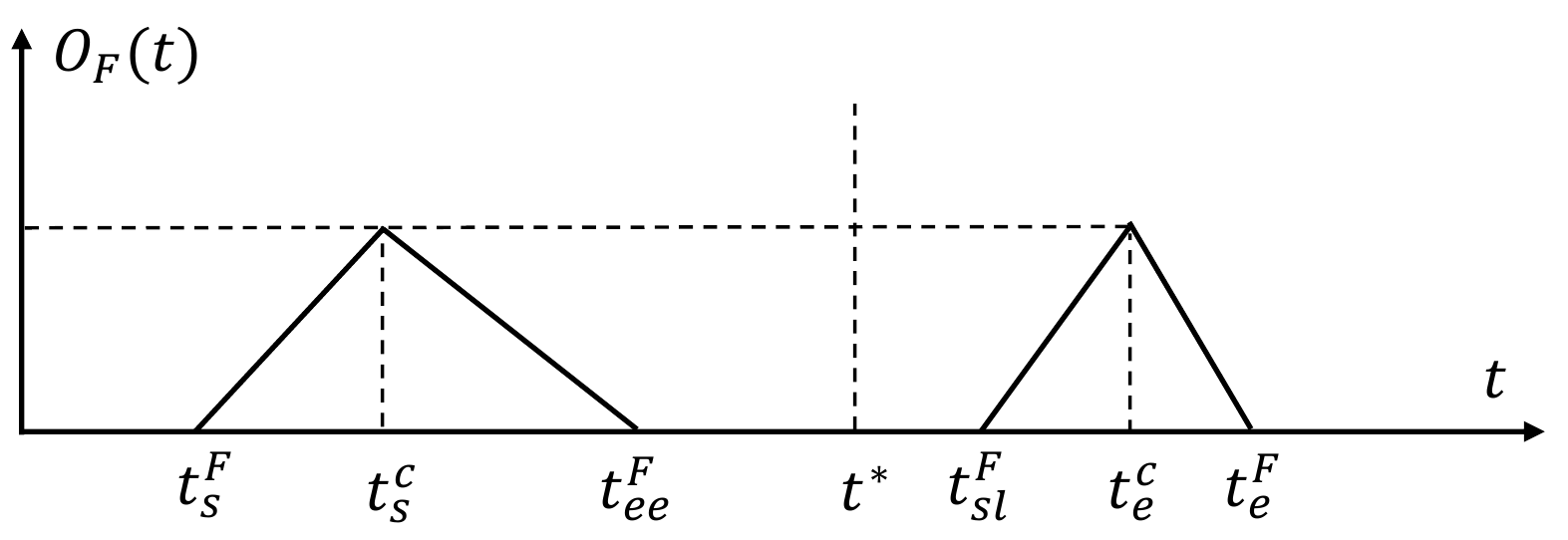}
\caption{Illustration of Proposition \ref{prop:FRT_equi_pattern}.2}
\label{fig:graphical_FRT}
\end{figure}

\begin{figure}[t]
\centering
\includegraphics[width=0.8\columnwidth]{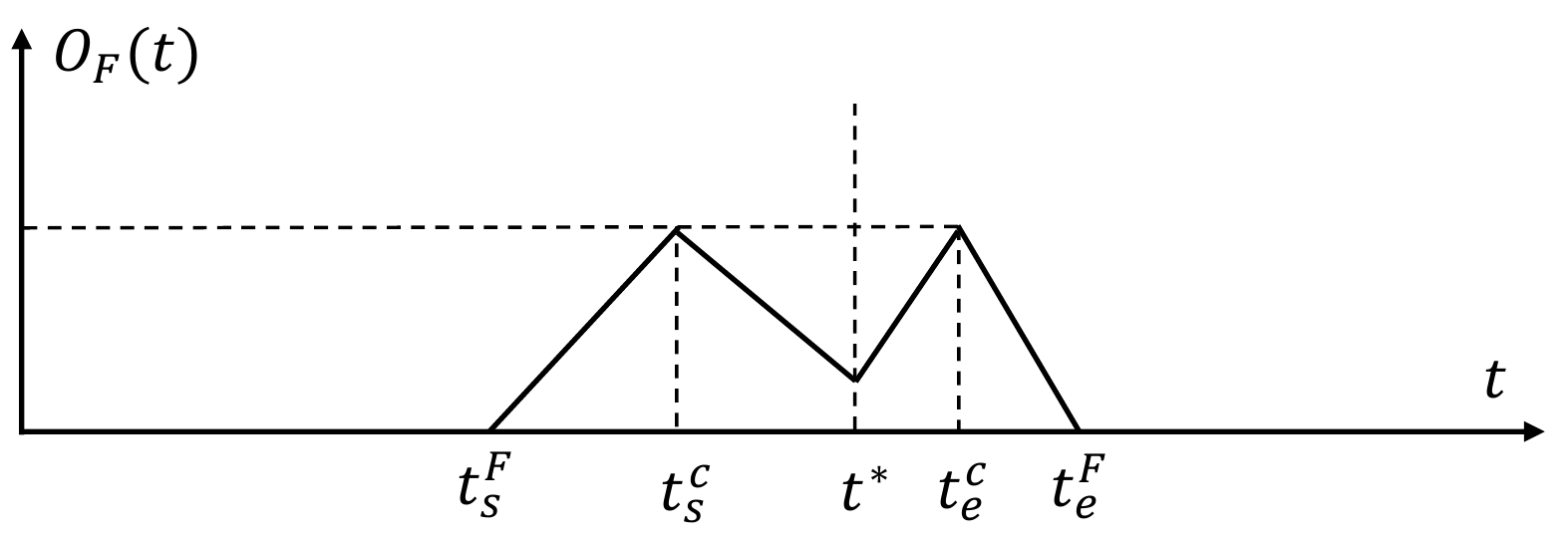}
\caption{Illustration of Proposition \ref{prop:FRT_equi_pattern}.3}
\label{fig:graphical_FRT_continue_ue}
\end{figure} 

Propositions \ref{prop:FRT_equi_pattern}.2 and \ref{prop:FRT_equi_pattern}.3  are depicted in Figs. \ref{fig:graphical_FRT} and \ref{fig:graphical_FRT_continue_ue}, respectively.  For earliness, the number of FRT passengers increases with time before the car rush hour, as shown in Eq.~(\ref{eq:before_car_rush_der_Od}). During the time interval $[t_s^F, t_s^c)$, the only externality that they incur is discomfort. Thus, the trade-off between discomfort and schedule delay causes this increase in FRT passengers with time. As seen from Eq.~(\ref{eq:od_earliness_ue}), once the car rush hour starts,  the  number of FRT passengers begins to decrease; it will become zero at time $t_{ee}^F$ (before the desired arrival time) if the fixed cost difference is relatively low compared with the travel time and schedule delay costs ($  \Delta F/\alpha\Delta T_f < \theta$). Note that as defined above, a large $\theta$ represents a long car rush hour (large travel time around the peak of the rush hour).   Since the travel time cost increases due to the increase in car accumulation  once the car rush hour starts (Eq.~(\ref{eq:car_evo_ue})), the discomfort cost has to be decreased to keep the travel cost at the equilibrium cost, and the number of FRT passengers will decrease. Then, FRT will not be used during the time window $t \in [t_{ee}^F, t_{sl}^F)$ if $  \Delta F/\alpha\Delta T_f < \theta$.  Similarly, for lateness, the average number of passengers per FRT vehicle increases during the time interval $[t_{sl}^F, t_e^c]$ and begins to decrease after the car rush hour. If $\Delta F/\alpha\Delta T_f   \geq \theta $ (e.g., the fixed car cost is significantly higher than the FRT fixed cost ), there will be no time window wherein  FRT is not used, as depicted in Fig.~\ref{fig:graphical_FRT_continue_ue}. 

These results differ from previous findings on bimodal morning commute problems that consider  dynamic congestion for  transit  (or hypercongestion), where  transit is used continuously once the transit rush hour starts and the number of  transit passengers is constant or increases toward the desired arrival time \citep{ gonzales2012morning,basso2019efficiency}. This difference is caused by the two externalities incorporated in the proposed model.  \cite{gonzales2012morning} assumed that the transit travel cost  is constant by dedicated lanes and did not consider any externality for transit.  \cite{basso2019efficiency} considered the externality of the waiting time at bus stops, but transit vehicles in their study were not subjected to bottleneck congestion\footnote{If hypercongestion was ignored for  FRT in the proposed model, results similar to \cite{basso2019efficiency} could be obtained. 
}. Hypercongestion and discomfort in transit (e.g., crowding and waiting time at stops)  in urban cities are not negligible (e.g., \cite{tirachini2014multimodal}). Hence, these externalities play a role in commuters' travel behavior changes. However, empirical measurements that combine urban hypercongestion, discomfort, and transit user travel behavior have not been investigated. Nonetheless, existing estimation methods of hypercongestion (e.g., \cite{geroliminis2008existence}) and massive smart IC card data (e.g., \cite{dantsuji2022novel}) may reveal those relationships, which is an interesting future direction of the current paper. 

\subsection{Equilibrium cost}

Next, we derive the equilibrium cost for the case where there is a time window when FRT is not used  ($ \Delta F/\alpha\Delta T_f < \theta $) . 
As derived in \cite{small2003hypercongestion}, the number of car commuters at equilibrium, $N_c$, is  
\begin{eqnarray}
N_c &=& \int_{t_s^c}^{t_e^c} \frac{n_c(s)v_c(s)}{L_c} ds \nonumber \\ 
&=& n_j'  \left( \frac{\alpha}{\beta} + \frac{\alpha}{\gamma}\right)    \left(   \ln \theta + \frac{1}{\theta} - 1  \right) \label{eq:car_number_ue}
\end{eqnarray}
As shown above, we obtain
\begin{eqnarray} \label{eq:ue_Od_pattern_case1}
O_F(t) =
\begin{cases}
\frac{\beta}{\lambda} (t-t_s^F) & \text{if   }  t_s^F \leq t < t_s^c\\
\frac{1}{\lambda} \left( \Delta F - \alpha \Delta T_f \left( 1 + \frac{\beta}{\alpha T_f^c} \left( t - t_s^c \right) \right) \right)  & \text{if   } t_s^c \leq t <  t_{ee}^F \\
\frac{1}{\lambda} \left( \Delta F - \alpha \Delta T_f \left( 1 + \frac{\gamma}{\alpha T_f^c} \left(t_e^c - t\right) \right) \right) & \text{if   }  t_{ls}^d \leq t < t_e^c \\
\frac{\beta}{\lambda} (t-t_s^F) & \text{if   } t_e^c \leq t < t_e^d
\end{cases}
\end{eqnarray}
Therefore, $O_F(t_s^c) = O_F(t_e^c) =\frac{1}{\lambda} \left( \Delta F - \alpha \Delta T_f \right)$, and the FRT rush hour start and end times are 
\begin{eqnarray}
t_s^c - t_s^F = \frac{1}{\beta} \left( \Delta F - \alpha \Delta T_f \right) \\
 t_e^F - t_e^c = \frac{1}{\gamma} \left( \Delta F - \alpha \Delta T_f \right)
\end{eqnarray}
We can also derive the FRT speed from Eqs.~(\ref{eq:bus_speed_bi}), (\ref{eq:car_speed_bi}), and (\ref{eq:car_evo_ue}) as
\begin{eqnarray}
v_F(t) = 
\begin{cases}
mv_f' & \text{if   }  t < t_s^c \\ 
mv_f' \left( 1 + \frac{\beta}{\alpha T_f^c} (t-t_s^c) \right)^{-1} & \text{if   } t_s^c \leq t < t^*\\
mv_f' \left( 1 + \frac{\gamma}{\alpha T_f^c} (t_e^c-t) \right)^{-1} & \text{if   } t^* \leq t < t_e^c \\
mv_f' & \text{if   }  t_e^c \leq t
\end{cases}
\end{eqnarray}
By integrating the arrival rates in these time windows, we determine the total number of FRT commuters at equilibrium as
\begin{eqnarray}
N_F &=& \int_{t_s^F}^{t_e^F} \frac{O_F(s)n_F^c v_F(s)}{L_F} ds \nonumber \\ 
&=&   \left( \frac{1}{\beta} + \frac{1}{\gamma}\right) \left(   \frac{n_F^c}{2\lambda}  \frac{1}{T_f^F} \left( \Delta F - \alpha \Delta T_f \right)^2 +  n_F^c \frac{\alpha}{\lambda} \frac{T_f^c}{T_f^F}   \left( \Delta F \ln  \frac{\Delta F }{\alpha \Delta T_f} - \left(\Delta F -  \alpha \Delta T_f  \right) \right) \right)
\label{eq:FRT_number_ue}
\end{eqnarray}
From Eqs.~(\ref{eq:car_number_ue}) and (\ref{eq:FRT_number_ue}), we have
\begin{eqnarray} \label{eq:total_number_ue}
N =  \left( \frac{1}{\beta} + \frac{1}{\gamma}\right) \left( \alpha n_j'  \left(   \ln \theta + \frac{1}{\theta} - 1  \right) +  \frac{n_F^c}{2\lambda}  \frac{1}{T_f^F} \left( \Delta F - \alpha \Delta T_f \right)^2 +  n_F^c \frac{\alpha}{\lambda} \frac{T_f^c}{T_f^F}    \left( \Delta F \ln  \frac{\Delta F }{\alpha \Delta T_f} - \left(\Delta F -  \alpha \Delta T_f  \right) \right) \right)
\end{eqnarray}
where $\theta \equiv \frac{ (c^* - F_c)  v_f^{'}}{\alpha L_c}$. As all variables except the equilibrium cost $c^*$ are exogenous, the equilibrium cost can be solved numerically as in \cite{small2003hypercongestion}. 

For the case where FRT is continuously used ( $ \frac{\Delta F}{\alpha\Delta T_f }  \geq \theta$), by replacing $t_{ee}^F$ and $t_{sl}^F$ with $t^*$ in Eq.~(\ref{eq:ue_Od_pattern_case1}), we obtain
\begin{eqnarray}
O_F(t) =
\begin{cases}
\frac{\beta}{\lambda} (t-t_s^F) & \text{if   }  t_s^F \leq t < t_s^c\\
\frac{1}{\lambda} \left( \Delta F - \alpha \Delta T_f \left( 1 + \frac{\beta}{\alpha T_f^c} \left( t - t_s^c \right) \right) \right)  & \text{if   } t_s^c \leq t <  t^* \\
\frac{1}{\lambda} \left( \Delta F - \alpha \Delta T_f \left( 1 + \frac{\gamma}{\alpha T_f^c} \left(t_e^c - t\right) \right) \right) & \text{if   }  t^* \leq t < t_e^c \\
\frac{\beta}{\lambda} (t-t_s^F) & \text{if   } t_e^c \leq t < t_e^F
\end{cases}
\end{eqnarray}
Thus, the total number of FRT commuters is
\begin{eqnarray}
N_F &=& \int_{t_s^F}^{t_e^F} \frac{O_F(s)n_F^c v_F(s)}{L_F} ds \nonumber \\ 
&=&   \left( \frac{1}{\beta} + \frac{1}{\gamma}\right) \left(   \frac{n_F^c}{2\lambda}  \frac{1}{T_f^F} \left( \Delta F - \alpha \Delta T_f \right)^2 +  n_F^c \frac{\alpha}{\lambda} \frac{T_f^c}{T_f^F}   \left( \Delta F \ln \theta  + \alpha \Delta T_f  \left( \theta - 1 \right) \right) \right)
\label{eq:FRT_number_continue_ue}
\end{eqnarray}
As the calculation of the number of car commuters is the same as Eq.~(\ref{eq:car_number_ue}),  we have
\begin{eqnarray} \label{eq:total_number_ue_continous}
N =  \left( \frac{1}{\beta} + \frac{1}{\gamma}\right) \left( \alpha n_j'  \left(   \ln \theta + \frac{1}{\theta} - 1  \right) +  \frac{n_F^c}{2\lambda}  \frac{1}{T_f^F} \left( \Delta F - \alpha \Delta T_f \right)^2 +  n_F^c \frac{\alpha}{\lambda} \frac{T_f^c}{T_f^F}   \left( \Delta F \ln \theta  + \alpha \Delta T_f  \left( \theta   - 1 \right) \right)  \right)
\end{eqnarray}
where $\theta \equiv \frac{ (c^* - F_c)  v_f^{'}}{\alpha L_c}$. As in Eq.~(\ref{eq:total_number_ue}), all parameters except the equilibrium cost $c^*$ are exogenous. Therefore, the equilibrium cost can be solved numerically  with respect to the equilibrium cost $c^*$.  

In summary, the equilibrium cost can be solved numerically,  regardless of FRT usage, and  has the following property:

\begin{prop} \label{prop:uniqueness_ue}
The equilibrium cost is uniquely determined, regardless of the existence of  a time window when FRT is not used during the FRT rush hour. 
\end{prop}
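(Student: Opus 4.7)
The plan is to recast the equilibrium condition as a single equation for the auxiliary variable $\theta = (c^* - F_c) v_f'/(\alpha L_c)$ and then argue uniqueness by monotonicity. Since $\theta$ is a strictly increasing affine function of $c^*$, uniqueness of $c^*$ is equivalent to uniqueness of $\theta$. I would then view the right-hand side of either (\ref{eq:total_number_ue}) or (\ref{eq:total_number_ue_continous}) as a function $N(\theta)$ on the relevant subinterval of $(1,\infty)$, where Proposition~\ref{prop:FRT_equi_pattern} prescribes which formula is active according to whether $\theta$ exceeds $\Delta F/(\alpha \Delta T_f)$.

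For the continuous-FRT regime $\theta \in (1,\Delta F/(\alpha\Delta T_f)]$, differentiating the right-hand side of (\ref{eq:total_number_ue_continous}) with respect to $\theta$ yields
\begin{equation*}
N'(\theta) = \left(\frac{1}{\beta} + \frac{1}{\gamma}\right)\left(\alpha n_j'\frac{\theta-1}{\theta^2} + n_F^c\, \frac{\alpha}{\lambda}\,\frac{T_f^c}{T_f^F}\left(\frac{\Delta F}{\theta} + \alpha\Delta T_f\right)\right),
\end{equation*}
which is strictly positive since $\theta > 1$ by Assumption~\ref{ass:ass1}. For the gap regime $\theta > \Delta F/(\alpha\Delta T_f)$, only the Small-type term $\alpha n_j'(\ln\theta + 1/\theta - 1)$ of (\ref{eq:total_number_ue}) depends on $\theta$; the remaining FRT-related terms are constants fixed by the exogenous cost and free-flow travel time gaps. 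Its derivative $\alpha n_j'(\theta-1)/\theta^2$ is again strictly positive on $(1,\infty)$, so $N(\cdot)$ is strictly increasing on this piece as well.

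Finally I would verify continuity of $N(\cdot)$ at the switching value $\theta = \Delta F/(\alpha\Delta T_f)$: at this parameter the time window $[t_{ee}^F, t_{sl}^F]$ over which FRT is idle collapses to the single point $t^*$, so the FRT integral in the two closed-form expressions must coincide in the limit, and the Small-type car term is the same in both. Once this one-sided matching is checked, $N(\cdot)$ is a strictly increasing continuous function on $(1,\infty)$, hence injective, so the equation $N = N(\theta)$ admits at most one solution and the equilibrium cost $c^*$ is uniquely determined.

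The main obstacle I anticipate is the bookkeeping at the regime boundary---showing rigorously that the two piecewise closed-form expressions for $N_F$ glue together continuously at $\theta = \Delta F/(\alpha\Delta T_f)$, rather than exhibiting a hidden jump that could admit two preimages for a single $N$. Once continuity and strict monotonicity on each piece are in hand, however, injectivity of $N(\cdot)$ over $(1,\infty)$, and hence uniqueness of $c^*$, follows immediately.
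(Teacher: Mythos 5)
Your argument is essentially the paper's own proof (Appendix A): regard the right-hand sides of Eqs.~(\ref{eq:total_number_ue}) and (\ref{eq:total_number_ue_continous}) as a function of $\theta$ and show its derivative is strictly positive for $\theta>1$, which is exactly what the paper does, case by case. The one genuine addition is your gluing step: the paper proves uniqueness \emph{within} each regime but never explicitly rules out one candidate solution with $\theta\le\Delta F/(\alpha\Delta T_f)$ and another with $\theta>\Delta F/(\alpha\Delta T_f)$; your continuity-plus-global-monotonicity argument closes that (minor) gap. Two cautions on carrying it out. First, do not take the printed Eq.~(\ref{eq:total_number_ue_continous}) at face value: evaluating the FRT arrival integral directly on the earliness side gives $n_F^c \frac{\alpha}{\lambda}\frac{T_f^c}{T_f^F}\bigl(\Delta F\ln\theta-\alpha\Delta T_f(\theta-1)\bigr)$, i.e.\ a minus sign where the paper prints a plus; with the minus sign the two closed forms do coincide at $\theta=\Delta F/(\alpha\Delta T_f)$ (both reduce to $\Delta F\ln\frac{\Delta F}{\alpha\Delta T_f}-(\Delta F-\alpha\Delta T_f)$), so the continuity you flagged as the main obstacle does hold, whereas with the printed plus sign it would not. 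Second, with the corrected sign your derivative on the continuous piece becomes $\left(\frac{1}{\beta}+\frac{1}{\gamma}\right)\bigl(\alpha n_j'\frac{\theta-1}{\theta^2}+n_F^c\frac{\alpha}{\lambda}\frac{T_f^c}{T_f^F}\bigl(\frac{\Delta F}{\theta}-\alpha\Delta T_f\bigr)\bigr)$, which is still strictly positive precisely because $\theta\le\Delta F/(\alpha\Delta T_f)$ on that piece; hence strict monotonicity on each piece, continuity at the switch, and therefore uniqueness of $c^*$ all go through.
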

\begin{proof}
See Appendix A. 
\end{proof}

FRT use is restricted due to  hypercongestion and discomfort, particularly around the peak of  rush hour. A large share of transit services, such as FRT, is a key component of efficient and,
sustainable transportation systems. Therefore, in the next section, we discuss the management strategy of  perimeter control with  transit priority to promote FRT use.

\section{Perimeter control with transit priority} \label{seq:perimeter}
During perimeter control with transit priority, the inflow rate to the CBD is restricted to protect the zone from hypercongestion. Therefore, a queue will develop outside the perimeter boundary if the arrival rate of cars at the boundary exceeds the inflow rate to the CBD zone, whereas FRT vehicles can pass the perimeter boundary freely by using dedicated lanes at the perimeter (Fig. \ref{fig:model_pc}). There are certain strategies for perimeter control with transit priority. One example is  similar to the strategy proposed by \cite{haitao2019providing}. In their strategy, a pre-signal is installed at the upstream point of the main traffic signal to control the inflow into the CBD zone, and the lanes are separated between cars and FRT vehicles at the upstream link of the pre-signal point as depicted in Fig. \ref{fig:idea}. The strategy is that  the pre-signal for FRT  stays green, whereas that for cars turns from green to red for perimeter control to maintain the maximum throughput in the CBD zone.  Thus, the FRT vehicles can pass the perimeter control freely regardless of the existence of boundary queues of cars.

\begin{figure}[t]
\centering
\begin{minipage}{0.45\columnwidth}
\includegraphics[width=\columnwidth]{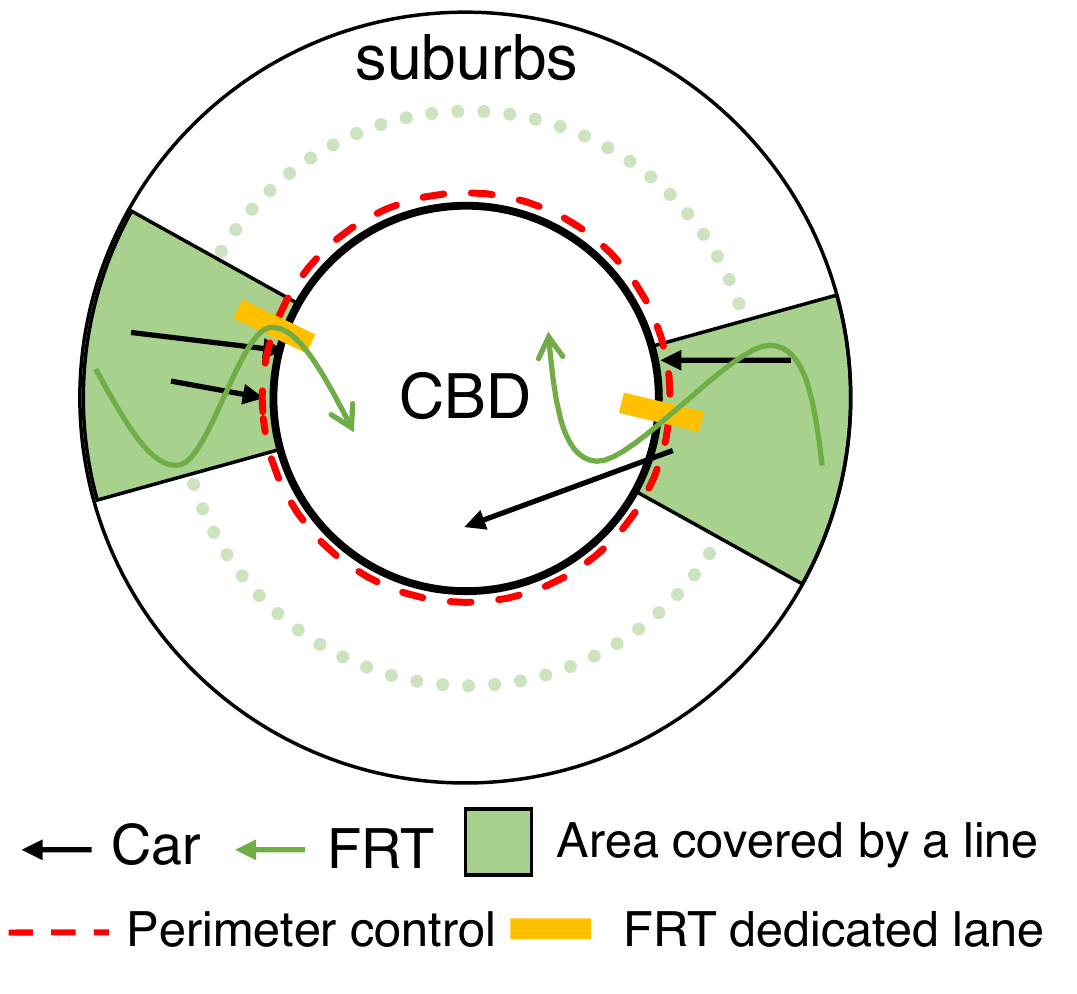}
\caption{Model structure with perimeter control}
\label{fig:model_pc}
\end{minipage}
\begin{minipage}{0.45\columnwidth}
\includegraphics[width=\columnwidth]{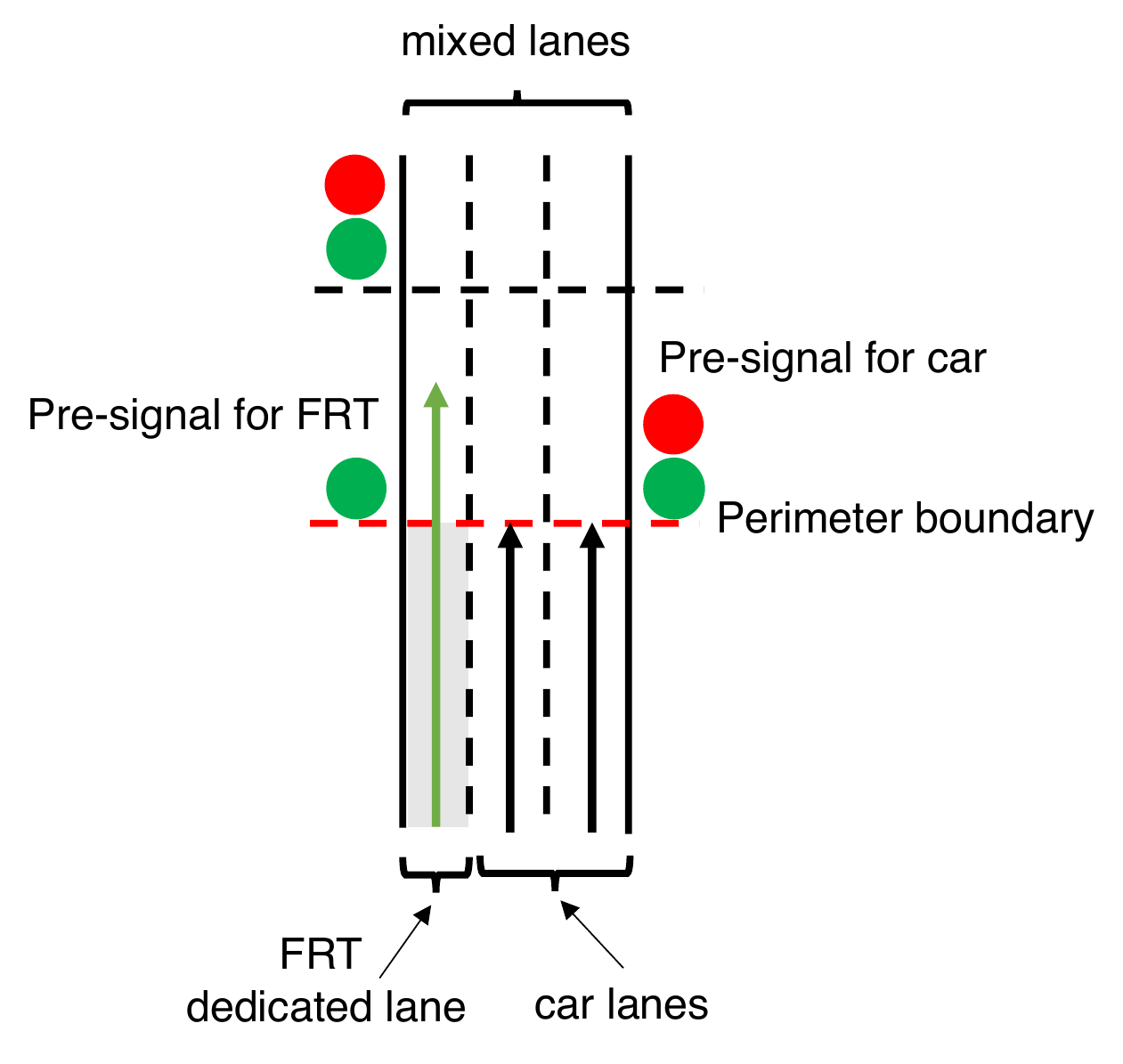}
\caption{One example of the strategies of perimeter control with transit priority}
\label{fig:idea}
\end{minipage}

\end{figure}

\subsection{Queuing dynamics at perimeter boundaries}
Perimeter control aims to maintain the maximum throughput in the CBD zone \citep{geroliminis2012optimal}.  From Eq.~(\ref{eq:car_speed_bi}) we can derive the critical car accumulation, $n_c^{cr}$, where the throughput is maximized ($d\left(n_c(t) v_c(t) \right)/dn=0$).
\begin{eqnarray}
n_c^{cr}=\frac{ n_j'}{2}
\label{eq:critical_density}
\end{eqnarray}
Eq.~(\ref{eq:critical_density}) indicates that the critical accumulation depends not only on car accumulation but also on  FRT accumulation, since $n_j' = n_j \left( 1- \eta n_F^c / n_j \right)$. To prevent  accumulation from exceeding the critical value, the inflow to the CBD zone is restricted at the perimeter boundary by  gating or traffic signals once the accumulation reaches the critical density. The inflow rate to the CBD zone is set to the exit rate from the  zone to maintain the critical accumulation, which means that the CBD traffic state is steady during perimeter control ($\dot{n_c}=0$ and $n_c(t)= n_j'/2$  if $ t_s^p \leq  t < t_e^p $ where $t_s^p$ and $t_e^p$ are the perimeter control start and end times, respectively) . Thus, the inflow rate during  perimeter control is determined from Eqs.~(\ref{eq:car_speed_bi})-(\ref{eq:trip_completion_rate}) and (\ref{eq:critical_density}) by
\begin{eqnarray} \label{eq:bottleneck_capacity}
I_c^p = \frac{ n_j' v_f'}{4L_c}
\end{eqnarray}
Therefore, the control scheme at time $t$ can be written by
\begin{eqnarray} \label{eq:contro_scheme}
I_c(t) =
\begin{cases}
I_c^p  & \text{if   } n_c(t) = \frac{ n_j'}{2} \\
 A_b(t) & \text{if   } n_c(t) < \frac{ n_j'}{2} 
\end{cases}
\end{eqnarray}
where $A_b(t)$ is the car arrival rate at the perimeter boundary at time $t$. If car accumulation is below the critical level, there is no restriction; all vehicles at the boundary can enter the CBD zone. Once accumulation reaches the critical accumulation, the inflow rate is restricted to $I_c^p$. This inflow rate  can be regarded as the bottleneck capacity of the perimeter boundary. Thus, a queue will develop outward from the perimeter boundary if the demand exceeds the inflow rate $I_c^p$. We model the queuing dynamic as a point queue and assume first-arrived-first-in property. Therefore, the waiting time of a commuter who arrives at their destination at time $t$, $T_b(t)$, is 
\begin{eqnarray}
T_b(t) = \frac{q(t)}{I_c^p} \label{eq:waiting_time_perimeter}
\end{eqnarray}
where $q(t)$ is the number of cars queued at the perimeter boundary when a commuter who arrives at their destination at time $t$ reaches the boundary. 

\subsection{Travel cost during perimeter control}
Given the queue dynamics during perimeter control, the car travel cost incurred by a commuter who arrives at their destination at time $t$ is
\begin{eqnarray}
C_c(t) = 
\begin{cases}
\alpha  T_c(t) + \beta \left( t^* - t \right) + F_c   & \text{if }t < t_s^p \\
\alpha \left( \frac{L_c}{ v_f'/2} + T_b\left( t  \right) \right) + \beta \left( t^* - t \right) + F_c  & \text{if }t_s^p\leq t < t^* \\
\alpha  \left(  \frac{L_c}{ v_f'/2} + T_b\left( t \right) \right)  + \gamma \left(  t - t^*  \right)  + F_c& \text{if } t^* \leq  t  < t_e^p \\
\alpha   T_c(t) + \gamma \left(  t - t^*  \right) + F_c & \text{if }  t_e^p \leq  t
\end{cases}
\label{eq:gc_perimeter_car}
\end{eqnarray}
Before and after perimeter control is implemented ( $t<t_s^p$ and $t_e^p \leq t$), the travel cost function is the same as Eq. (\ref{eq:gc_car}). During perimeter control, the CBD travel time is given by $\frac{L_c}{v_f'/2}$ since the trip completion rate is maintained at the maximum. In addition to the travel time cost , the cost of waiting at the perimeter boundary is incurred, as  given by Eq. (\ref{eq:waiting_time_perimeter}). 

As FRT vehicles can pass the perimeter boundary freely by the dedicated lanes, there is no waiting time at the perimeter boundary. Therefore, the FRT travel cost incurred by a commuter who arrives at their destination at time $t$ is
\begin{eqnarray}
C_F(t) = 
\begin{cases}
\alpha  T_F(t)  + \beta \left( t^* - t \right) + \lambda O_F(t) +F_F   & \text{if }t < t_s^p \\
\alpha \frac{L_F}{ mv_f'/2}   + \beta \left( t^* - t \right) + \lambda O_F(t)+ F_F  & \text{if }t_s^p\leq t < t^* \\
\alpha  \frac{L_d}{ mv_f'/2}   + \gamma \left(  t - t^*  \right)  + \lambda O_F(t)+ F_F& \text{if } t^* \leq  t  < t_e^p \\
\alpha   T_F(t)  + \gamma \left(  t - t^*  \right) + \lambda O_F(t)+ F_F & \text{if }  t_e^p \leq  t 
\end{cases}
\label{eq:gc_perimeter_bus}
\end{eqnarray}

\section{Equilibrium under perimeter control}\label{sec:equilibrium_perimeter}
No commuter can reduce their travel cost by changing their departure time and transportation mode at equilibrium during perimeter control.  Thus, the equilibrium conditions are  
\begin{subequations}
\begin{eqnarray}
&&\begin{cases}
C_c(t) = c^*_p & \text{if   } n(t) > 0 \\
C_c(t) \geq c^*_p & \text{if   } n(t) = 0 \\
\end{cases}  \; \; \;  \forall t \in \mathbb{R} \label{eq:condition_car_perimeter} \\ 
&&\begin{cases}
C_F(t) = c^*_p & \text{if   } O_F(t) > 0 \\
C_F(t) \geq c^*_p & \text{if   } O_F(t) = 0 \\
\end{cases}  \; \; \;  \forall t \in \mathbb{R} \label{eq:condition_FRT_perimeter}  \\ 
&&\begin{cases}
n_c(t) = \frac{ n_j' }{2} & \text{if   } q(t) > 0 \\
n_c(t) \leq \frac{ n_j' }{2} & \text{if   } q(t) = 0 
\end{cases}  \; \; \;  \forall t \in \mathbb{R} \label{eq:condition_bottleneck_perimeter}  \\ &&\int_{t \in \mathbb{R} } \frac{n_c(t)v_c(t)}{L_c} dt + \int_{t \in \mathbb{R} } \frac{O_F(t)n_F^c v_F(t)}{L_F} dt = N 
 \label{eq:condition_N_perimeter} 
\end{eqnarray}
\end{subequations}
where $c^*_p$ is the equilibrium cost during perimeter control. Conditions~(\ref{eq:condition_car_perimeter}), (\ref{eq:condition_FRT_perimeter}), and (\ref{eq:condition_bottleneck_perimeter}) are the same as Conditions (\ref{eq:UE_con_car}), (\ref{eq:UE_con_bus}) and (\ref{eq:UE_cons}), respectively. Condition (\ref{eq:condition_bottleneck_perimeter}) reflects the restriction of inflow to the CBD zone during perimeter control; car accumulation is at the critical accumulation if there is a queue at the perimeter boundary. Otherwise, car accumulation is lower than the critical level. Note that the difference in the conditions between user equilibrium and equilibrium under perimeter control  is only Condition (\ref{eq:condition_bottleneck_perimeter}). Moreover, dynamic user equilibrium under perimeter control is also based on \cite{wardrop1952road}'s first principle. 

\subsection{Equilibrium properties}
This study aims to analyse the cases where hypercongestion will occur at user equilibrium. Perimeter control can be implemented under the following condition. 
\begin{ass} \label{ass:ass2}
$\theta > 2 $
\end{ass}

From Condition (\ref{eq:condition_car_perimeter}) and Eq.~(\ref{eq:gc_perimeter_car}), the queue at the perimeter boundary during perimeter control has the same property as the standard bottleneck model. 
\begin{prop} \label{prop:queue}
Suppose Assumption \ref{ass:ass2}.  During perimeter control, a queue  develops at the perimeter boundary, and its length increases toward the desired arrival time. 
\end{prop}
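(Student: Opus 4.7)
The plan is to mirror the classical Vickrey bottleneck argument, adapted here to the bathtub setting: first verify that perimeter control actually activates so a queue truly forms, then use the equilibrium cost equality to read off the sign of $\dot{q}(t)$.

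First, I would establish queue formation. Under Assumption~\ref{ass:ass2}, the uncontrolled-equilibrium car accumulation from Eq.~(\ref{eq:car_evo_ue}), evaluated at $t=t^*$, equals $n_c(t^*) = n_j'(1-1/\theta) > n_j'/2 = n_c^{cr}$, so without perimeter control the CBD enters hypercongestion. Under perimeter control, once $n_c(t)$ reaches $n_c^{cr}$ at some $t_s^p<t^*$, the control scheme~(\ref{eq:contro_scheme}) caps inflow at $I_c^p$ while the boundary arrival rate strictly exceeds $I_c^p$; hence cars start accumulating outside the perimeter.

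Second, I would show monotone growth of $q(t)$ on $[t_s^p,t^*]$. During perimeter control, $n_c(t)=n_j'/2$ implies $v_c(t)=v_f'/2$ by Eq.~(\ref{eq:car_speed_bi}), so the in-CBD travel time is the constant $2L_c/v_f'$. Substituting into the earliness branch of Eq.~(\ref{eq:gc_perimeter_car}) and invoking Condition~(\ref{eq:condition_car_perimeter}) yields
\begin{equation*}
\alpha\!\left(\frac{2L_c}{v_f'}+T_b(t)\right)+\beta(t^*-t)+F_c = c_p^*,
\end{equation*}
and differentiating in $t$ gives $\dot T_b(t)=\beta/\alpha>0$. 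Through Eq.~(\ref{eq:waiting_time_perimeter}) the queue length is $q(t)=I_c^p\, T_b(t)$, so $\dot q(t)=(\beta/\alpha)\,I_c^p>0$. Since $q(t_s^p)=0$, the queue grows strictly and linearly throughout $[t_s^p,t^*]$, establishing the ``increases toward the desired arrival time'' claim.

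The main obstacle I anticipate is justifying queue formation rigorously rather than by mere appeal to the uncontrolled equilibrium — specifically, ruling out a degenerate equilibrium in which the arrival rate at the boundary exactly matches $I_c^p$ and no queue ever forms. The resolution is that the equilibrium condition $C_c(t)=c_p^*$ must hold for every $t$ with positive car flow; if $T_b(t)\equiv 0$ on an interval, the cost in Eq.~(\ref{eq:gc_perimeter_car}) would vary linearly in $t$, forcing all car commuters to bunch at a single instant, which is infeasible given positive demand $N_c>0$. Hence $T_b$ must be positive on a nonempty interval, and continuity together with $T_b(t_s^p)=0$ delivers the advertised queue dynamics; the symmetric calculation $\dot T_b(t)=-\gamma/\alpha$ after $t^*$ confirms that the queue peaks exactly at the desired arrival time.
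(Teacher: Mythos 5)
Your proof is correct and follows essentially the same route as the paper: invoke the equilibrium condition $C_c(t)=c_p^*$ with the constant in-CBD travel time $2L_c/v_f'$ during control, differentiate to get $\dot T_b(t)=\beta/\alpha$ before $t^*$ and $\dot T_b(t)=-\gamma/\alpha$ after, and translate this into queue growth via $q(t)=I_c^p\,T_b(t)$ (your retention of the factor $I_c^p$ is actually slightly more careful than the paper's statement $\dot q = \beta/\alpha$). Your preliminary verification that the control binds under Assumption~2 (since $n_c(t^*)=n_j'(1-1/\theta)>n_j'/2$ at the uncontrolled equilibrium) and the ruling-out of a zero-queue equilibrium are additions the paper leaves implicit, and they do not change the core argument.
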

\begin{proof}
Condition (\ref{eq:condition_car_perimeter}) and Eq.~(\ref{eq:gc_perimeter_car}) yield
\begin{eqnarray}
\begin{cases}
\alpha \frac{d T_b(t)}{dt}- \beta = 0  & \text{if }t_s^p\leq t < t^*\\
 \\
\alpha \frac{d T_b(t)}{dt} + \gamma = 0  & \text{if }t^* \leq t < t_e^p
\end{cases}
\end{eqnarray}
Combining this with Eq.~(\ref{eq:waiting_time_perimeter}), where $I_c(t)$ is constant (from Eq.~(\ref{eq:bottleneck_capacity})), we obtain 
\begin{eqnarray}
\begin{cases}
\frac{dq(t)}{dt}= \frac{\beta}{\alpha}& \text{if }t_s^p\leq t < t^* \\
\frac{dq(t)}{dt} = - \frac{\gamma}{\alpha}& \text{if }t^* \leq t < t_e^p
\end{cases}
\end{eqnarray}
which completes the proof.
\end{proof} 
This proposition indicates that some commuters prefer to wait at the perimeter boundary rather than change their departure times or transportation modes.  Since the arrival rate in the CBD is maintained at its maximum during perimeter control and the schedule delay cost decreases toward the desired arrival time, commuters can travel at the equilibrium cost,  even if they have to wait at the perimeter boundary. This mechanism is completely the same as that of the standard bottleneck model \citep{arnott1993structural}.

We define $\theta_p\equiv\frac{(c_p^*-F_c)v_f'}{\alpha L_c}$. Although the effectiveness of the combination of perimeter control and transit priority has been reported (e.g., \cite{haitao2019providing}), we present the following proposition. 

\begin{prop} \label{prop:FRT_not_use_perimeter}
Suppose Assumption \ref{ass:ass2}. The equilibrium with perimeter control has the following properties. 
\begin{enumerate}
    \item If $\Delta F \geq 2 \alpha \Delta T_f$, then FRT is used from the beginning to the end during the FRT rush hour and the number of FRT passengers increases toward the desired arrival time during perimeter control. 
    \item If $\Delta F < 2 \alpha \Delta T_f$ and $\theta_p > \frac{2 \alpha T_f^F - \Delta F}{\alpha T_f^c}$, then FRT is used; however, there will be a time window wherein FRT is not used during  perimeter control. Additionally, the number of FRT passengers increases toward the desired arrival time during perimeter control.  
    \item If $\theta_p \leq \frac{2 \alpha T_f^F - \Delta F}{\alpha T_f^c}$, then FRT is not used during perimeter control. 
\end{enumerate}
\end{prop}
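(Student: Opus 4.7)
The plan is to exploit the fact that during perimeter control the CBD accumulation is pinned at the critical value $n_j'/2$, so that the in-CBD travel times are frozen at $2T_f^c$ for cars and $2T_f^F$ for FRT. This reduces the two-mode equilibrium requirement to an algebraic identity relating the boundary waiting time $T_b(t)$ and the FRT occupancy $O_F(t)$. First I would assume both modes are used and equate $C_c(t) = C_F(t) = c_p^*$ using Eqs.~(\ref{eq:gc_perimeter_car}) and (\ref{eq:gc_perimeter_bus}). The schedule-delay terms cancel and I obtain, for $t \in [t_s^p, t_e^p]$,
\begin{equation*}
\lambda O_F(t) = \Delta F - 2\alpha \Delta T_f + \alpha T_b(t).
\end{equation*}

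Next I would invoke Proposition~\ref{prop:queue}, which gives $\dot{T_b} = \beta/\alpha$ for $t<t^*$ and $\dot{T_b} = -\gamma/\alpha$ for $t>t^*$, together with the bottleneck boundary conditions $T_b(t_s^p) = T_b(t_e^p) = 0$ and the peak value $T_b(t^*) = (\theta_p-2)T_f^c$; the last equality follows from writing $c_p^* = 2\alpha T_f^c + \alpha T_b(t^*) + F_c$ and applying the definition of $\theta_p$. Two observations follow immediately from the displayed identity: $O_F(t)$ increases toward $t^*$ on both the earliness and lateness sides (since $T_b$ does), and its value at both endpoints of perimeter control equals $(\Delta F - 2\alpha \Delta T_f)/\lambda$. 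The three cases of the proposition then reduce to checking the sign of this endpoint value and the peak.

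For Case~1 ($\Delta F \geq 2\alpha \Delta T_f$), the endpoint value is non-negative, so the two-mode equilibrium is consistent throughout $[t_s^p, t_e^p]$ and $O_F$ increases toward $t^*$. For Cases~2 and~3 ($\Delta F < 2\alpha \Delta T_f$), the endpoint value is negative, so $O_F$ must vanish on an initial sub-interval of $[t_s^p, t^*]$; on that sub-interval only cars are used, the car equilibrium $c_p^* = C_c(t)$ remains active, and the required inequality $C_F(t) \geq c_p^*$ reduces to $\alpha T_b(t) \leq 2\alpha \Delta T_f - \Delta F$, which holds while $T_b$ is small. The switching instant at which $O_F$ leaves zero solves $\alpha T_b(t) = 2\alpha \Delta T_f - \Delta F$; it lies inside $[t_s^p, t^*]$ precisely when the peak $T_b(t^*)$ exceeds this threshold, i.e.\ $(\theta_p-2)T_f^c > (2\alpha \Delta T_f - \Delta F)/\alpha$, which rearranges to $\theta_p > (2\alpha T_f^F - \Delta F)/(\alpha T_f^c)$ (Case~2); otherwise FRT is never used during perimeter control (Case~3). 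The lateness side is symmetric and produces the same threshold.

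The main obstacle I anticipate is handling the transition between the no-FRT and two-mode regimes cleanly: one must verify that $O_F(t)$ is continuous across the switching instant (it equals zero there by construction) and that the car equilibrium condition stays active on both sides of the switch while the FRT condition transitions smoothly from strict inequality to equality. A secondary technical point is that although the queue profile is symmetric about $t^*$ in the variable $T_b$, so that the same threshold $T_b = (2\alpha \Delta T_f - \Delta F)/\alpha$ governs both the earliness and lateness transitions, the switching times themselves are \emph{asymmetric} in $t$ because $\beta \neq \gamma$; this asymmetry does not affect the discriminating condition because only the peak value $T_b(t^*)$ appears there.
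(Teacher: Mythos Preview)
Your argument is correct and reaches the right thresholds, but it is organised differently from the paper. The paper never writes down your central identity $\lambda O_F(t)=\Delta F-2\alpha\Delta T_f+\alpha T_b(t)$. Instead it (i) differentiates $C_F(t)=c_p^*$ directly to obtain $\dot O_F=\beta/\lambda$ (resp.\ $-\gamma/\lambda$) during perimeter control, and (ii) for the Case~1 threshold, compares the perimeter start time $t_s^p$ (derived from the pre-control car dynamics, $t_s^p-t_s^c=\alpha T_f^c/\beta$) with the time $t_{ee}^F$ from Proposition~\ref{prop:FRT_equi_pattern} at which FRT would otherwise drop out; the inequality $t_{ee}^F\ge t_s^p$ rearranges to $\Delta F\ge 2\alpha\Delta T_f$. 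For Cases~2--3 the paper sets $O_F(t_{sp}^F)=0$, evaluates $C_F(t_{sp}^F)=c_p^*$, and asks whether $t_{sp}^F<t^*$. Your route via the peak value $T_b(t^*)=(\theta_p-2)T_f^c$ is a cleaner way to obtain the same discriminant, and it makes the monotonicity of $O_F$ toward $t^*$ immediate once Proposition~\ref{prop:queue} is in hand.

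One point you should patch: Proposition~\ref{prop:FRT_not_use_perimeter}.1 asserts that FRT is used throughout the \emph{entire} FRT rush hour $[t_s^F,t_e^F]$, not merely on $[t_s^p,t_e^p]$. Your argument only covers the perimeter-control interval. To close this you need a sentence on the pre-control window $[t_s^c,t_s^p)$: there the dynamics coincide with those of Proposition~\ref{prop:FRT_equi_pattern} (same car-accumulation profile, hence same $O_F$ formula (\ref{eq:od_earliness_ue})), and since $O_F$ is continuous with $O_F(t_s^p)=(\Delta F-2\alpha\Delta T_f)/\lambda\ge 0$ while $O_F$ is decreasing on $[t_s^c,t_s^p)$, it stays positive there; equivalently, $t_{ee}^F\ge t_s^p$. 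The paper handles precisely this by the $t_{ee}^F$ vs.\ $t_s^p$ comparison.
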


\begin{proof}
Condition (\ref{eq:condition_car_perimeter}) yields Eq. (\ref{eq:car_evo_ue}) for car accumulation before and after perimeter control. Thus, the  start and end times of perimeter control ($t_s^p$ and $t_e^p$, respectively; $n_c(t_s^p)=n(t_e^p)= n_j'/2 $) are
\begin{eqnarray}
t_s^p - t_s^{c} = \frac{\alpha T_f^c}{\beta} \label{eq:time_diff_p_nc_b} \\
t_e^{c} - t_e^p = \frac{\alpha T_f^c}{\gamma} \label{eq:time_diff_p_nc_a}
\end{eqnarray}
Eqs.~(\ref{eq:time_FRT_not_used_earliness}) and (\ref{eq:time_FRT_not_used_lateness}) give the time window when FRT is not used during the FRT rush hour. When the start of perimeter control is earlier than the start of the above time window (i.e., $t_{ee}^F \geq t_s^p$ ), combining Eq. (\ref{eq:time_diff_p_nc_b}) with Eq. (\ref{eq:time_FRT_not_used_earliness}) gives 
\begin{eqnarray}
\Delta F \geq 2 \alpha \Delta T_f
\end{eqnarray}
Note that $2 \alpha \Delta T_f$ is the difference of the car and FRT travel time costs at the critical car accumulation. 
According to Eq.~(\ref{eq:gc_perimeter_bus}) and Condition~(\ref{eq:condition_FRT_perimeter}), $\dot{O_F}(t) = \beta / \lambda$ if $t_s^p \leq t \leq t^*$ and $\dot{O_F}(t) = -  \gamma / \lambda$ if $t^* < t \leq t_e^p$. Thus, the number of FRT passengers increases toward the desired arrival time during perimeter control. Therefore, if $\Delta F \geq 2 \alpha \Delta T_f$, FRT is used throughout the FRT rush hour,  and the number of FRT passengers increases toward the desired arrival time during perimeter control.

Let $[t_{sp}^{F}, t_{ep}^{F}]$ be the time window between the start and end times when FRT is used during perimeter control. If $t_{sp}^{F} < t^*$ (i.e., $t_{ep}^{F} > t^*$), there is a time window when FRT is used during perimeter control. Since $O_F(t_{sp}^{F}) = 0 $, Eq. (\ref{eq:gc_perimeter_bus}) yields
\begin{eqnarray}\label{eq:condition_FRT_use_periemeter}
\alpha \frac{L_F}{ mv_f'/2}   + \beta \left( t^* - t_{sp}^{F} \right) +  F_F = c_p^*
\end{eqnarray}
From Eq. (\ref{eq:condition_FRT_use_periemeter}), $t_{sp}^{F} < t^*$ can be rewritten as
\begin{eqnarray}
\theta_p > \frac{2\alpha T_f^F - \Delta F}{\alpha T_f^c}
\end{eqnarray}
Therefore, even when  $\Delta F < 2 \alpha \Delta T_f$, there is a time window when FRT is used during perimeter control if $\theta_p > \frac{2\alpha T_f^F - \Delta F}{\alpha T_f^c}$. Otherwise, FRT is not used during perimeter control. 

\end{proof} 

Proposition \ref{prop:FRT_not_use_perimeter}.1 indicates that FRT is used from the beginning to the end during  the FRT rush hour if  the fixed cost difference ($\Delta F$) is larger than the travel time cost difference during perimeter control at the critical car accumulation ($2 \alpha \Delta T_f$). Moreover,  the average number of passengers per FRT vehicle begins to increase toward the desired arrival time once perimeter control starts ($t>t_s^p$), as depicted in Fig. \ref{fig:graphical_FRT_perimeter}.  Since the FRT travel time cost is constant during perimeter control (Eq.~(\ref{eq:gc_perimeter_bus})), $\dot{O_F}(t)=\beta / \lambda$ for earliness, which is the same as the time evolution of $O_F(t)$ before the car rush hour starts. A trade-off exists between the schedule delay and discomfort costs.  In the same way,   $\dot{O_F}(t)=-\gamma / \lambda$ for lateness.  Thus, there is  no time interval when FRT is not used during the FRT rush hour.

From  Proposition \ref{prop:FRT_not_use_perimeter}.2, FRT is used if $\Delta F <2 \alpha \Delta T_f $  and the FRT free-flow travel time cost ($T_f^F$) and fixed cost ($F_F$) are relatively low compared to the other costs (i.e., $\theta_p > (2 \alpha T_f^F - \Delta F)/(\alpha T_f^c)$). However, there will be a time window wherein FRT is not used during perimeter control and the FRT rush hour.  As depicted in Fig. \ref{fig:graphical_FRT_perimeter_partially}, the number of FRT commuters becomes zero  before perimeter control starts. Since the travel time cost is constant during perimeter control, the FRT travel cost decreases with a decrease in the schedule delay cost toward the desired arrival time, and FRT is used.  Once the use of FRT begins, the average number of passengers per FRT vehicle increases toward the desired arrival time, as in Proposition \ref{prop:FRT_not_use_perimeter}.1. Since providing transit priority is not effective during the time window wherein FRT is not used, it limits the capacity of the perimeter boundary during this period. Therefore, {\it additional incentives that reduce the FRT travel cost, such as subsidy so that $\Delta F \geq 2 \alpha \Delta T_f $ is satisfied,  may be required to promote the use of FRT from the beginning to the end of perimeter control. }

Although prioritization of transit at the perimeter boundaries is a better strategy than unimodal perimeter control in terms of passenger mobility \citep{haitao2019providing}, Proposition \ref{prop:FRT_not_use_perimeter}.3 states that this is false if $\theta_p \leq (2 \alpha T_f^F - \Delta F)/(\alpha T_f^c)$, and FRT is not used during perimeter control. In this case, the FRT travel cost is higher than the car travel cost during perimeter control despite that FRT vehicles do not queue at the perimeter boundary. Hence, the bimodal transportation system is not fully utilized. As the vacant FRT vehicles circulate in the CBD zones and are prioritized at the perimeter boundary, these vehicles and transit priority (e.g., dedicated lanes) waste urban spaces around the peak of the rush hour. Ultimately, the effect of the strategy is the same as that of unimodal perimeter control in terms of passenger flow. {\it To promote the use of FRT during perimeter control, additional incentives that satisfy  at least  $\theta_p > (2 \alpha T_f^F - \Delta F)/(\alpha T_f^c)$ are required.}

\begin{figure}[t]
\centering
\includegraphics[width=0.8\columnwidth]{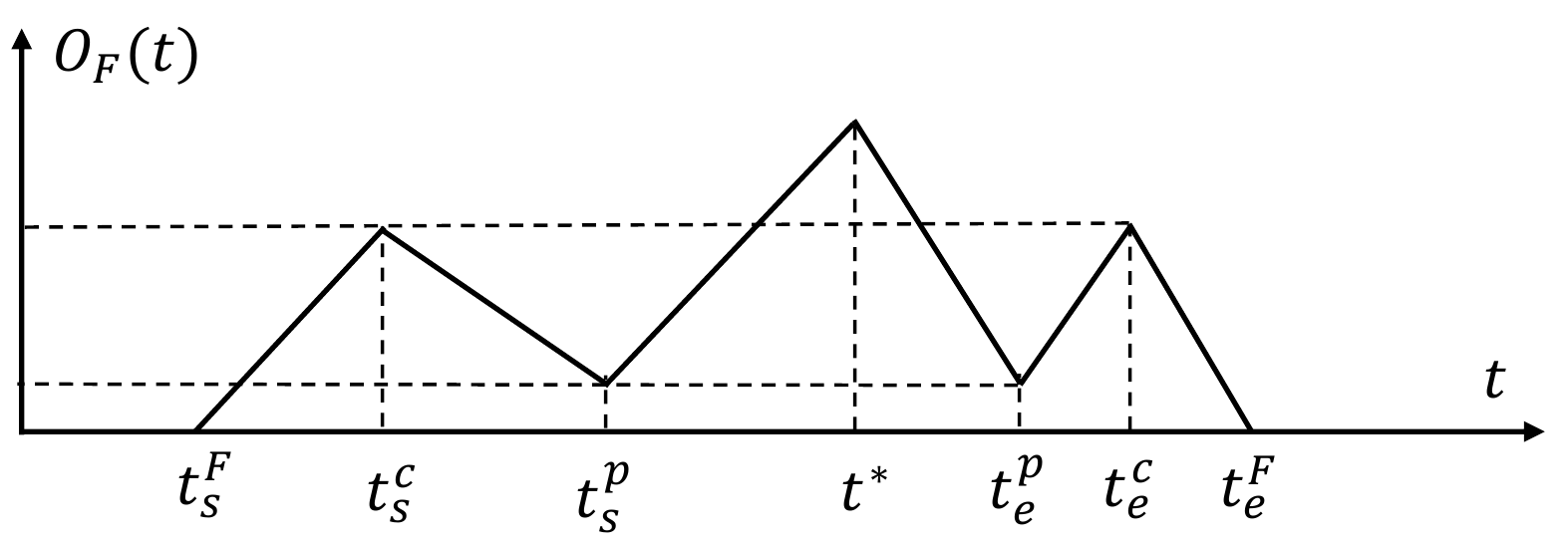}
\caption{Illustration of Proposition \ref{prop:FRT_not_use_perimeter}.1}
\label{fig:graphical_FRT_perimeter}
\end{figure} 
\begin{figure}[t]
\centering
\includegraphics[width=0.8\columnwidth]{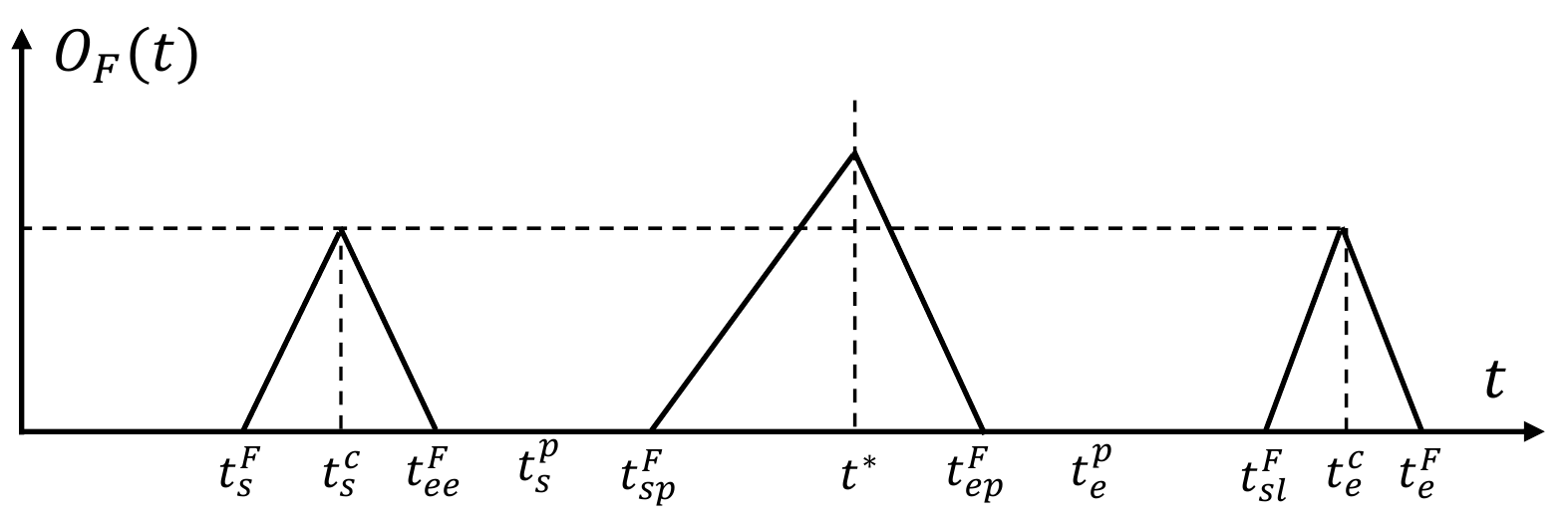}
\caption{Illustration of Proposition \ref{prop:FRT_not_use_perimeter}.2}
\label{fig:graphical_FRT_perimeter_partially}
\end{figure}

\subsection{Equilibrium cost}
We derive the equilibrium cost when FRT is used from the beginning to the end during perimeter control ( $\Delta F > 2 \alpha \Delta T_f$). The number of commuters who take FRT before and after the car rush hour  is the same as that in the case with user equilibrium. 
\begin{eqnarray} \label{eq:num_FRT_peri_before_cr}
N_F^{oc} &=&  \frac{n_F^c}{2\lambda}  \frac{1}{T_f^F} \left( \frac{1}{\beta} + \frac{1}{\gamma}\right) \left( \Delta F - \alpha \Delta T_f \right)^2
\end{eqnarray}
The number of  commuters who take  FRT during the car rush hour outside the perimeter control period ($t \in [t_s^c, t_s^p)$ and $t \in [t_e^p, t_e^c)$ ) is 
\begin{eqnarray} \label{eq:num_FRT_peri_outside}
N_F^{op} &=& \int_{t_s^c}^{t_s^p} \frac{O_F(s)n_F^cv_F(s)}{L_F} ds +  \int_{t_e^p}^{t_e^c} \frac{O_F(s)n_F^cv_F(s)}{L_F} ds \nonumber \\ 
&=&    n_F^c \frac{\alpha}{\lambda} \frac{T_f^c}{T_f^F}  \left( \frac{1}{\beta} + \frac{1}{\gamma}\right)   \left( \Delta F \ln  2  -  \alpha \Delta T_f   \right) 
\end{eqnarray}

When perimeter control starts, the average number of passengers per FRT vehicle is computed using Eqs.~(\ref{eq:gc_perimeter_car}) and (\ref{eq:gc_perimeter_bus}).
\begin{eqnarray}
O_F(t_s^p) = \frac{1}{\lambda} \left( \Delta F - 2 \alpha \Delta T_f \right)
\end{eqnarray}
Furthermore, we obtain $\dot{O_F}(t) = \beta / \lambda$,  $ t \in [t_s^p ,t^*)$ from  Eq.~(\ref{eq:gc_perimeter_bus}). Thus, the time evolution of the average number of passengers per  FRT vehicle is
\begin{eqnarray}
O_F(t) = \frac{1}{\lambda} \left( \Delta F - 2 \alpha \Delta T_f + \beta (t-t_s^p) \right) \; \; \; \text{if} \; \; \; t_s^p \leq t \leq t^* 
\end{eqnarray}
This can be obtained for lateness in the same manner. Thus,  
\begin{eqnarray}
O_F(t) = 
\begin{cases}
\frac{1}{\lambda} \left( \Delta F - 2 \alpha \Delta T_f + \beta (t-t_s^p) \right) \; \; \; \text{if} \; \; \; t_s^p \leq t \leq t^*  \\
\frac{1}{\lambda} \left( \Delta F - 2 \alpha \Delta T_f + \gamma (t_e^p-t) \right) \; \; \; \text{if} \; \; \; t^* < t \leq t_e^p  
\end{cases} \label{eq:number_evolution_passenger_FRT_under_perimeter}
\end{eqnarray}
The total number of FRT commuters who arrive at their destinations during perimeter control is 
\begin{eqnarray} \label{eq:n_F^p}
N_F^p = \int_{t_s^p}^{t^*} \frac{O_F(s)n_F^c m v_f' /2}{L_F} ds +
\int_{t^*}^{t_e^p} \frac{O_F(s)n_F^c m v_f' /2}{L_F} ds 
\end{eqnarray}
We combine Eqs.~(\ref{eq:number_evolution_passenger_FRT_under_perimeter}) and (\ref{eq:n_F^p}) and obtain
\begin{eqnarray}
N_F^p = \frac{n_F^c}{2T_f^F} \frac{1}{\lambda}\left( \frac{1}{\beta} + \frac{1}{\gamma}  \right) \left( c_p^* - 2 \alpha T_f^c - F_c \right) \left( \Delta F - 2 \alpha \Delta T_f + \frac{1}{2}\left( c_p^* - 2 \alpha T_f^c - F_c \right)  \right) 
\label{eq:number_commuters_FRT_under_perimeter}
\end{eqnarray}
Since $c_p^*=2 \alpha T_f^c+\beta \left(t^* - t_s^p \right) + F_c = 2 \alpha T_f^c+\gamma \left(t_e^p - t^*  \right) + F_c$, substituting this into Eq.(\ref{eq:number_commuters_FRT_under_perimeter}), we obtain: 
\begin{eqnarray}\label{eq:number_commuters_FRT_under_perimeter_theta}
N_F^p = \frac{n_F^c}{2} \frac{\alpha}{\lambda} \frac{T_f^c}{T_f^F} \left( \frac{1}{\beta} + \frac{1}{\gamma}  \right) \left( \theta^p - 2  \right) \left( \Delta F - 2 \alpha \Delta T_f + \frac{\alpha T_f^c}{2} \left( \theta^p - 2  \right)   \right) 
\end{eqnarray}
where $\theta^p\equiv\frac{(c_p^*-F_c)v_f'}{\alpha L_c}$.

As the car trip completion rate is maintained at the maximum ($G_c(t)=n_j' v_f'/4 L_c$, $t\in[t_s^p,t_e^p) $), the number of car commuters who arrive at their destination during perimeter control is  
\begin{eqnarray}\label{eq:num_car_p}
N_c^p = \frac{n_j'v_f'}{4L_c} (t_e^p - t_s^p)
\end{eqnarray}
Since $c_p^*=2 \alpha T_f^c+\beta \left(t^* - t_s^p \right) + F_c = 2 \alpha T_f^c+\gamma \left(t_e^p - t^*  \right) + F_c$, substituting it into Eq.(\ref{eq:num_car_p}), we have: 
\begin{eqnarray} \label{eq:num_car_p_theta}
N_c^p = \frac{\alpha n_j'}{4}  \left( \frac{1}{\beta} + \frac{1}{\gamma}  \right) \left( \theta^p - 2 \right)
\end{eqnarray}

The number of car commuters who arrive at their destinations before and after perimeter control is computed using Eqs.~(\ref{eq:car_speed_bi}), (\ref{eq:car_evo_ue}), (\ref{eq:time_diff_p_nc_b}), and (\ref{eq:time_diff_p_nc_a}) by 
\begin{eqnarray}\label{eq:num_car_nc}
N_c^{op} &=& \int_{t_s^{c}}^{t_s^p} \frac{n_c(s)v_c(s)}{L_c} ds + \int_{t_e^{p}}^{t_e^{c}} \frac{n_c(s)v_c(s)}{L_c} ds \nonumber \\
&=& n_j' \left(\frac{\alpha}{\beta} + \frac{\alpha}{\gamma} \right) \left( \ln2 - \frac{1}{2} \right)
\end{eqnarray}
We combine  Eqs.~(\ref{eq:num_FRT_peri_before_cr}), (\ref{eq:num_FRT_peri_outside}), (\ref{eq:number_commuters_FRT_under_perimeter_theta}), (\ref{eq:num_car_p_theta}), and (\ref{eq:num_car_nc}), and the total number of commuters is 
\begin{eqnarray} \label{eq:total_number_perimeter_case2}
N = && \left(  \frac{1}{\beta} + \frac{1}{\gamma}\right) \left[ \frac{\alpha n_j'}{4}   \left( \theta^p - 2 \right) + \alpha n_j' \left( \ln2 - \frac{1}{2} \right) + \nonumber \right. \\
&& \left. \frac{n_F^c}{\lambda} \frac{1}{T_f^F} \left(    \frac{\alpha T_f^c}{2} \left( \theta^p - 2  \right) \left( \Delta F - 2 \alpha \Delta T_f + \frac{\alpha T_f^c}{2} \left( \theta^p - 2  \right) \right) + \frac{1}{2} \left( \Delta F - \alpha \Delta T_f \right)^2 + \alpha T_f^c \left( \Delta F \ln 2 - \alpha \Delta T_f \right) \right)   \right]
\end{eqnarray}
Given that $\theta^p\equiv\frac{(c_p^*-F_c)v_f'}{\alpha L_c}$, this can also  be  solved numerically with respect to the equilibrium cost $c_p^*$.

We derive the equilibrium cost for the case where FRT is used, but there will a time window wherein FRT is not used during perimeter control ($\Delta F \leq 2 \alpha \Delta T_f$ and $\theta_p > \frac{2 \alpha T_f^F - \Delta F}{\alpha T_f^c}$). 
Since $O_F(t_{sp}^F)=O_F(t_{ep}^F)=0$, the start and end times when FRT is used during perimeter control are  given from Eq. (\ref{eq:gc_perimeter_bus}) by 
\begin{eqnarray}\label{eq:start_end_times_FRT_used_perimeter_control}
t^* - t_{sp}^F = \frac{1}{\beta} \left( c^*_p - 2 \alpha T_f^F -  F_F  \right) \\
t_{ep}^F - t^* = \frac{1}{\gamma} \left( c^*_p - 2 \alpha T_f^F -  F_F  \right)
\end{eqnarray}
Since $\dot{O_F}(t) = \beta / \lambda$,  $ t \in [t_{sp}^F ,t^*)$ and $\dot{O_F}(t) = -  \gamma / \lambda$,  $ t \in [ t^*, t_{ep}^F )$ from  Eq.~(\ref{eq:gc_perimeter_bus}),  the total number of FRT commuters who arrive at their destinations during perimeter control is 
\begin{eqnarray} \label{eq:n_F^p_case2}
N_F^p &=& \int_{t_{sp}^F}^{t^*} \frac{O_F(s)n_F^c m v_f' /2}{L_F} ds +
\int_{t^*}^{t_{ep}^F} \frac{O_F(s)n_F^c m v_f' /2}{L_F} ds  \nonumber \\
&=& \frac{n_F^c}{2\lambda} \frac{\left( \alpha T_f^c\right)^2}{2T_f^F}  \left( \frac{1}{\beta} + \frac{1}{\gamma}  \right) \left( \theta^p - \frac{2\alpha T_f^F - \Delta F}{\alpha T_f^c} \right)^2
\end{eqnarray}
The number of FRT commuters before and after perimeter control is given by Eq. (\ref{eq:FRT_number_ue}). As the total number of car commuters is given by the sum of Eqs. (\ref{eq:num_car_p}) and (\ref{eq:num_car_p_theta}), we combine Eqs. (\ref{eq:FRT_number_ue}),(\ref{eq:num_car_p}), (\ref{eq:num_car_p_theta}), and (\ref{eq:n_F^p_case2}), and the total number of commuters satisfies 
\begin{eqnarray} \label{eq:total_number_perimeter_case1}
N &=& \left( \frac{1}{\beta} + \frac{1}{\gamma}\right)  \left[\frac{\alpha n_j'}{4}   \left( \theta^p - 2 \right) + \alpha n_j' \left( \ln2 - \frac{1}{2} \right) + \right. \nonumber \\
&&  \left. \frac{n_F^c}{\lambda} \frac{1}{T_f^F} \left( \frac{\left( \alpha T_f^{c} \right)^2}{4} \left(\theta_p - \frac{2\alpha T_f^F - \Delta F}{\alpha T_f^c}  \right)^2 + \frac{1}{2} \left( \Delta F - \alpha \Delta T_f \right)^2 + \alpha T_f^c \left( \Delta F \ln  \frac{\Delta F }{\alpha \Delta T_f} - \left(\Delta F -  \alpha \Delta T_f  \right)\right) \right) \right]
\end{eqnarray}
Given that $\theta^p\equiv\frac{(c_p^*-F_c)v_f'}{\alpha L_c}$, this can also  be  solved numerically with respect to the equilibrium cost $c_p^*$.  

Even if Assumption 1 is not satisfied, FRT is used during perimeter control provided $\theta_p > \frac{2 \alpha T_f^F - \Delta F}{\alpha T_f^c}$. In this case, FRT is used only during perimeter control. The number of FRT commuters is given by Eq. (\ref{eq:n_F^p_case2}). Thus, the total number of commuters is 
\begin{eqnarray} \label{eq:total_number_perimeter_case3}
N &=& \left( \frac{1}{\beta} + \frac{1}{\gamma}\right)  \left[\frac{\alpha n_j'}{4}   \left( \theta^p - 2 \right) + \alpha n_j' \left( \ln2 - \frac{1}{2} \right) +\frac{n_F^c}{\lambda} \frac{1}{T_f^F} \frac{\left( \alpha T_f^{c} \right)^2}{4} \left(\theta_p - \frac{2\alpha T_f^F - \Delta F}{\alpha T_f^c}  \right)^2 \right]
\end{eqnarray}
where $\theta^p\equiv\frac{(c_p^*-F_c)v_f'}{\alpha L_c}$. This can also be solve numerically.

Next, we derive the equilibrium cost for the case where FRT is not used during perimeter control. If $\theta_p \leq \frac{2 \alpha T_f^F - \Delta F}{\alpha T_f^c}$, then perimeter control is never implemented if  Assumption 1 holds (See Appendix B for the proof ). Thus, when perimeter control is implemented, Assumption 1 is not satisfied, indicating that  the total number of commuters is equal to the number of car commuters since FRT is never used, and is given by  
\begin{eqnarray} \label{eq:total_number_perimeter_case4}
N &=& \left( \frac{1}{\beta} + \frac{1}{\gamma}\right)  \left(\frac{\alpha n_j'}{4}   \left( \theta^p - 2 \right) + \alpha n_j' \left( \ln2 - \frac{1}{2} \right)  \right)
\end{eqnarray}

As with the user equilibrium without perimeter control, the equilibrium cost during perimeter control  has the following property. 

\begin{prop} \label{prop:uniqueness_perimeter_control}
The equilibrium cost during perimeter control is uniquely determined, regardless of whether or not FRT is used during  perimeter control. 
\end{prop}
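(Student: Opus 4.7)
The plan is to view $N$ as a function of $\theta^p$ and show it is continuous and strictly monotonically increasing on the admissible range. Since $\theta^p=(c_p^*-F_c)v_f'/(\alpha L_c)$ is a strictly increasing affine function of $c_p^*$, monotonicity of $N(\theta^p)$ implies that $c_p^*$ is uniquely determined by the exogenously given $N$. The strategy mirrors that used to prove Proposition~\ref{prop:uniqueness_ue} in Appendix~A.

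I would first partition the parameter space into the three regimes corresponding to the demand formulas derived above: (i) $\Delta F \geq 2\alpha\Delta T_f$, giving Eq.~(\ref{eq:total_number_perimeter_case2}); (ii) $\Delta F < 2\alpha\Delta T_f$ with $\theta^p > (2\alpha T_f^F-\Delta F)/(\alpha T_f^c)$, giving Eq.~(\ref{eq:total_number_perimeter_case1}) under Assumption~\ref{ass:ass1} or Eq.~(\ref{eq:total_number_perimeter_case3}) otherwise; and (iii) $\theta^p \leq (2\alpha T_f^F-\Delta F)/(\alpha T_f^c)$, giving Eq.~(\ref{eq:total_number_perimeter_case4}). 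I would then differentiate $N(\theta^p)$ in each regime. Case~(i) yields a derivative proportional to $\frac{\alpha n_j'}{4}+\frac{n_F^c\alpha T_f^c}{2\lambda T_f^F}\bigl[\Delta F-2\alpha\Delta T_f+\alpha T_f^c(\theta^p-2)\bigr]$, strictly positive because both summands are non-negative on the defining range and the car contribution $\alpha n_j'/4$ is strictly positive. Case~(ii) yields a derivative whose FRT part is proportional to $\theta^p-(2\alpha T_f^F-\Delta F)/(\alpha T_f^c)$, strictly positive by the regime's defining inequality, plus the same positive car contribution. Case~(iii) reduces to the car term alone, with derivative proportional to $\alpha n_j'/4>0$. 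Hence strict monotonicity holds piecewise.

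The final step is to patch the pieces together continuously so that monotonicity extends to the whole admissible range. At the endogenous interface $\theta^p=(2\alpha T_f^F-\Delta F)/(\alpha T_f^c)$, the FRT term in case~(ii) vanishes and matches its absence in case~(iii). At the exogenous interface $\Delta F=2\alpha\Delta T_f$, the threshold $(2\alpha T_f^F-\Delta F)/(\alpha T_f^c)$ collapses to $2$, and a direct substitution of $\Delta F=2\alpha\Delta T_f$ in both Eq.~(\ref{eq:total_number_perimeter_case2}) and Eq.~(\ref{eq:total_number_perimeter_case1}) reduces the FRT contributions to the common expression $\tfrac{(\alpha T_f^c)^2}{4}(\theta^p-2)^2+\tfrac{1}{2}(\alpha\Delta T_f)^2+\alpha T_f^c(2\alpha\Delta T_f\ln 2-\alpha\Delta T_f)$. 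The main obstacle is this last reconciliation, since the FRT terms in the two formulas were assembled from different expressions for $O_F(t)$ over different time windows, so their coincidence on the boundary depends on the identity $\ln(\Delta F/(\alpha\Delta T_f))=\ln 2$ together with a careful matching of the quadratic pieces. Once continuity is verified, $N(\theta^p)$ is globally continuous and strictly increasing, hence invertible, yielding the unique equilibrium cost $c_p^*$ in each of the four cases treated before the proposition.
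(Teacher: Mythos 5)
Your proposal is correct and follows essentially the same route as the paper's Appendix C: treat $N$ as a function of $\theta^p$ in each regime (Eqs.~(\ref{eq:total_number_perimeter_case2}), (\ref{eq:total_number_perimeter_case1}), (\ref{eq:total_number_perimeter_case3}), (\ref{eq:total_number_perimeter_case4})), differentiate, establish strict monotonicity, and invert to get a unique $c_p^*$; your case-by-case derivatives match (the paper additionally invokes $F(1)=0$ to handle a possibly non-monotone stretch at small $\theta^p$, whereas you rely on the admissible range $\theta^p \geq 2$). The extra patching at the interface $\Delta F = 2\alpha\Delta T_f$ is harmless but unnecessary, since that boundary is set by exogenous parameters rather than by $\theta^p$, so for given data only one formula is ever in play.
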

\begin{proof}
See Appendix C. 
\end{proof}

Although the equilibrium cost is uniquely determined, the impacts of certain parameters  on the equilibrium cost and  mode share are unclear. For example,  Eqs.~(\ref{eq:total_number_perimeter_case2}), (\ref{eq:total_number_perimeter_case1}), (\ref{eq:total_number_perimeter_case3}), and (\ref{eq:total_number_perimeter_case4}) show that the number of FRT vehicles may increase and decrease the equilibrium cost. Therefore,   
the properties of equilibrium with and without perimeter control, such as its cost and the role of certain parameters (e.g., the number of FRT vehicles), are investigated through numerical examples in the next section. 

\begin{figure}[t]
\begin{minipage}{0.5\columnwidth}
\centering
\includegraphics[width=\columnwidth]{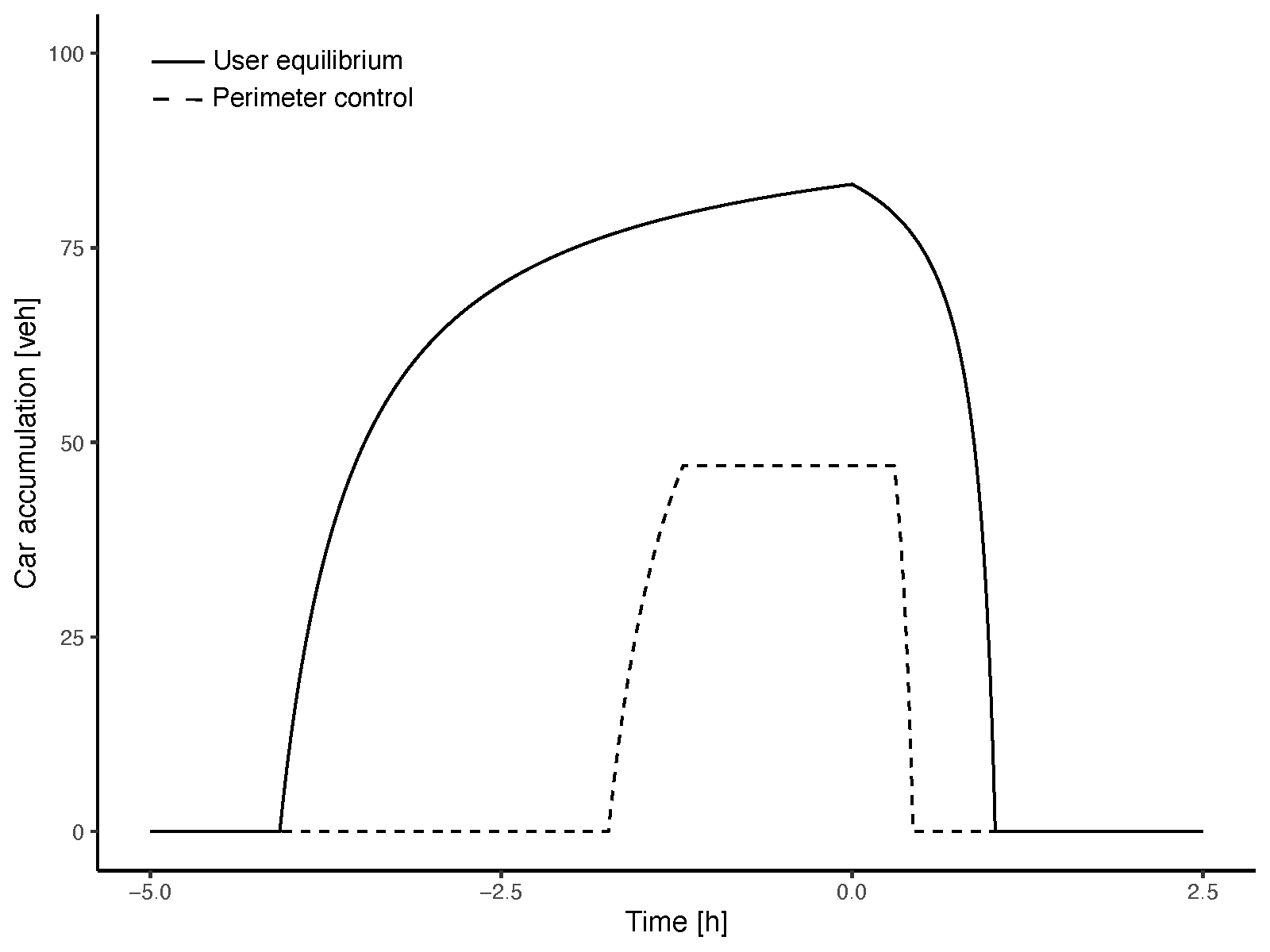}
\subcaption{Case I}
\end{minipage}
\begin{minipage}{0.5\columnwidth}
\centering
\includegraphics[width=\columnwidth]{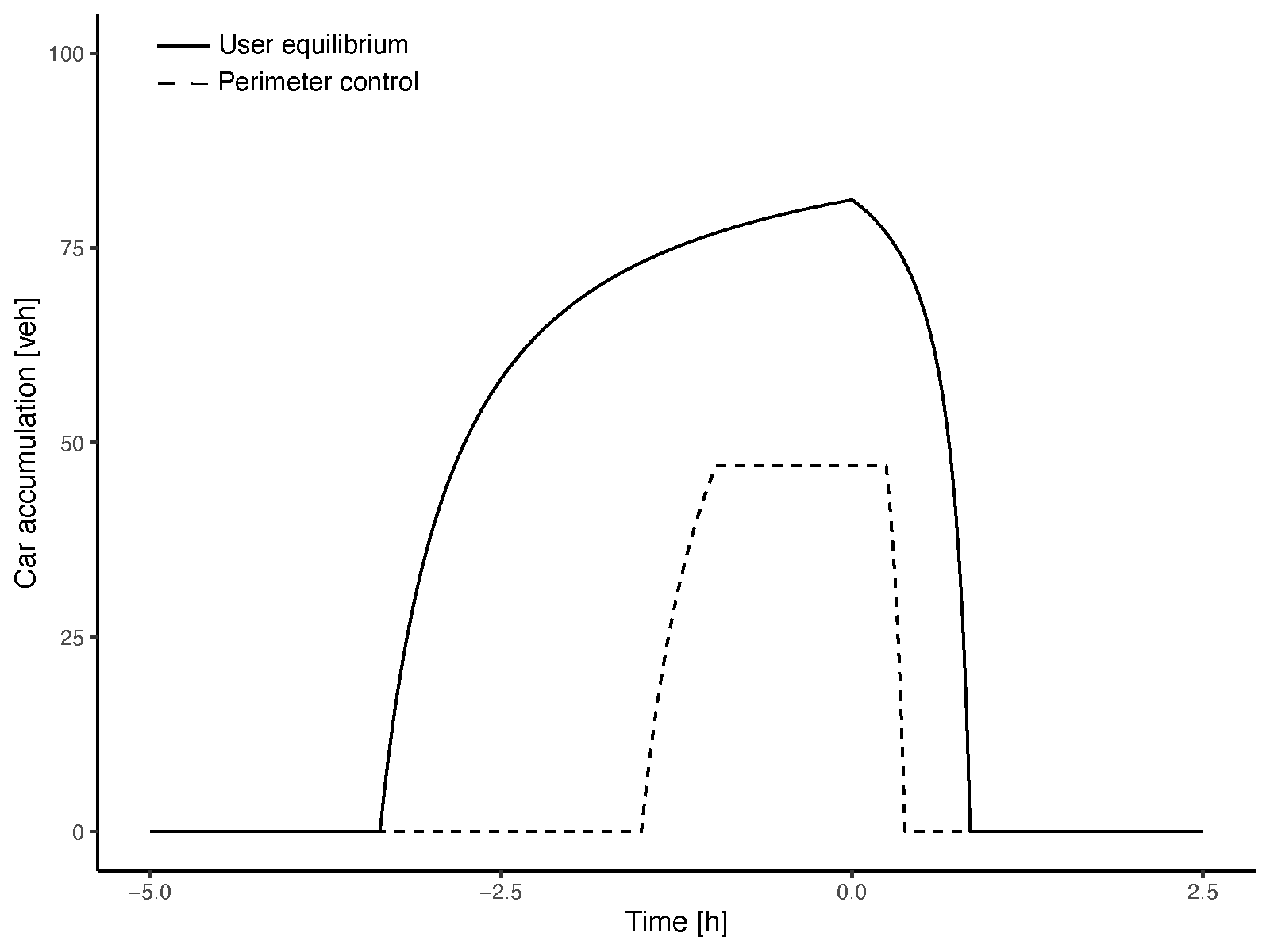}
\subcaption{Case II}
\label{fig:ue_w_FRT}
\end{minipage}
\caption{Car accumulation in the CBD zone}
\label{fig:accumulation}
\begin{minipage}{0.5\columnwidth}
\centering
\includegraphics[width=\columnwidth]{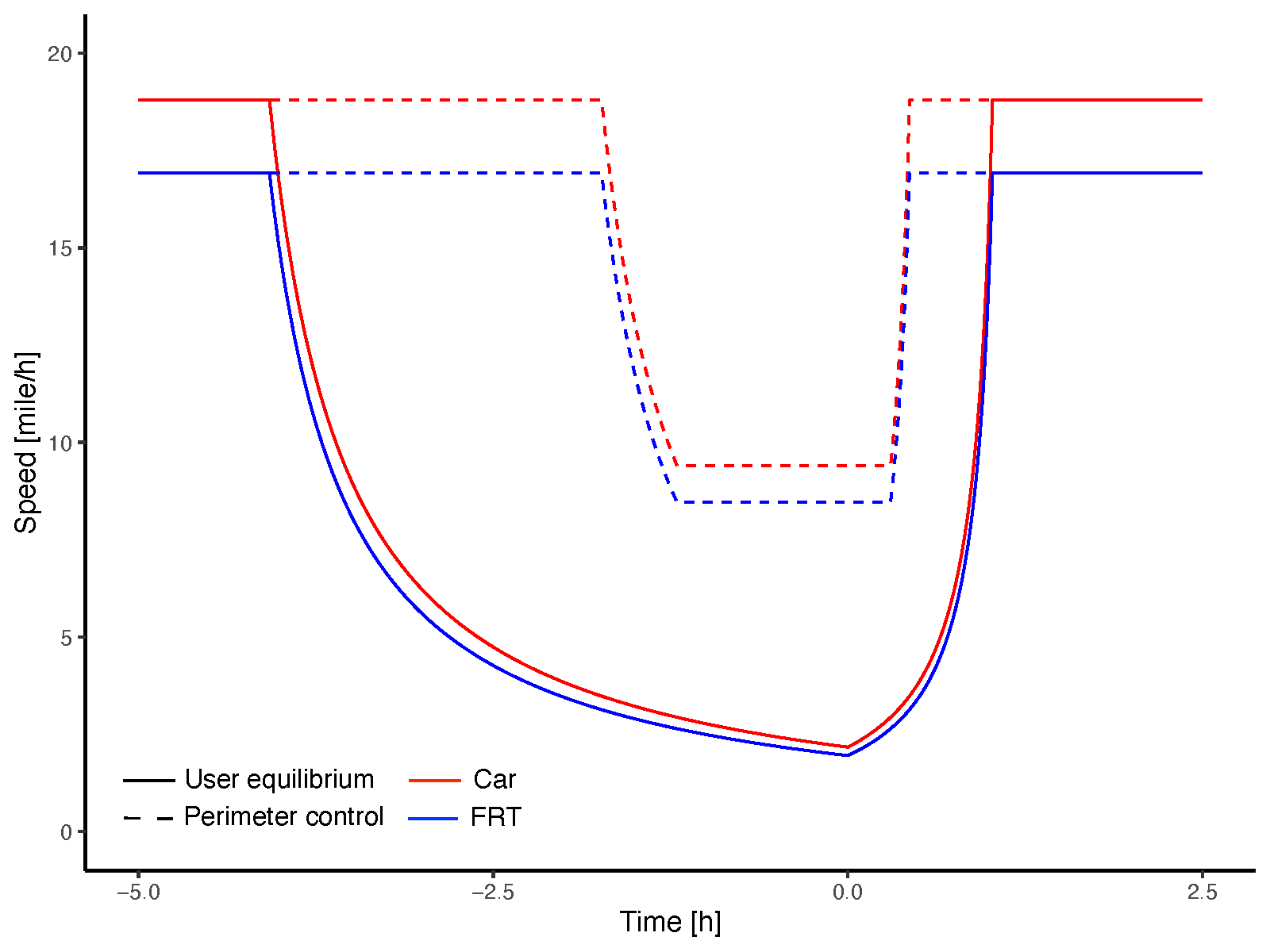}
\subcaption{Case I}
\end{minipage}
\begin{minipage}{0.5\columnwidth}
\centering
\includegraphics[width=\columnwidth]{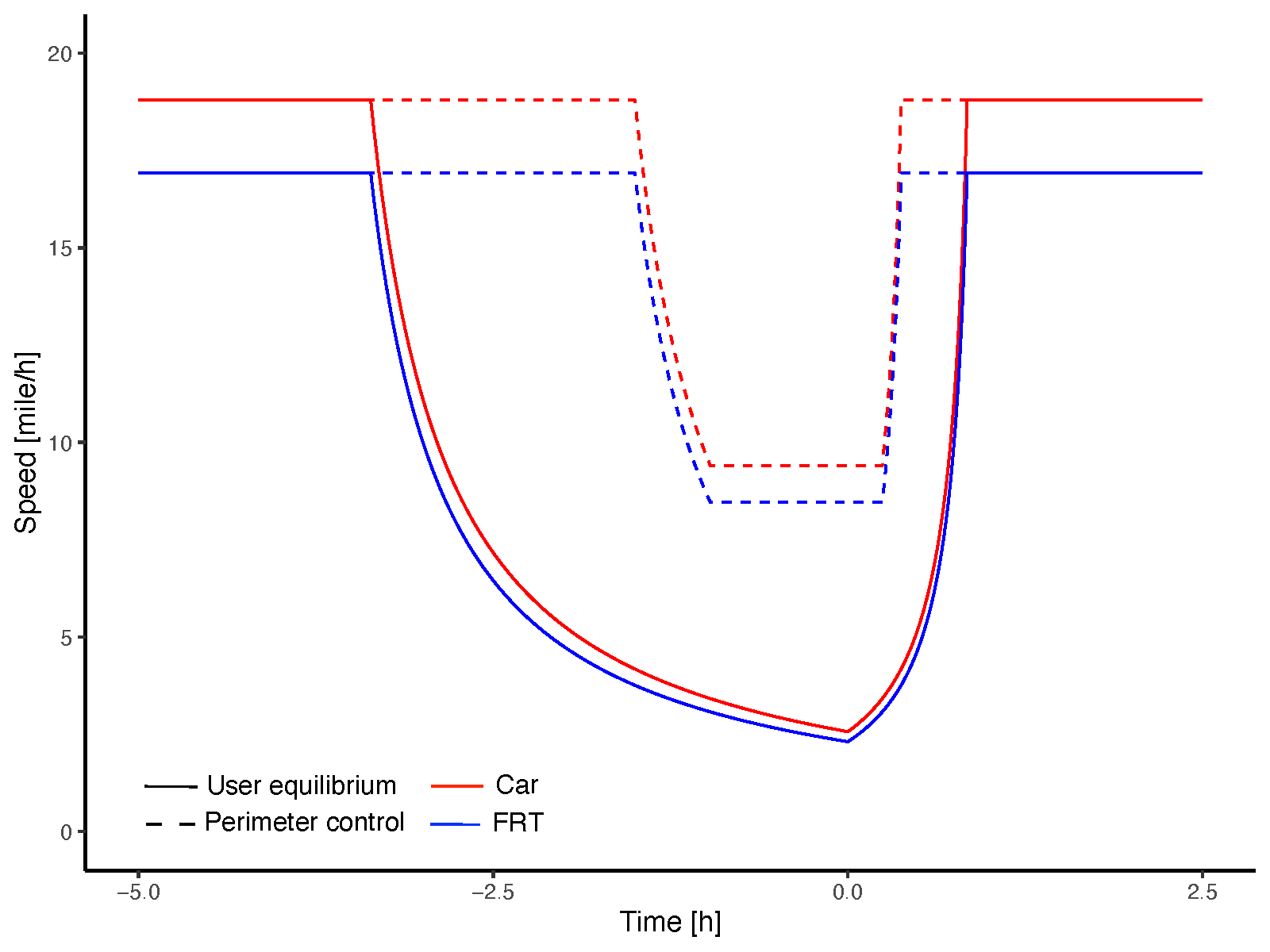}
\subcaption{Case II}
\label{fig:ue_w_FRT}
\end{minipage}
\caption{Speed in the CBD zone}
\label{fig:speed}
\end{figure}

\section{Numerical examples} \label{seq:numerical}
We numerically investigate the equilibrium patterns of the mixed bimodal bathtub model. The parameters are as follows: $v_f=20$ [mile/h], $\eta=1.2$, $n_F^c=5$ [veh], $n_j=100$ [veh], $m=0.9$, $\alpha, \beta, \gamma=20,10,40$ [\$/h], $F_c=5$ [\$], $L_c, L_F=5,6$ [mile], $N=300$ [pax], $\lambda=0.4$ [\$/pax] and $t^*=0$.  Some of the parameters are taken from \cite{arnott2013bathtub}. Note that as we set the desired arrival time to zero,  the negative time indicates the time before the desired arrival time (i.e., earliness), and the positive time indicates the time after the desired arrival time (i.e., lateness).  We compare the equilibrium patterns of two cases: with FRT partial use (Case I) and with FRT use (Case II) during perimeter control. Thus, in addition to above parameters, we set the FRT fixed cost to $F_F=2$ [\$] for Case I and to $F_d=1$ [\$] for case II. 

Car accumulation between Cases I and II is compared in Fig. \ref{fig:accumulation}. In both cases, a similar evolution of car accumulation is obtained; however,  because of the higher FRT fixed cost for Case I, more commuters use their cars, which leads to a longer rush hour.   At user equilibrium, hypercongestion occurs (above $n_j'/2=47$ [veh]), and the highest accumulation is reached at the desired arrival time. These results are consistent with those in the literature on unimodal bathtub models (e.g., \cite{small2003hypercongestion, geroliminis2009cordon, arnott2013bathtub}). 

During perimeter control, critical accumulation is maintained  around the desired arrival time. In both cases, the rush hour is shortened during perimeter control. This is because the traffic condition in the CBD zone never reaches hypercongestion, and the arrival rate is maintained at the maximum during perimeter control. Thus, more commuters can arrive at their destinations near the desired arrival time. These features are visible from the speed profiles in Fig. \ref{fig:speed} as well.

As depicted in Figs.~\ref{fig:arrival_rate} and \ref{fig:Ave_pass}, traffic becomes hypercongested once the maximum car arrival rate is reached at user equilibrium, hence  the car arrival rate decreases toward the desired arrival time. When  perimeter control is implemented, the car arrival rate is maintained at the maximum for both cases. 
The FRT passenger arrival rate and the average number of passengers per FRT vehicle differ between the two cases. In Case I, there is a time window wherein FRT is not used during the FRT rush hour regardless of perimeter control. During perimeter control, FRT is used again and the arrival rate increases toward the desired arrival time, as verified in Proposition \ref{prop:FRT_not_use_perimeter}.2. Even though the FRT vehicular arrival rate is constant during perimeter control, the passenger arrival rate increases because the average number of passengers per  FRT vehicle increases, which is depicted in Fig.  \ref{fig:Ave_pass}  (a) and verified  in Proposition \ref{prop:FRT_not_use_perimeter}.2. At user equilibrium without perimeter control in Case II, the number of FRT commuters is higher than that in Case I due to the lower FRT fixed cost, but its evolution pattern is the same as that in Case I as depicted in Fig. \ref{fig:arrival_rate} (b). It is also shown that at equilibrium with perimeter control, FRT is used from the beginning to the end during the FRT rush hour. Once perimeter control starts, the arrival rate begins to increase because of the increase in the average number of passengers per FRT vehicle, which is depicted in Fig.  \ref{fig:Ave_pass}  (b) and verified  in Proposition \ref{prop:FRT_not_use_perimeter}.1.

\begin{figure}[t]
\begin{minipage}{0.5\columnwidth}
\centering
\includegraphics[width=\columnwidth]{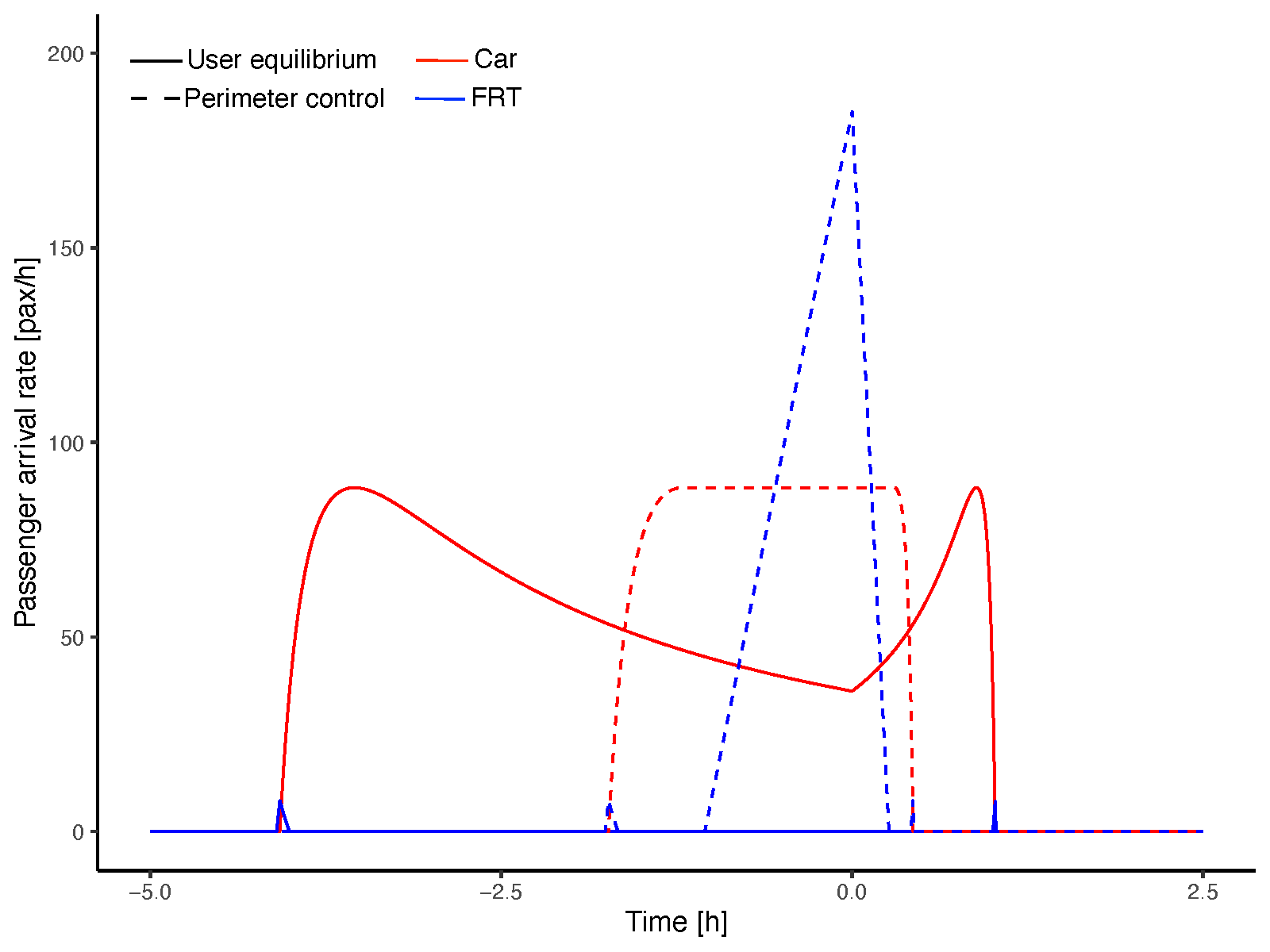}
\subcaption{Case I}
\end{minipage}
\begin{minipage}{0.5\columnwidth}
\centering
\includegraphics[width=\columnwidth]{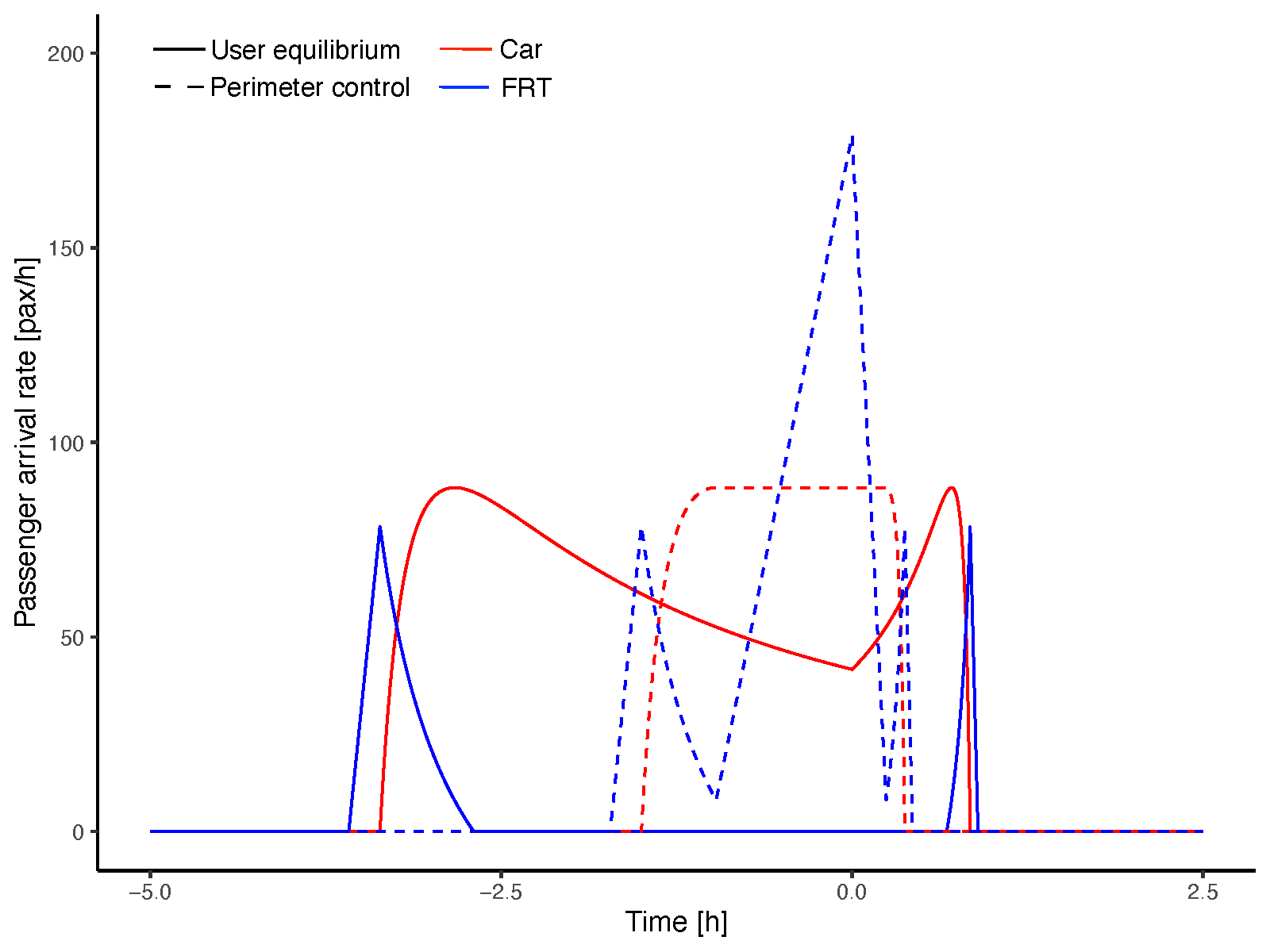}
\subcaption{Case II}
\label{fig:ue_w_FRT}
\end{minipage}
\caption{Passenger arrival rate at their destination}
\label{fig:arrival_rate}
\begin{minipage}{0.5\columnwidth}
\centering
\includegraphics[width=\columnwidth]{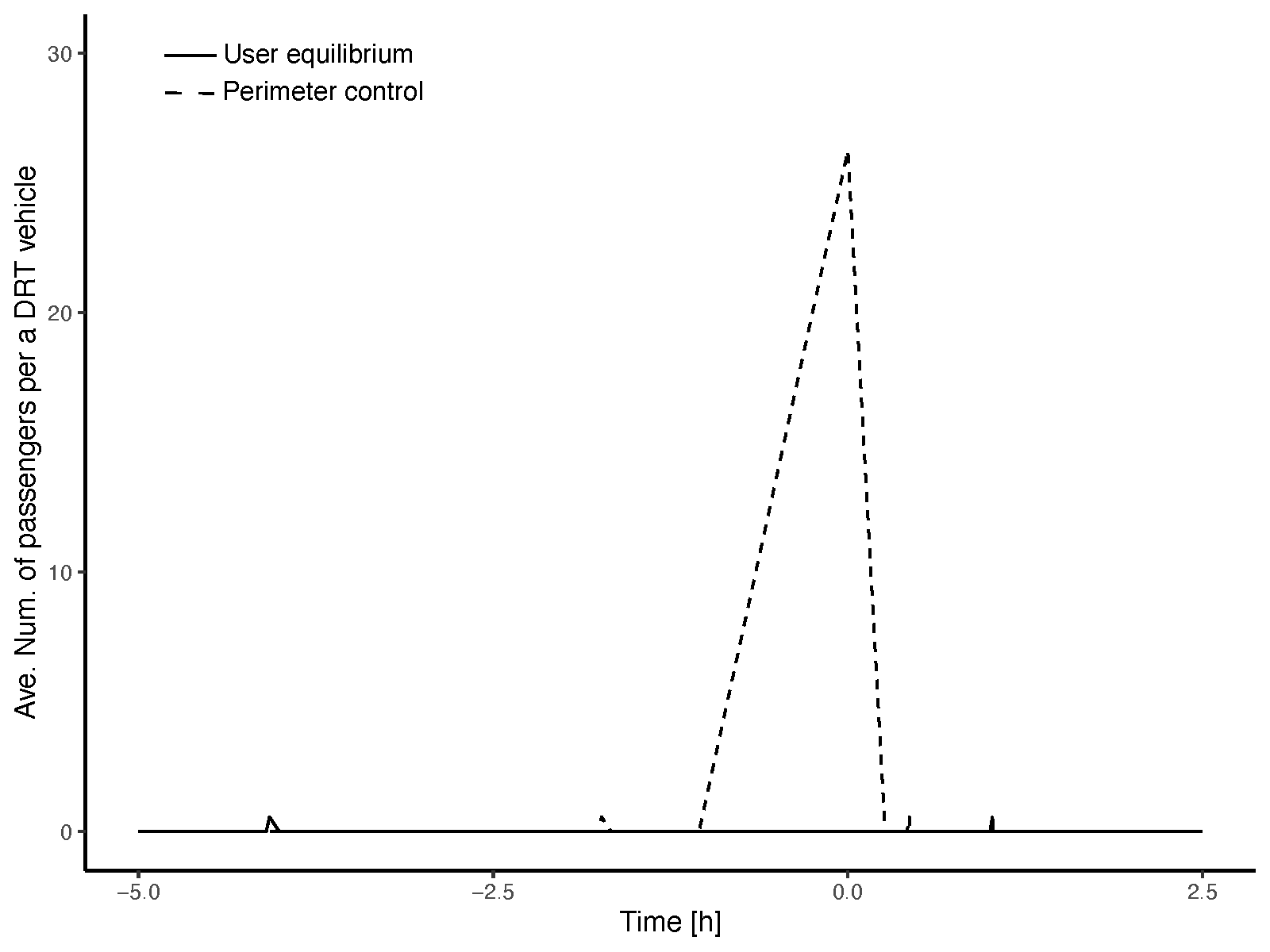}
\subcaption{Case I}
\end{minipage}
\begin{minipage}{0.5\columnwidth}
\centering
\includegraphics[width=\columnwidth]{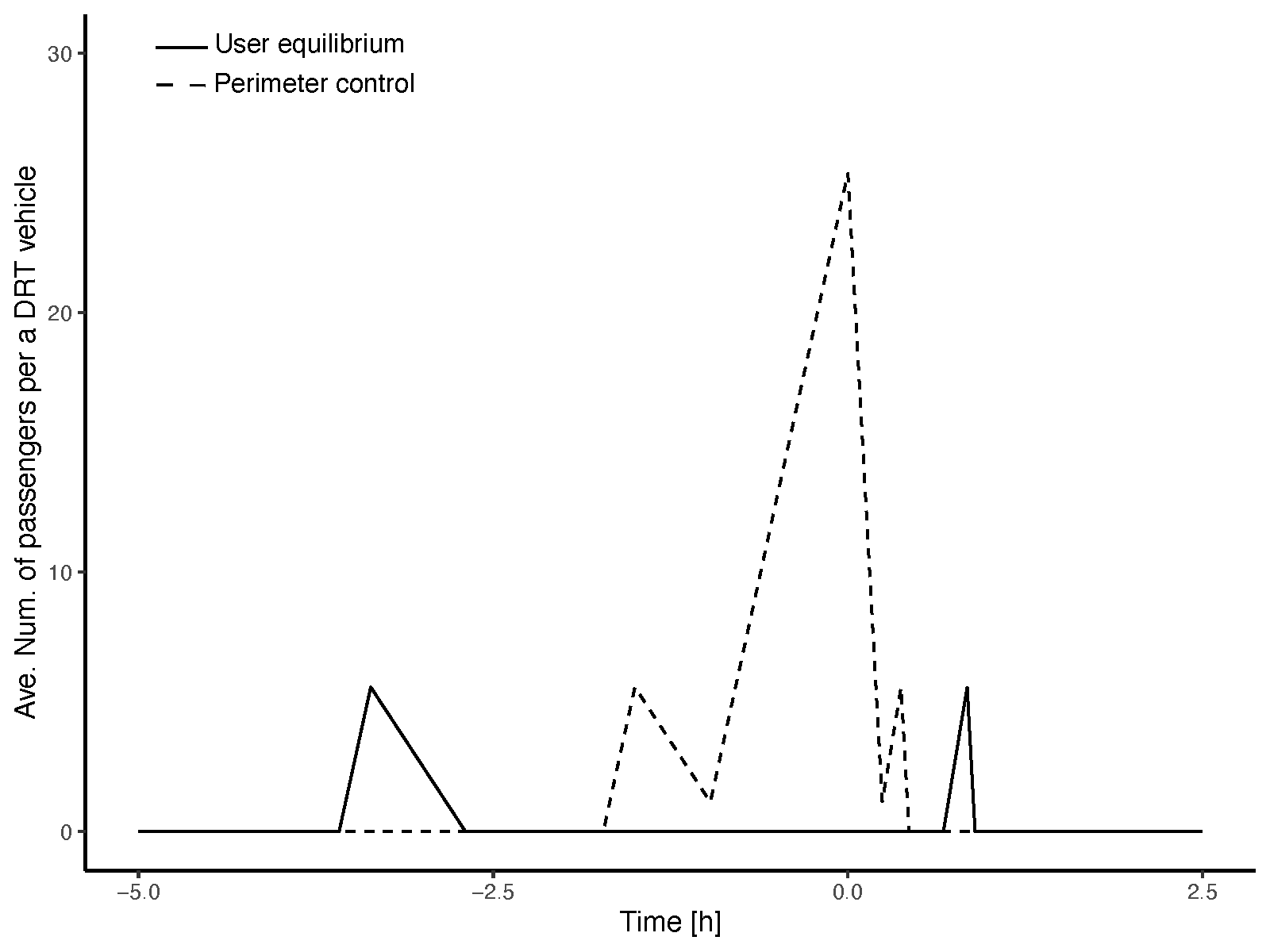}
\subcaption{Case II}
\label{fig:ue_w_FRT}
\end{minipage}
\caption{Average number of passengers per a FRT vehicle}
\label{fig:Ave_pass}
\end{figure}

\begin{figure}[t]
    \centering
    \begin{minipage}{0.45\columnwidth}
    \centering
    \includegraphics[width=\columnwidth]{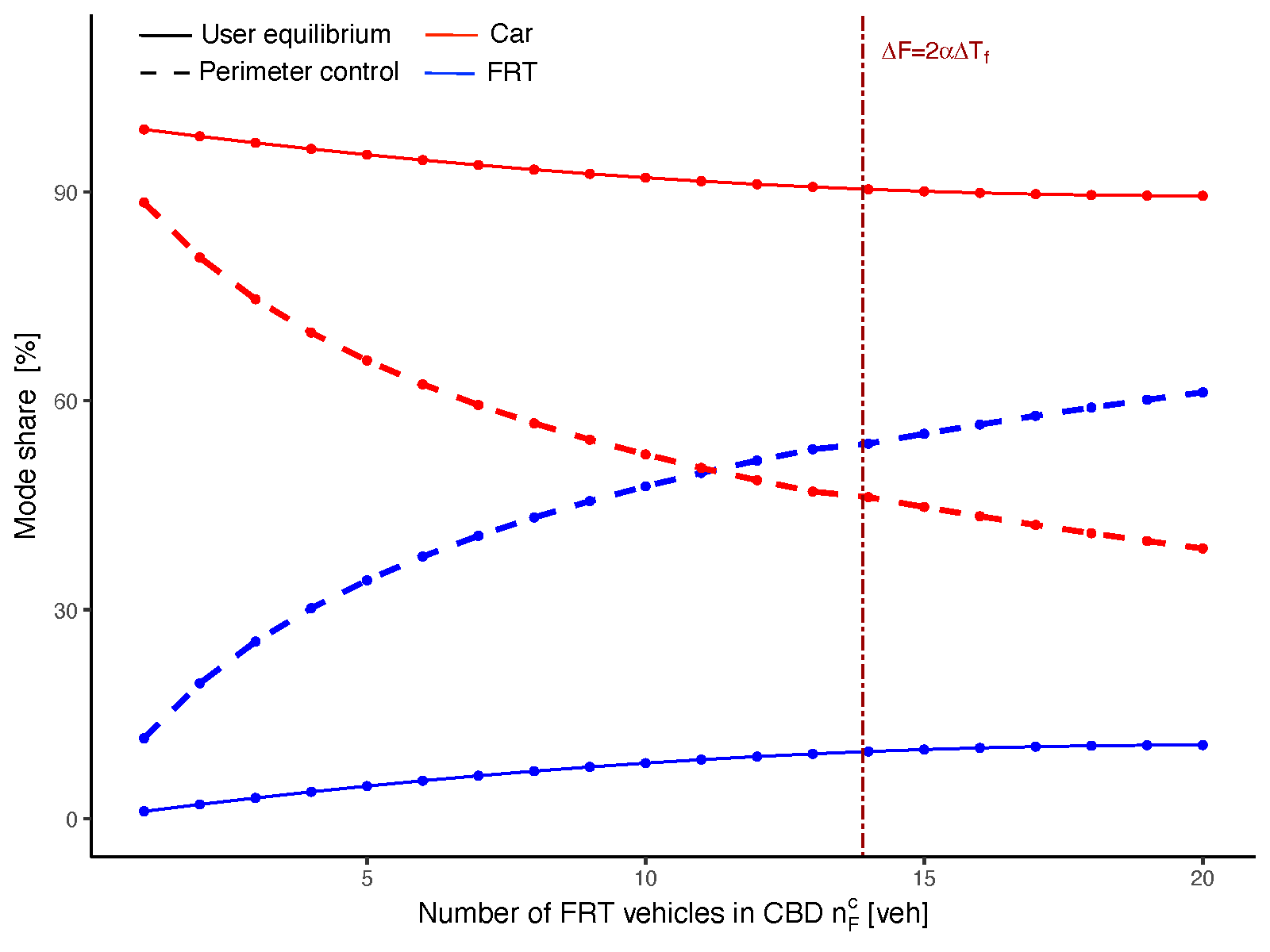}
    \subcaption{Mode share}
    \end{minipage}
        \centering
    \begin{minipage}{0.45\columnwidth}
    \centering
    \includegraphics[width=\columnwidth]{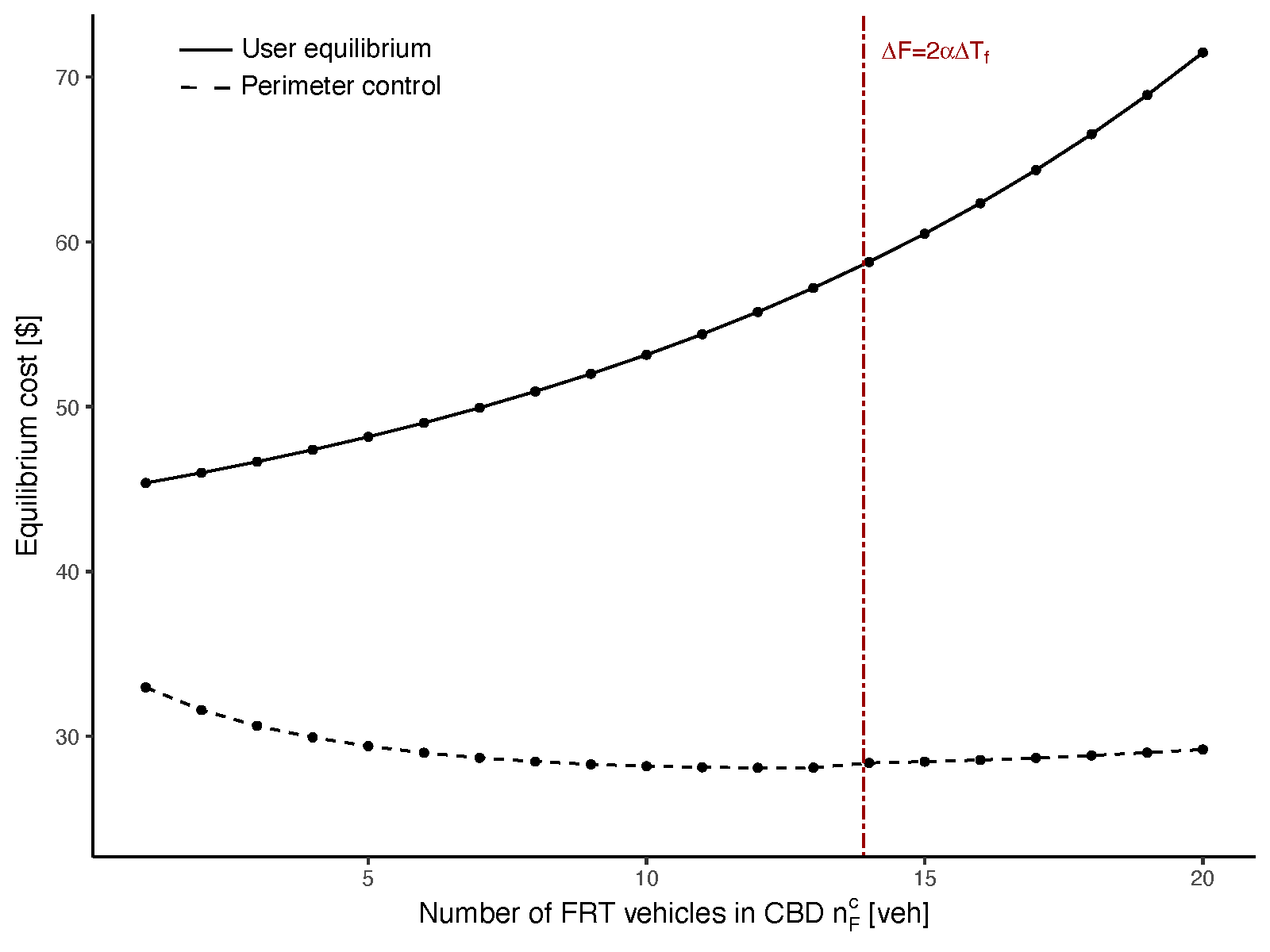}
    \subcaption{Equilibrium cost}
    \end{minipage}
    \caption{Sensitivity analysis  to number of FRT vehicles  (the vertical dotted line represents the number of FRT vehicles; $\Delta F = 2 \alpha \Delta T_f$ )}
    \label{fig:sensitivity_nF}
\end{figure}
\begin{table} 
\caption{Sensitivity analysis to number of FRT vehicles}
\centering
\begin{tabular}{cccccc}
\hline \hline 
& User equilibrium && Perimeter control & &\\
$n_F^c$  & $c^*$ & FRT mode share [\%] & $c_p^*$ & FRT mode share [\%] & $\frac{c_p^*}{c^*}$ \\ \hline
1 & 45.4 & 1.0 & 33.0 & 11.5  & 0.73\\
5 & 48.2 & 4.7 & 29.4 & 34.2  & 0.61\\
10 & 53.1 & 8.0 & 28.2 & 47.7  & 0.53\\
15 & 60.5 & 9.9 & 28.5 & 55.3  & 0.47\\
20 & 71.5 & 10.6 & 29.2 & 61.2 & 0.41\\
\hline 
\end{tabular}\label{tab:sensitivity_nF}
\end{table}

\subsection{Sensitivity analysis with respect to number of FRT vehicles}
We then analyze the sensitivity with respect to the number of FRT vehicles in the CBD zone, $n_F^c$, as shown in Fig.~\ref{fig:sensitivity_nF} and Table.~\ref{tab:sensitivity_nF}. The  vertical dotted line in Fig.~\ref{fig:sensitivity_nF} corresponds to the cost which yields $\Delta F = 2 \alpha \Delta T_f$; $13 < n_F^c < 14$. We find that, as more FRT vehicles are operated, more passengers take FRT. We also find from Fig.~\ref{fig:sensitivity_nF} (b) and Table.~\ref{tab:sensitivity_nF} that the increase in the number of FRT vehicles simply increases the user equilibrium cost, whereas the increase in the number of FRT vehicles decreases the equilibrium cost under perimeter control when a few FRT vehicles are operated; however it increases the cost if too many FRT vehicles are operated. This is because the increase in the number of FRT vehicles not only alleviates the discomfor (discomfort effect hereafter), but also decreases the network capacity (network capacity effect hereafter). The network capacity effect is larger in the presence of hypercongestion and result in an increase in the user equilibrium cost. Conversely, as hypercongestion does not occur under perimeter control, the network capacity effect is relatively smaller. Therefore, the discomfort effect decreases the equilibrium cost until a certain number of vehicles is reached ($13$ [veh] in this example). If the number of FRT vehicles exceeds this threshold, then the network capacity becomes larger and causes an increase in the equilibrium cost. This implies that operating many FRT vehicles to promote  mode shift from cars during  perimeter control may increase the equilibrium cost.  

To investigate the impact of perimeter control, we estimate the ratio of the equilibrium cost with perimeter control to the user equilibrium cost as shown in the last column of Table.~\ref{tab:sensitivity_nF}.  We find that the impact of perimeter control increases with the number of FRT vehicles. This is because the duration of perimeter control is long as higher number of FRT vehicles causes a severe capacity drop (i.e., long duration of hypercongestion).

\begin{figure}[t]
    \centering
        \begin{minipage}{0.45\columnwidth}
    \centering
    \includegraphics[width=\columnwidth]{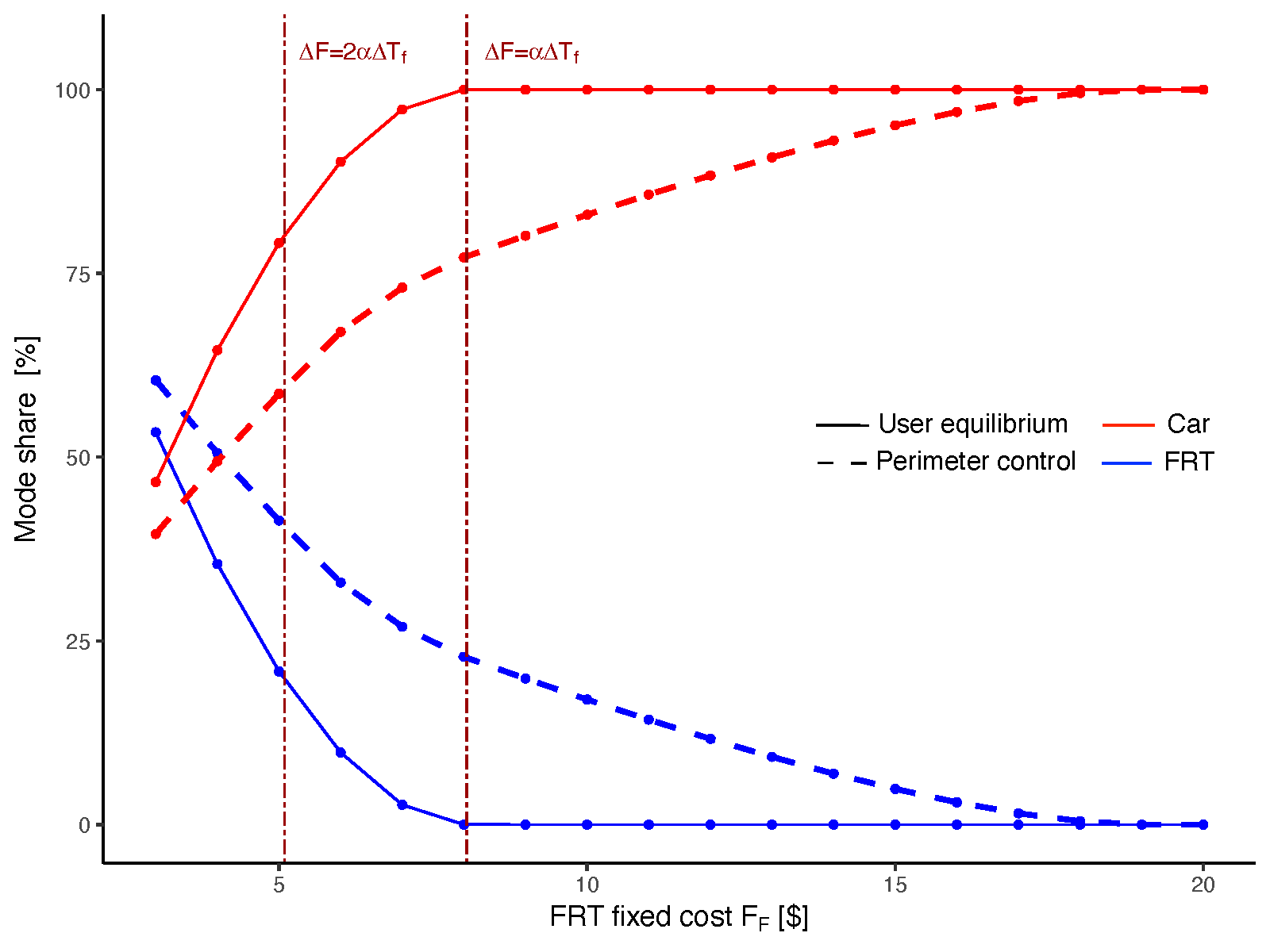}
    \subcaption{Mode share}
    \end{minipage} 
    \begin{minipage}{0.45\columnwidth} 
    \centering
    \includegraphics[width=\columnwidth]{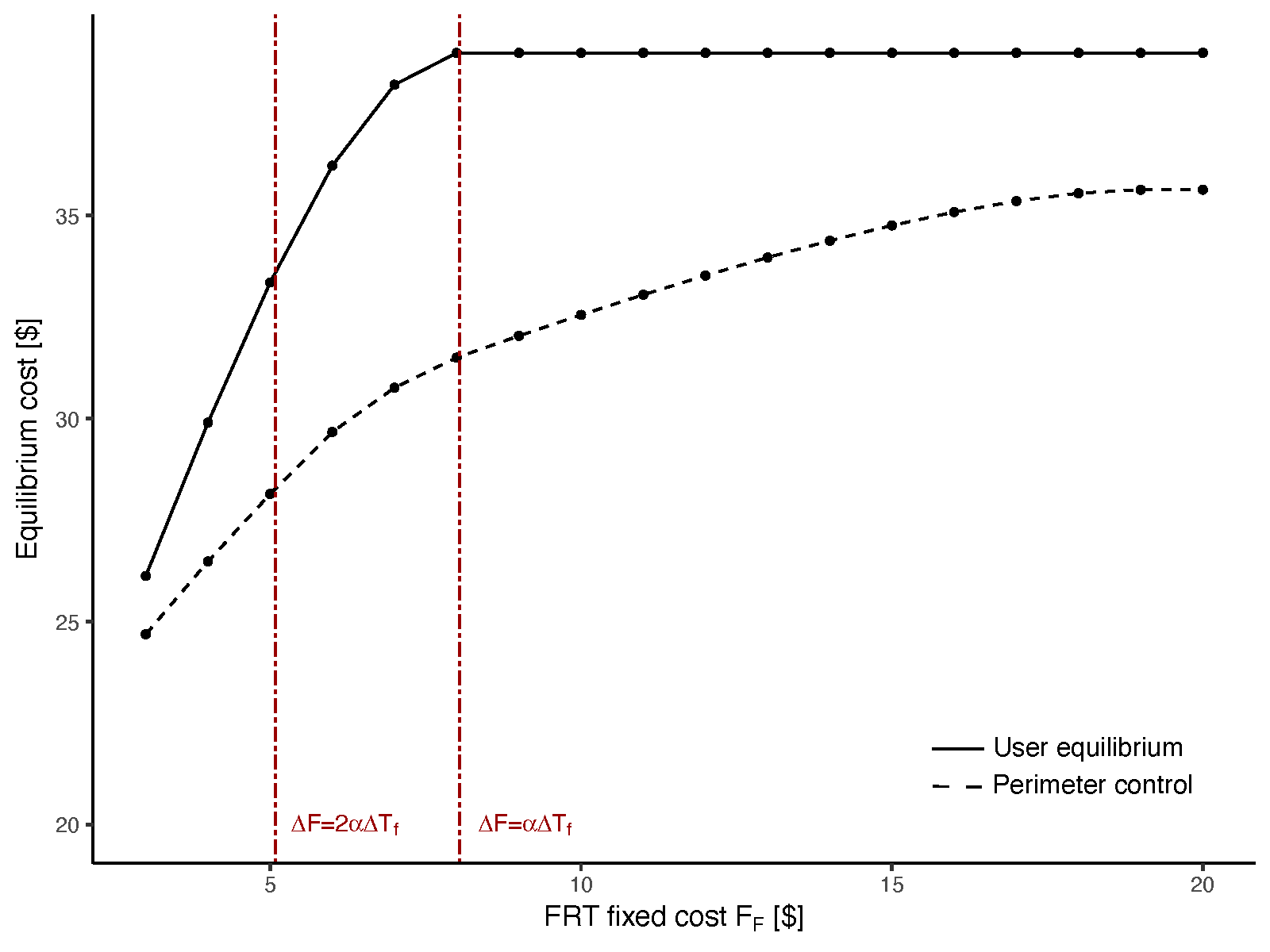}
    \subcaption{Equilibrium cost}
    \end{minipage}
    \centering
    \caption{Sensitivity analysis  to FRT fixed cost   (the left and right vertical dotted lines represent the FRT fixed cost; $\Delta F = 2 \alpha \Delta T_f$ and $\Delta F =  \alpha \Delta T_f$, respectively)}
    \label{fig:sensitivity_FRT_cost}
\end{figure}

\begin{table}[t]
\caption{Sensitivity analysis to FRT fixed cost}
\centering
\begin{tabular}{cccccc}
\hline \hline 
& User equilibrium && Perimeter control & &\\
$F_F$  & $c^*$ & FRT mode share [\%] & $c_p^*$ & FRT mode share [\%] & $\frac{c_p^*}{c^*}$ \\ \hline
3 & 26.1 & 53.3 & 24.7 & 60.5  & 0.95\\
5 & 33.4 & 20.9 & 28.1 & 41.4  & 0.84\\
8 & 39.0 & 0.0 & 31.5 & 22.8  & 0.81\\
10 & 39.0 & 0.0 & 32.6 & 17.0 & 0.83\\
15 & 39.0 & 0.0 & 34.8 & 4.9 &  0.89\\
20 & 39.0 & 0.0 & 35.6 & 0.0 &  0.91 \\
\hline
\end{tabular} \label{tab:sensitivity_FRT_cost}
\end{table}

\subsection{Sensitivity analysis with respect to FRT fixed cost}
Next, we conduct a sensitivity analysis with respect to the FRT fixed cost\footnote{ Note that the model can be easily extended to a case where the heterogeneity exists in the marginal cost of travel time  (i.e.,  $\alpha$ is higher or lower for FRT commuters than  car commuters).  If  the heterogeneity is incorporated, it will have similar effect to the FRT fixed cost. An increase in  $\alpha$ for FRT commuters results in a decrease in  the FRT mode share. }  , as depicted in  Fig. \ref{fig:sensitivity_FRT_cost} and Table. \ref{tab:sensitivity_FRT_cost}. Note that we use $L_F=7$ [mile], $F_c=11$ [\$], $N=200$ [pax] in this analysis. The left and right vertical dotted lines in  Fig. \ref{fig:sensitivity_FRT_cost} indicate the costs that yield $\Delta F = 2 \alpha \Delta T_f$; $F_F=5.09$ and $\Delta F =  \alpha \Delta T_f$; $F_F=8.05$, respectively. The left side of the left line shows the case where FRT is used from the beginning to the end during rush hour, the middle part between the left and right lines shows the case where there is a time window wherein FRT is not used regardless of perimeter control, and the right side of the right line shows the case where FRT is never used at user equilibrium, whereas FRT is used, but there is a time window when FRT is not used at equilibrium with perimeter control. The equilibrium cost can be reduced by  perimeter control in all cases from the user equilibrium cost. An increase in FRT fixed cost results in an increase in equilibrium cost under perimeter control  because it increases the mode share of cars (Fig.~\ref{fig:sensitivity_FRT_cost} (a)) and lengthens the car rush hour. Transit priority is still beneficial, even in the case where FRT is never used at user equilibrium,  In the case where the FRT fixed cost is higher than $8.05$ [\$], all commuters use their cars, and the FRT mode share is zero at user equilibrium (Fig.~\ref{fig:sensitivity_FRT_cost} (a) and Table. \ref{tab:sensitivity_FRT_cost}). Perimeter control with transit priority can increase the mode share of FRT.  However, when the FRT fixed cost is too high (e.g., $F_F=20 $ [\$] as shown in Table.~\ref{tab:sensitivity_FRT_cost}) , FRT is not used during perimeter control because  $\theta_p > (2 \alpha T_f^F - \Delta F)/(\alpha T_f^c)$ is not satisfied, as verified in Proposition \ref{prop:FRT_not_use_perimeter}.3. Since the mode share is unchanged, the effect of perimeter control with transit priority is the same as that of perimeter control alone. Bimodality is  not fully utilized in this case. Thus, additional incentives, such as fare reduction so that $\theta_p > (2 \alpha T_f^F - \Delta F)/(\alpha T_f^c)$ is satisfied, should be provided to promote FRT use during perimeter control.

To investigate the impact of perimeter control, we estimate the ratio of the equilibrium cost  with perimeter control to the user equilibrium cost as shown in the last column of Table.~\ref{tab:sensitivity_FRT_cost}. An increases in the FRT fixed cost increases the impact of perimeter control when  $\Delta F >  \alpha \Delta T_f$. This is because an increase in the FRT fixed cost increases the mode share of cars, leading to a longer duration of hypercongestion at user equilibrium. Thus, this causes longer perimeter control and increasing  the impact of perimeter control. In contract,  the higher the FRT fixed cost is, the lower will be the impact of perimeter control is when $\Delta F \leq  \alpha \Delta T_f$. Since all commuters use their cars, the mode share and the duration of hypercongestion remain unchanged at user equilibrium regardless of the value of the FRT fixed cost. During perimeter control, an increase in the FRT fixed cost leads to a higher mode share of cars, and thus a higher equilibrium cost if $\theta_p > (2 \alpha T_f^F - \Delta F)/(\alpha T_f^c)$. Therefore, the impact of perimeter control decreases.

\begin{figure}[t]
\begin{minipage}{0.5\columnwidth}
\centering
\includegraphics[width=\columnwidth]{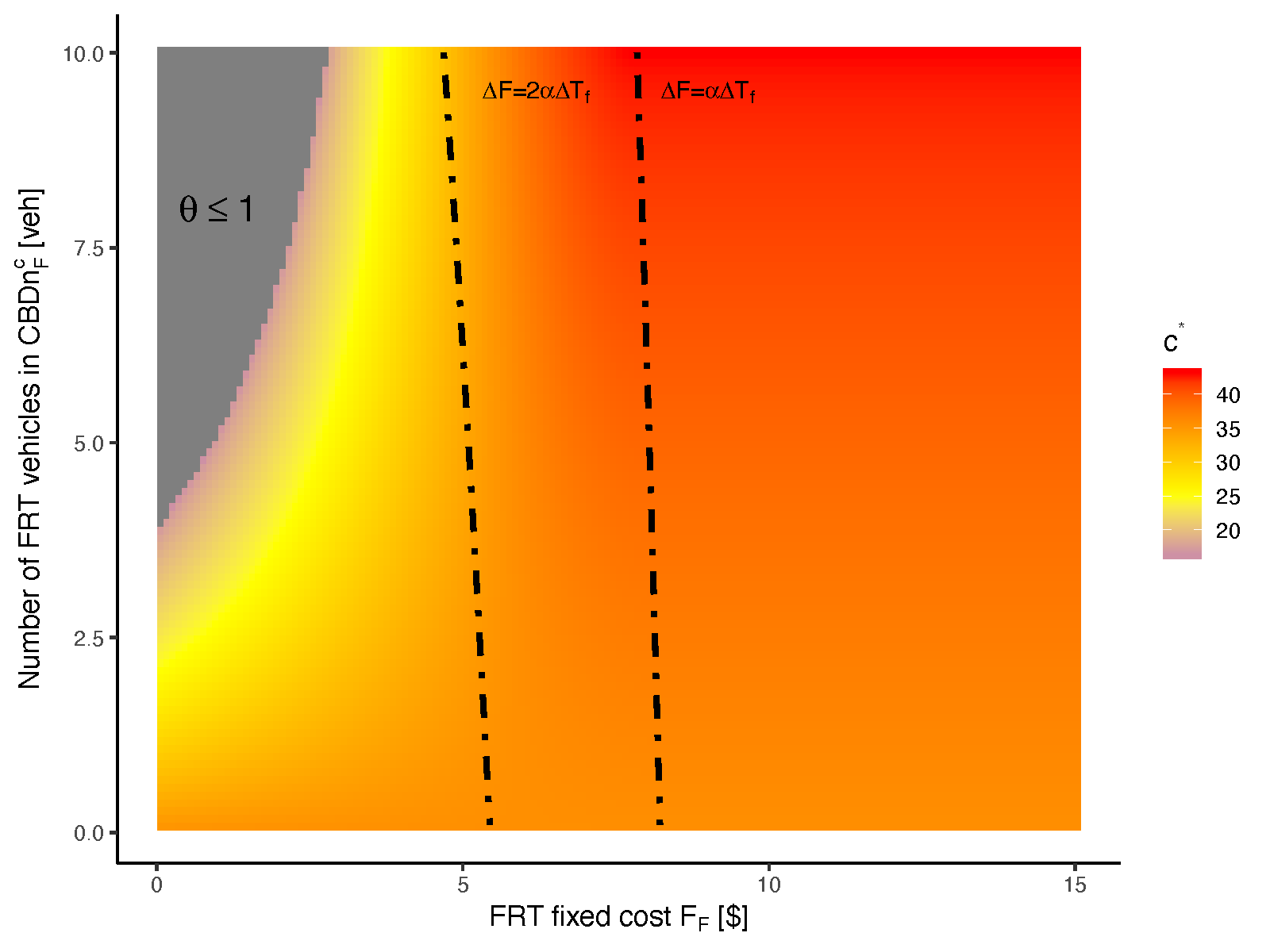}
\subcaption{User equilibrium}
\end{minipage}
\begin{minipage}{0.5\columnwidth}
\centering
\includegraphics[width=\columnwidth]{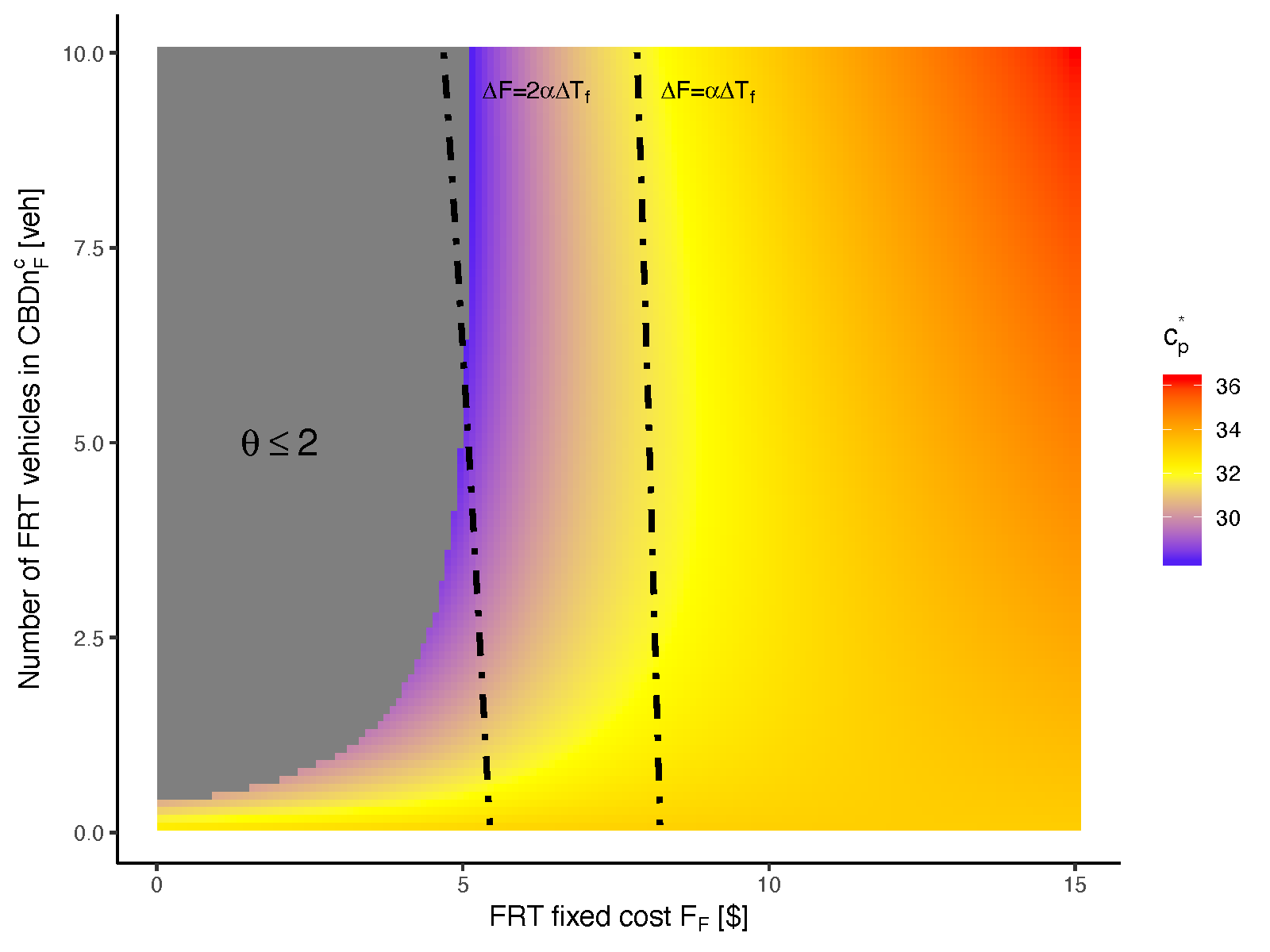}
\subcaption{Perimeter control}
\end{minipage}
\caption{Sensitivity analysis of equilibrium cost}
\label{fig:cost__full_sensitivity}
\begin{minipage}{0.5\columnwidth}
\centering
\includegraphics[width=\columnwidth]{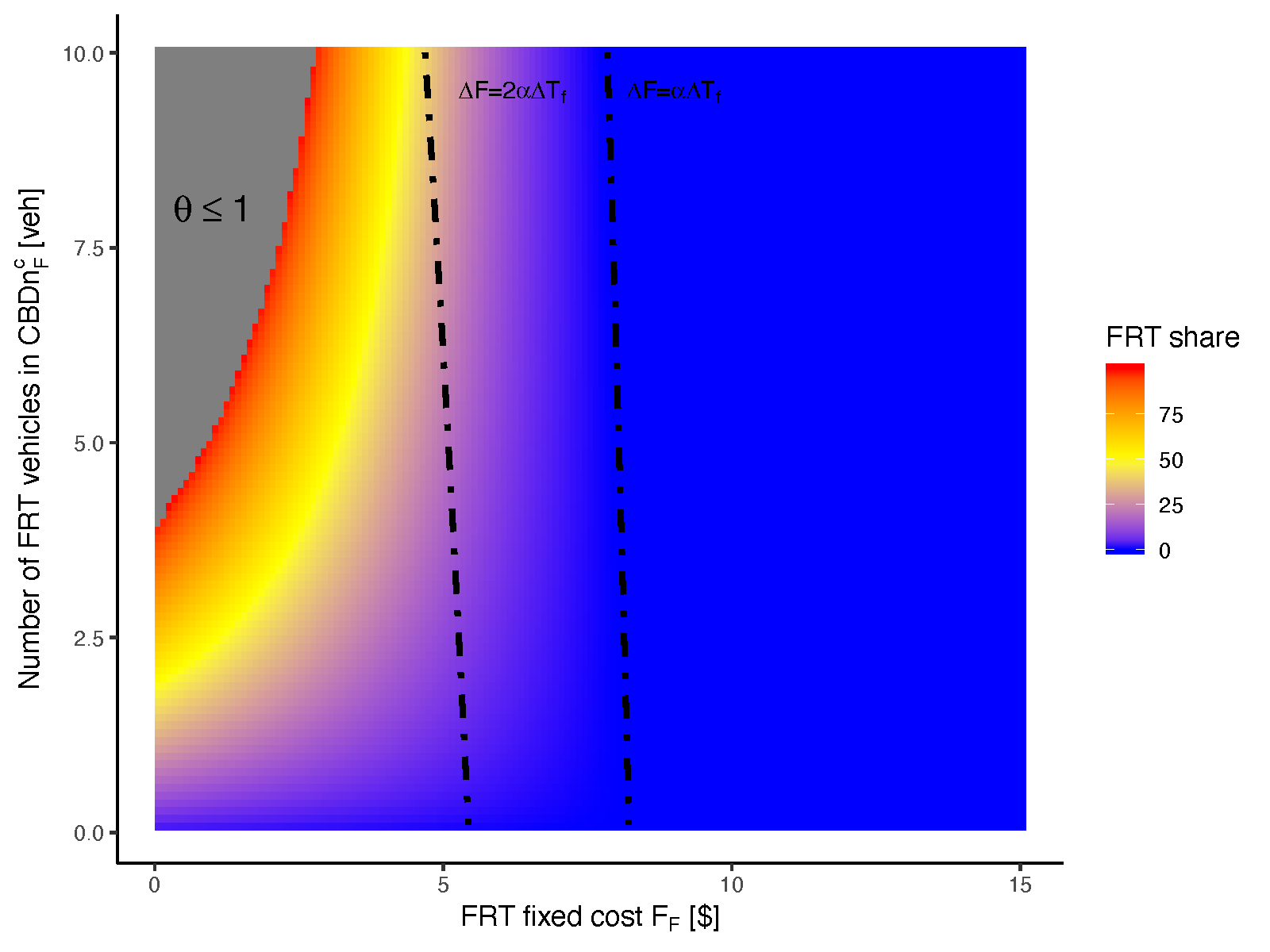}
\subcaption{User equilibrium}
\end{minipage}
\begin{minipage}{0.5\columnwidth}
\centering
\includegraphics[width=\columnwidth]{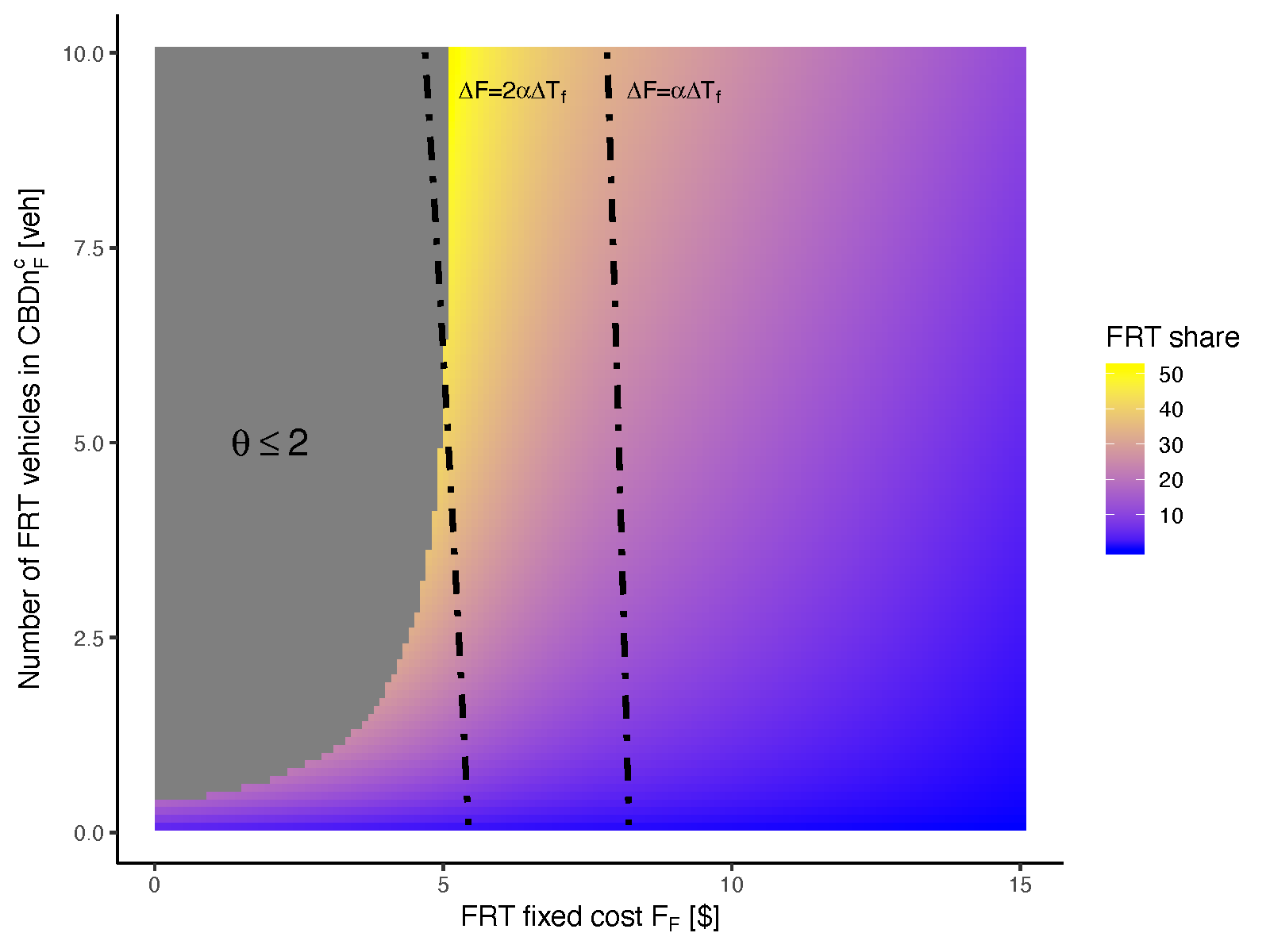}
\subcaption{Perimeter control}
\end{minipage}
\caption{Sensitivity analysis of mode share}
\label{fig:FRTshare_full_sensitivity}
\end{figure}

\subsection{Sensitivity analysis with respect to number of FRT vehicles and FRT fixed cost}
Further, we  conduct a sensitivity analysis with respect to both of the number of FRT vehicles and FRT fixed cost to better understand the implications. Note that we use the same parameters as in Section 6.2.  If  many FRT vehicles are operated and FRT fixed cost is lowered (i.e.,  $\theta \leq 1$ in the gray coloured areas of  Figs.~\ref{fig:cost__full_sensitivity} (a) and \ref{fig:FRTshare_full_sensitivity} (a)), all commuters will take FRT. Although an increase in the FRT fixed cost and a decrease in the number of FRT vehicles increase the mode share of cars, hypercongestion never occurs as long as $\theta \leq 2$ (the gray coloured areas of  Figs.~\ref{fig:cost__full_sensitivity} (b) and \ref{fig:FRTshare_full_sensitivity} (b)). We also find that an increase in the number of FRT vehicles reduces the equilibrium cost when FRT fixed cost is lower  at user equilibrium without perimeter control (see the left side of the dotted line which is the cost yielding $\Delta F = 2 \alpha \Delta T_f$ in  Fig.~\ref{fig:cost__full_sensitivity} (a)). Because hypercongestion never occur or occur for a short period in most cases, an increase in the number of FRT vehicles simply alleviates the discomfort for FRT, thereby  resulting in higher mode share of cars and shortening the  rush hour. However, if the FRT fixed cost is high,  hypercongestion occurs due to  higher mode share of cars. An increase in the number of FRT vehicles leads to a severe capacity drop, resulting in a higher equilibrium cost. Moreover, the mode share of FRT does not increase significantly due to the network capacity effect in this case, as depicted in the middle part of the dotted lines in Fig.~\ref{fig:FRTshare_full_sensitivity} (a)  

When perimeter control is implemented, hypercongestion never occur. Therefore, even when the FRT fixed cost is the cost which yields $\Delta F < 2 \alpha \Delta T_f$ and $\Delta F >  \alpha \Delta T_f$ (the middle part of the dotted lines in Fig.~\ref{fig:FRTshare_full_sensitivity} (b)), the mode share of FRT can be increased due to an increase in the number of FRT vehicles, which results in a lower equilibrium cost as shown in Fig.~\ref{fig:cost__full_sensitivity} (b). However, as discussed in Section 6.1, even though an increase in the number of FRT vehicles increases the mode share of FRT, the equilibrium cost will increase only when the FRT fixed cost is higher (the right side of the dotted line which is the cost yielding $\Delta F =  \alpha \Delta T_f$ in  Fig.~\ref{fig:cost__full_sensitivity} (b)).   

\vspace{5mm}

We showed that bimodality is an important component for efficient perimeter control, and can be enhanced by providing transit priority and incentives to lower the FRT fixed cost (if necessary). However, the operation of too many vehicles that reduces discomfort may produce an unintended result: operating many FRT vehicles may increase the equilibrium cost.

\section{Conclusions} \label{seq:conclusion}
We investigated traveler behavior changes in response to  perimeter control with  transit priority in a bimodal transportation system (cars and flexible route transit (FRT)).  To this end, we developed a bimodal bathtub model; it is a departure time and transportation mode choices model that incorporates the hypercongestion and FRT discomfort. First, some conditions of the equilibrium patterns were characterized: (I) If the difference of the car fixed cost from the FRT fixed cost is higher than the difference of the FRT travel time cost from the car travel time cost  at free-flow, then both modes are used, and the FRT rush hour is longer than the car rush hour. (II) The number of FRT passengers decreases with time toward the desired arrival time. (III) There may be a time interval wherein FRT is not used around the rush hour peak. Then, we examined commuter behavior changes during  perimeter control with  transit priority. We found that if the fixed cost difference is larger than  the  travel time cost after perimeter control starts, then FRT is used during perimeter control; otherwise,  FRT is not used during perimeter control. This implies that transit
priority may not be sufficient to promote the use of FRT and that additional incentives may be required to promote FRT use during perimeter control. Finally, we conducted numerical analyses to investigate the equilibrium patterns of the model. We showed that operating many FRT vehicles to promote mode shift from cars during perimeter control may increase equilibrium cost

There are several directions for future works. Although we showed the theoretical results of the bimodal bathtub model, the empirical measurements that incorporate hypercongestion, discomfort of transit users and travel behavior need to be conducted. This analysis will elucidate our discussion further.   Second, it would be interesting to extend the model to multiregion models. Since the regional route choice model will be considered, multiregion models that consider regional route, departure time, and transportation mode choices will no longer be tractable.  Surrogate techniques \citep[e.g.,][]{ge2020reservoir, dantsuji2022novel} can be used to address this issue. Third, although we assume free-flow conditions in the suburban zone for the tractability,  recent research \citep[e.g.,][]{keyvan2015multiple} has shown the impact of the congestion propagation from the perimeter boundaries on the MFD dynamics in the suburban zone.  Thus, a model extension in this direction would be an interesting.    Finally, the urban structure was assumed to be fixed in this study. As urban structures change in response to traffic demand management \citep{takayama2017bottleneck, fosgerau2018vickrey, takayama2020gains}, this would be an interesting direction.

\section*{Acknowledgements}
We thank Se-il Mun, Eiji Hato and Tsubasa Takeda for their valuable comments on the early version of this work. This work was supported by JST ACT-X, Japan (grant \#JPMJAX21AE), by JST FOREST Program, Japan (grant  \#JPMJFR215M), by JSPS KAKENHI Grant-in-Aid  for Scientific Research (B), Japan  (grant  \#22H01610),  and by the Committee on Advanced Road Technology (CART), Ministry of Land, Infrastructure, Transport, and Tourism, Japan (grant \#2020-2).

\section*{Appendix A: Proof of Proposition \ref{prop:uniqueness_ue} }
We prove that the equilibrium cost is uniquely determined in the case where there is a time window wherein FRT is not used. Let $F(\theta)$ be the RHS of Eq. (\ref{eq:total_number_ue}) (i.e., $N=F(\theta)$). If the function of $F(\theta)$ is strictly monotonic with respect to $\theta$, then the equilibrium cost $c^*$ is uniquely determined. As $\theta$ is defined as the ratio of the sum of the travel time and schedule delay costs at equilibrium to the
free-flow travel time, we have $\theta>1$. Thus, we prove the monotonicity of $F(\theta)$ for $\theta>1$. We derive the first derivative of $F(\theta)$ from Eq.~(\ref{eq:total_number_ue}) as
\begin{eqnarray}
\frac{dF(\theta)}{d \theta} =\alpha n_j' \left( \frac{1}{\beta} + \frac{1}{\gamma} \right) \left( \frac{1}{\theta} - \frac{1}{\theta^2} \right)
\end{eqnarray}
For $\theta>1$, $dF/d\theta$ is strictly monotonically increasing. Thus, the equilibrium cost is uniquely determined. 

The first derivative of $F(\theta)$ from Eq.~(\ref{eq:total_number_ue_continous}) for the case where FRT is used continuously during the FRT rush hour is
\begin{eqnarray}
\frac{dF(\theta)}{d \theta} =\left( \frac{1}{\beta} + \frac{1}{\gamma} \right)  \left( \alpha n_j' \left( \frac{1}{\theta} - \frac{1}{\theta^2} \right) + n_F^c \frac{\alpha}{\lambda} \frac{T_f^c}{T_f^F} \left( \Delta F \frac{1}{\theta} + \alpha \Delta T_f \right) \right) 
\end{eqnarray}
The condition for $dF/d\theta>0$ is
\begin{eqnarray}
 \alpha n_j' \left( \frac{1}{\theta} - \frac{1}{\theta^2} \right) + n_F^c \frac{\alpha}{\lambda} \frac{T_f^c}{T_f^F} \left( \Delta F \frac{1}{\theta} + \alpha \Delta T_f \right)  > 0
\end{eqnarray}
Multiplication of $\theta^2$ and some tedious calculations yield
\begin{eqnarray}
\alpha \Delta T_f  n_F^c \frac{\alpha}{\lambda} \frac{T_f^c}{T_f^F} \theta^2 + \left(  \alpha n_j' + \Delta F  n_F^c \frac{\alpha}{\lambda} \frac{T_f^c}{T_f^F} \right) \theta -  \alpha n_j' > 0
\end{eqnarray} 
This condition is satisfied for $\theta>1$. Thus, the equilibrium cost is uniquely determined. 

\section*{Appendix B: Proof that perimeter control is never implemented if Assumption 1 holds and $\theta_p \leq (2 \alpha T_f^F - \Delta F) /\alpha T_f^c$ }
As  $\theta^p\equiv\frac{(c_p^*-F_c)v_f'}{\alpha L_c}$,   $\theta_p \leq \frac{2 \alpha T_f^F - \Delta F}{\alpha T_f^c}$ can  be rewritten by 
\begin{eqnarray} \label{eq: proof1}
\frac{(c_p^*-F_c)v_f'}{\alpha L_c} \leq \frac{2 \alpha T_f^F - \Delta F}{\alpha T_f^c}
\end{eqnarray}
Since $c_p^* = \alpha T_f^c + \beta (t^* - t_s^c) +F_c$, Eq. (\ref{eq: proof1}) yields, 
\begin{eqnarray} 
\beta (t^* - t_s^c)  - \alpha T_f^F \leq \alpha \Delta T_f - \Delta F 
\end{eqnarray}
If Assumption 1 holds (i.e., $\Delta F > \alpha \Delta T_f$), 
\begin{eqnarray} 
\beta (t^* - t_s^c)  - \alpha T_f^F < 0 
\end{eqnarray}
If perimeter control is implemented, the travel cost  is equal to the equilibrium cost when car accumulation reaches  the critical level. $c_p^* = 2\alpha T_f^c + \beta (t^* - t_s^p) +F_c = \alpha T_f^c + \beta (t^* - t_s^c) +F_c$ yields, 
\begin{eqnarray} 
\beta (t^* - t_s^p)  = \frac{1}{\beta} \left( \beta \left( t^* - t_s^c \right) - \alpha T_f^c \right)
\end{eqnarray}
As $\alpha T_f^F > \alpha T_f^c$ and $\beta (t^* - t_s^c)  - \alpha T_f^F < 0 $, the desired arrival time is earlier than the time when accumulation reaches  the critical level (i.e., $(t^* - t_s^p)< 0$) if Assumption 1 holds. Thus, perimeter control is never implemented in this case.

\section*{Appendix C: Proof of Proposition \ref{prop:uniqueness_perimeter_control} }
We prove that the equilibrium cost during perimeter control is uniquely determined in the case where  FRT is used from the beginning to the end  during  the FRT rush hour.  Let $F(\theta^p)$ be the RHS of Eq.~(\ref{eq:total_number_perimeter_case2}). Then, we have
\begin{eqnarray}
\frac{dF(\theta^p)}{d\theta^p} = \left( \frac{1}{\beta} + \frac{1}{\gamma} \right)  \left( \frac{\alpha n_j'}{4} +  n_F^c \frac{\alpha}{\lambda} \frac{T_f^c}{2T_f^F} \alpha T_f^c (\theta^p-2)   \right)
\end{eqnarray}
It yields
\begin{eqnarray}
\frac{dF(\theta^p)}{d\theta^p} = 
\begin{cases}
\leq 0  & \text{if } \theta^p \leq 2 \left( 1 - \frac{\alpha n_j'}{4} \frac{1}{n_F^c} \frac{\lambda}{\alpha} \frac{T_f^F}{T_f^c} \frac{1}{\alpha T_f^c} \right) \\
> 0  & \text{otherwise } 
\end{cases}
\end{eqnarray}
Combining this with the fact that $\theta^p>1$ and $F(1)=0$, since $\theta^p=1$ states that the equilibrium cost is equal to the free-flow travel time and fixed costs (equilibrium cost when $N=0$), produces a unique equilibrium cost. 

Next, we prove that the equilibrium cost during perimeter control is uniquely determined in the case where FRT is  used, but there will be a time window wherein FRT is not used  during the FRT rush hour. Let $F(\theta^p)$ be the RHS of Eq. (\ref{eq:total_number_perimeter_case1}). Then, the first derivative of $F(\theta^p)$ is 
\begin{eqnarray}
\frac{dF(\theta^p)}{d\theta^p} = \left( \frac{1}{\beta} + \frac{1}{\gamma} \right)  \left( \frac{\alpha n_j'}{4} +  n_F^c \frac{\alpha}{\lambda} \frac{T_f^c}{2T_f^F} \alpha T_f^c \left(\theta^p- \frac{2\alpha T_f^F - \Delta F}{\alpha T_f^c} \right)   \right)
\end{eqnarray}
It is  positive since $\theta^p > \frac{2\alpha T_f^F - \Delta F}{\alpha T_f^c}$. Thus,  $dF/d\theta^p>0$  yields a unique equilibrium cost.

Similarly, we can prove that the equilibrium cost is uniquely determined for Eqs.~(\ref{eq:total_number_perimeter_case3}) and (\ref{eq:total_number_perimeter_case4}).

\section*{Appendix D: List of variables}
\begin{table}[h]
\centering
\caption{The list of notations in the paper}
\label{tab:notation}

\begin{tabular}{cl}
 \hline \hline
Notation  & Description \\ \hline
$n_F$ &  Total number of FRT vehicles   \\ 
$n_F^s, n_F^c$ & Number of FRT vehicles in the suburban and CBD zones, respectively  \\ 
$v_i(t)$ & Space-mean speed of transportation mode $i$ in the CBD zone  \\ 
$v_f$ & Free-flow car space-mean speed in the CBD zone \\
$n_i(t)$ & Accumulation of mode $i$'s vehicles in the CBD zone   \\ 
$\eta$ & Passenger car unit  \\ 
$n_j$ & Jam accumulation  \\
$n_c^{cr}$ & Critical car accumulation ($=n_j^{'} /2$) \\
$m$ &  Parameter that captures lower FRT's speed than car's speed in the CBD zone   \\
$v_f'$ &  $=v_f \left( 1- \eta n_F^c / n_j \right)$ \\ 
$n_j'$ &  $= n_j \left( 1- \eta n_F^c / n_j \right)$ \\ 
$I_c(t)$ &  Car inflow rate to the CBD zone at time $t$ \\
$I_c^p$ & Inflow rate during perimeter control ($=n_j^{'}v_f^{'}/(4L_c)$) \\
$G_c(t), G_F^p(t)$  & Car and FRT passenger arrival rates at the destination at time $t$  \\
$L_i$ & Average trip length of mode $i$'s commuters in the CBD zone \\
$d_F(t)$ &  Departure rate of FRT passengers at time $t$ \\
$C_i(t)$ & Mode $i$'s travel cost incurred by a commuter who arrives at their destination at time $t$ \\
$T_i(t)$ & Travel time in the CBD zone of a commuter who arrives at time $t$ by transportation mode $i$ \\
$T_f^i$ & Mode $i$'s free-flow travel time in the CBD zone  \\
$T_i^s$ & Free-flow mode $i$'s travel time in the suburban zone \\
$p_c, p_F$ & Car operation cost and FRT fare, respectively  \\
$F_i$ & Fixed cost for transportation mode $i$  \\
$\alpha, \beta, \gamma, \lambda$ & Marginal costs for travel time, earliness,  lateness, and discomfort, respectively \\
$g(t)$ & Discomfort cost at time $t$ \\ 
$\Delta F$ & $=F_c - F_F$ \\
$\Delta T_f$ & $= T_f^F - T_f^c$ \\
$c^*, c_p^*$ & Equilibrium cost without and with perimeter control \\
$\theta$ & $\equiv (c^* - F_c)v_f^{'}/(\alpha L_c)$ \\
$\theta_p$ & $\equiv (c^*_p - F_c)v_f^{'}/(\alpha L_c)$ \\
$t_s^i, t_e^i$ & Start and end times of mode $i$'s rush hour, respectively \\
$t_{ee}^F, t_{sl}^F$ & Start and end times of a time window wherein FRT is not used during the FRT rush hour, respectively \\
$t_s^p, t_e^p$ & Start and end times of perimeter control implemented \\
$N$ & Total number of commuters \\
$N_i$ & Total number of mode $i$'s commuters \\
$N_i^p$ & Number of mode $i$'s commuters who arrives at their destination during perimeter control \\
$N_F^{oc}$ & Number of commuters who take FRT outside the car rush hour \\
$N_i^{op}$ & Number of mode $i$'s commuters outside the perimeter control period \\ 
$A_b(t)$ & Car arrival rate at the perimeter boundary at time $t$ \\
$T_b(t)$ & Waiting time at the perimeter boundary of a commuter who arrives at their destination at time $t$ \\ 
$q(t)$ & Number of cars queued at the perimeter boundary when a commuter who arrives \\
& at their destination at time $t$ reaches the boundary \\
\hline
\end{tabular}
\end{table}

\bibliographystyle{elsarticle-harv}\bibliography{ref}
\end{document}